\documentclass{article}
\usepackage[normalem]{ulem}
\usepackage{xspace}
\usepackage[letterpaper,top=2cm,bottom=2cm,left=3cm,right=3cm,marginparwidth=1.75cm]{geometry}

\usepackage{fullpage}
\usepackage{epsfig}
\usepackage{multienum,subfigure}
\usepackage{algorithm}
\usepackage[noend]{algorithmic}
\usepackage{amsmath,amssymb,amsthm}
\usepackage{ifsym} 
\usepackage{verbatim}
\usepackage{color}
\usepackage{graphicx}
\usepackage{bbm}
\usepackage{xspace}
\usepackage[square]{natbib}

\usepackage{amsmath}
\usepackage{dsfont}

\usepackage[english]{babel}

\usepackage{comment}
\usepackage{natbib}
\usepackage{amsmath}

\usepackage{amssymb}
\usepackage{amsthm}

\usepackage{algorithm}
\usepackage{algorithmic}

\usepackage{thmtools}
\usepackage[colorlinks=true, allcolors=blue]{hyperref}
\usepackage{cleveref}
\usepackage{graphicx}

\usepackage{pgfplots}
\pgfplotsset{compat=1.18} 
\usepgfplotslibrary{fillbetween} 
\usetikzlibrary{arrows.meta,calc,decorations.pathreplacing} 

\newtheorem{lemma}{Lemma}
\newtheorem{theorem}{Theorem}
\newtheorem{corollary}{Corollary}
\newtheorem{proposition}{Proposition}
\newtheorem{observation}{Observation}
\newtheorem{claim}{Claim}

\theoremstyle{definition}
\newtheorem{definition}{Definition}

\newtheorem{remark}{Remark}

\makeatletter
\def\blfootnote{\xdef\@thefnmark{}\@footnotetext}
\makeatother

\newcommand{\prob}[2][]{\text{\bf Pr}\ifthenelse{\not\equal{}{#1}}{_{#1}}{}\!\left[#2\right]}
\newcommand{\expect}[2][]{\text{\bf E}\ifthenelse{\not\equal{}{#1}}{_{#1}}{}\!\left[#2\right]}

\newcommand{\mech}{\mathcal{M}}
\newcommand{\allocs}{\mathbf{x}}
\newcommand{\paymts}{\mathbf{p}}
\newcommand{\costs}{\mathbf{c}}
\newcommand{\values}{\mathbf{v}}

\newcommand{\opt}{{\normalfont\textsc{Opt}}}
\newcommand{\alg}{{\normalfont\textsc{Alg}}}

\newcommand{\E}{\mathbb{E}}

\newcommand{\fopt}{\textsc{FOpt}}

\def\Pr{\ensuremath{\mathrm{Pr}}}

\newcommand{\notshow}[1]{{}}

\newcommand*{\pr}[2][]{\text{Pr}\ifx\\\left[#1\right]\\\else_{#1}\fi \left[#2\right]}

\defcitealias{CDW}{CDW '16}
\defcitealias{FGKK}{FGKK '16}
\defcitealias{DW}{DW '17}
\defcitealias{DHP}{DHP '17}
\defcitealias{Myerson}{Mye '81}
\defcitealias{DDT15}{DDT '15}
\defcitealias{SSW17}{SSW '18}
\defcitealias{MV}{MV '07}
\defcitealias{CheGale2000}{Che and Gale 2000}
\defcitealias{DGSSW}{DGSSW '19}
\defcitealias{CDGM19}{CDGM '19}
\defcitealias{EFFGK}{EFFGK '19}

\newcommand{\indicator}[1]{\mathds{1}}

\newcommand{\fku}{\emph{fractional}-knapsack utility\xspace}
\newcommand{\ippu}{ex-ante posted pricing utility\xspace}
\newcommand{\ippgu}{ex-ante posted pricing generalized utility\xspace}

\newcommand{\exante}{\textsc{OPT}_{ante}}
\newcommand{\expost}{\textsc{OPT}_{post}}
\newcommand{\fractutil}{U_{B}}

\newcommand{\constmech}{\mathcal{M}_{\text{const}}}
\newcommand{\constutil}{{U}_{\text{const}}}

\newcommand{\modifiedmech}{\mathcal{M}_{\text{modified}}}

\newcommand{\mechante}{\mathcal{M}_{ante}}

\newcommand{\expay}{\hat{c}}

\newcommand{\expostpay}{p}
\newcommand{\expostquant}{q}
\newcommand{\indprice}{U_{\text{price}}}

\floatname{algorithm}{Mechanism}
\crefname{algorithm}{mechanism}{mechanisms}
\Crefname{algorithm}{Mechanism}{Mechanisms}
\definecolor{navy}{rgb}{0,0,0.5}

\newcommand{\generala}{a}
\newcommand{\generalb}{b}

\title{From Welfare to Utility: Generalized Objectives in Budget-Feasible Procurement \thanks{The work of A. Eden and E. Kerner was supported by the Israel Science Foundation (grant No. 533/23). Alon Eden is the Incumbent of the Harry $\&$ Abe Sherman Senior Lectureship at the School of Computer Science and Engineering at the Hebrew University.
The work of K. Goldner and T. Tsilivis was supported by NSF CAREER Award CCF-2441071.}}
\author{
Alon Eden \thanks{Hebrew University; {\tt alon.eden@mail.huji.ac.il}.}
\and
Kira Goldner\thanks{Boston University; {\tt goldner@bu.edu}. 
}
\and  
Eldar Kerner \thanks{Hebrew University; {\tt eldar.kerner@mail.huji.ac.il}.}
\and
Thodoris Tsilivis \thanks{Boston University; {\tt tsilivis@bu.edu}.}
}

\begin{document}
\maketitle

\begin{abstract}
  We study mechanism design for the budget-feasible procurement problem, a natural problem that arises when a buyer wants to procure goods or services from multiple strategic sellers who each have a cost to provide that service, the buyer has a value for each service procured, but is constrained by a budget. In contrast to prior work, which has focused on buyer value maximization for this problem, we solve for optimal and approximately-optimal mechanisms for the objectives of buyer utility (value of procured services minus payments), welfare (value minus production costs), and generalizations of the two. For welfare, we design a simple mechanism that obtains a constant-factor approximation for the prior-free (worst-case) setting. As prior-free mechanisms fail to provide any guarantee for utility, even for a single seller, we consider Bayesian settings, where the buyer has  distributional knowledge over sellers' costs. We first provide a utility-optimal mechanism that satisfies the buyer's budget constraint \emph{in expectation}, then we show how to modify the mechanism to satisfy the budget constraint \emph{ex-post}, for every realization of seller costs, while still obtaining near-optimal utility guarantees. Finally, we generalize our mechanisms to other objectives.
\end{abstract}
\newpage

\section{Introduction}

Consider a company that wishes to procure compute from some of $n$ possible sellers. The buyer has some value $v_i$ for each compute contract that they procure depending on the seller $i$. However, each seller also has their own private cost $c_i$ to supply compute---e.g., due to electricity, hardware usage, and system maintenance. Finally, the buyer has a budget $B$ that they have set aside for spending on compute, and the sum of payments that they make to the sellers cannot exceed this budget. 

This is the budget-feasible procurement problem, relevant in any scenario where a buyer seeks to procure goods or services from self-interested sellers and is constrained by a budget. Such a scenario arises in many settings relevant in today's modern computing ecosystem:
\begin{itemize}
    \item \textbf{Online crowdsourcing task:} a requester in Mechanical Turk seeks to procure taskers subject to some budget~\citep{AnariGN14,BH15,SinglaK13,SingerM13}.
    \item \textbf{API access to language models:} a user wants to access specialized language models through an Application Programming Interface (API) to perform a task, where every access to the API is monetized by the company training the language model~\citep{Bhawalkar2025}.
    \item \textbf{Distributed/Federated learning:} a model trainer wants to purchase models trained locally by users on their private data in order to train a global model~\citep{ZhengWP21}.
\end{itemize}

This problem was first introduced by \citet{S10} and has since been extensively studied in algorithmic mechanism design, receiving particular attention from researchers at large technology platforms~\citep{dengprocurement,anari2018budget,BCGL12,Bhawalkar2025,CNG11}. However, the vast majority of work focuses only on the objective of \emph{value maximization}. The justification is that the buyer has set aside the budget to procure this service, and so there is no harm in using the full budget, only in maximizing the buyer's value given what is spent.

This paper questions that assumption by studying other well-motivated objectives, such as welfare and utility maximization. In many settings, unused budget could be reused elsewhere, making \emph{buyer utility}---buyer value minus payments---a more appropriate objective. For instance, when the value of procured services represents the additional revenue the business can obtain from procuring these services, then the buyer's utility is the actual profit of the buyer, subject to their liquidity. This assumption is prevalent in the standard forward auction setting (1 seller, $n$ buyers), where it is assumed that a budgeted buyer wants to maximize their utility \emph{subject to budget constraints}~\citep{BorgsCIMS05,DobzinskiLN08,BhattacharyaGGM10}.  
This requires more careful consideration to trade off the value that the buyer receives with the money spent procuring the services.  A second natural objective is \emph{social welfare}, defined as the buyer's value minus the sellers' costs of the allocation. This objective is particularly relevant when the buyer's value benefits society (e.g., economically) and the sellers' costs capture societal harms, such as the environmental costs of compute. Moreover, while the procurer individually cares about their utility, they may instead choose to maximize the system's welfare for the longevity of the system, to ensure that sellers stay in the system and they can continue to earn utility from them.

In addition, we look at generalizations of these objectives.  We consider a generalization of social welfare where seller costs are weighted by a parameter $\alpha$ which serves as a knob for social cost concerns: setting $\alpha=0$ represents pure value maximization, as in prior works, while $\alpha\in(0,1)$ represents an intermediate regime where value is prioritized but some weighted cost consideration is given, and $\alpha>1$ represents a cost-aware regime, where societal harms are given extra consideration.  Additionally, we consider a generalization of the utility objective where each agent's value is weighted by $a_i \geq 0$ and their payment by $b_i \geq 0$. We can retrieve the pure value maximization objective by setting $a_i = 1$ and $b_i = 0$, while increasing $b_i$ captures regimes with greater concern for the implications of payments.

Due to their great relevance, both the utility~\citep{CaryFHK08} and welfare~\citep{dengprocurement} objectives have been studied in the procurement auction setting already, only without budget constraints. The goal of this paper is to solve the budget-feasible procurement problem for welfare, utility, and a variety of other well-motivated objectives.

\paragraph{Our Results.} We study the budget-feasible procurement problem for welfare, utility, and generalizations thereof. First, in Section~\ref{sec:priorfree}, we investigate the prior-free setting of the original \citep{S10}, that is, we maximize our objective over worst-case inputs of seller costs. We provide a simple mechanism that obtains a $\frac{1}{2+\sqrt{2}}$-approximation to the optimal social welfare. We show that this extends to various objectives such as maximizing value minus $\alpha$ times production costs for $\alpha\ge 0$, as well as some concave function of such objectives, representing decreasing marginal gains of accumulated welfare. {Then, we turn to the larger technical challenge of utility maximization; however,} no guarantee for utility is possible in the prior-free setting (\Cref{prop:prior-free-utility-impossibility}). 

Thus, to maximize utility, in Section~\ref{sec:bayesian}, we investigate the Bayesian setting, where seller costs are drawn from a known prior distribution. In Section~\ref{sec:soft-budget}, we solve the optimization problem optimally when the buyer's budget constraint is soft, and need only be met in expectation over the realization of seller costs. In Section~\ref{sec:hard-budget}, we investigate the more challenging problem when the buyer's budget constraint must be satisfied for every realization of seller costs. We use our solution from Section~\ref{sec:soft-budget} to construct {our main result:} a simple posted pricing mechanism that guarantees a $\frac{1}{5}$-approximation. Finally, in Section~\ref{sec:bayesian-extension}, we illustrate how to expand the above to handle more difficult prior distributions via ``ironing,'' or forcing monotonicity, and extend this framework to more general objectives that are convex combinations of value and payment. 

\emph{Technical Contribution.} Our prior-free analysis for welfare parallels that of the approach introduced in~\citep{PS10,CNG11} for value maximization when the buyer's valuation is additive. A key idea in their mechanisms is to construct a high-value set of sellers $W$ using conditions on the value-per-cost ratio of the agents, which ensures budget-feasibility. Our main challenge is in showing that, even though the condition of adding sellers to $W$ is based on their value, which can be arbitrarily larger than their associated welfare, the set $W$  has strong welfare guarantees as well.

Our Bayesian analysis for utility parallels that for value of \citet{BH15} in many places.  However, their ex-post solution requires a ``$k$-large market'' assumption, i.e., it only considers when all sellers have small enough ex-ante posted prices that they consume at most $1/k$-th of the budget. Instead, we produce a general constant-factor approximation by using their $k$-large market approach when the smaller sellers produce a good fraction of the optimal utility, and contributing a different mechanism when the majority of utility comes from the large sellers. Note, however, that in using their $k$-large market analysis (\Cref{lem:ex-post-k-approximation}), we discover a minor hole in their analysis. Our solution patches the hole and also fixes a slight error with budget-feasibility. (See details in \Cref{rem:BH-fix}.) 
 
\subsection{Related Work.} Our work is situated at the intersection of budget-feasible mechanism design and optimal auction theory. In this section, we review the development of budget-feasible mechanisms, distinguishing between works in the prior-free setting, the Bayesian setting, and more general settings whose approaches we cannot use. We refer the reader to \citet{LCLW24} for a comprehensive survey of budget-feasible procurement mechanisms. We also review what is known for Bayesian utility maximization.

\paragraph{Value Maximization in Prior-Free Budget-Feasible Procurement.} 
The prior-free budget-feasible procurement problem was first introduced by \citet{S10}, who observed that classic techniques (such as the VCG mechanism) cannot be used as they may violate the budget constraint. \citet{S10} aims to maximize the buyer's value subject to budget-feasibility, and for submodular valuations, develops randomized mechanisms that achieve constant-factor approximations to the optimal value. Notably, an early preprint~\citep{PS10} includes a much simpler mechanism for the case of additive (``knapsack'') values. 

Follow-up work by~\citet{CNG11} gives deterministic and randomized mechanisms with improved approximation ratio for submodular functions. For additive values, they refine the techniques introduced in~\citep{PS10} and provide a $1/(2+\sqrt{2})$-approximate deterministic mechanism and a $\frac{1}{3}$-approximate randomized mechanism. They also provide impossibilities of $1/(1+\sqrt{2})$ and $1/2$ for deterministic and randomized mechanisms, respectively. \citet{gravin2020optimal} further improve the approximation ratios for additive valuations, provide an optimal $\frac{1}{2}$-approximation randomized mechanism, and an improved $\frac{1}{3}$-approximation deterministic mechanism.

\citet{anari2018budget} and \citet{RZ23} study value maximization in the prior-free setting subject to a hard budget constraint under a large market assumption and provide tight $(1-\frac{1}{e})$-approximate mechanisms for additive valuations. \citet{anari2018budget} additionally provides constant approximation mechanisms for submodular valuation functions.

Finally, concurrently and independently of our work, \citet{CuiHSX26} study budget-feasible mechanisms for submodular welfare maximization in procurement auctions. They provide mechanisms that guarantee constant-factor approximations subject to a small additive error for settings with both monotone and non-monotone submodular welfare.

\paragraph{Bayesian Budget-Feasible Procurement.}
As we show in \Cref{prop:prior-free-utility-impossibility}, for the budget-feasible procurement problem, obtaining any approximation to the optimal buyer utility in the prior-free setting is hopeless. As a result, we turn to the Bayesian setting.  For the setting without a budget, \citet{Myerson} gives a complete characterization of the optimal auction, as well as unique payments that satisfy incentive compatibility and truthfulness. 

The budget-feasible procurement problem has been studied in the Bayesian setting before, but primarily for the objective of value maximization. The first work in this space is by~\citet{EG14}, who solve the Bayesian value-maximizing problem subject to a soft budget constraint (that need only hold in expectation over the realization of seller costs); their work also assumes that seller cost distributions satisfy regularity. \citet{BH15} extend this work for irregular distributions, and use it to address the problem with a hard budget constraint, that is, when the budget must be satisfied for every realization of the sellers' costs (ex-post).  They construct a simple posted pricing mechanism that achieves a $\left(1 - O(\frac{1}{\sqrt{k}})\right)$-approximation when all optimal ex-ante prices take up at most $1/k$-th of the budget $B$ (see \Cref{sec:hard-budget} for details). 

Most related to our work is \citet{JM17}, which studies the ex-post budget-feasible procurement problem for the objective of utility maximization. Importantly, they restrict themselves only to deterministic mechanisms. They devise properties of the optimal mechanism (characterizing it fully for symmetric agents and partially for asymmetric agents), and detail how they can implement said mechanism using descending clock auctions. They additionally remark on how their ideas extend to objectives that are convex combinations of values minus payments, similarly to our work. We emphasize that our results approximate the randomized (unrestricted) utility-optimal mechanism subject to ex-post budget-feasibility, which is a harder benchmark to understand and approximate.

\citet{BCGL12} study ex-post budget-feasible \emph{combinatorial} procurement with subadditive and XOS valuation functions, in both the prior-free and Bayesian settings. In the prior-free model, they give mechanisms achieving constant-factor approximations for XOS valuations and sublogarithmic approximations for subadditive valuations. In the Bayesian model, they give a constant-factor approximation for subadditive valuations.

Most recently, \citep{BMMP24} studied a Bayesian value-maximization procurement setting in which sellers can offer multiple services. They develop a general framework that uses (modified) online contention resolution schemes (OCRS) to convert interim allocations and prices into an ex-post feasible procurement mechanism. Instantiating this framework with the OCRS of \citep{JMZ25} yields a 0.319-approximation, while their own newly developed OCRS gives a 1/6-approximation overall, together with improved guarantees on certain subinstances. Their setting is inherently more challenging than ours, since sellers are multidimensional rather than single-dimensional. However, the resulting OCRS-based mechanisms are also considerably more complicated and harder to implement in practice. Moreover, their mechanisms satisfy the weaker truthfulness notion of Bayesian Incentive Compatibility (BIC), whereas our mechanisms satisfy the strongest standard notion, namely Dominant-Strategy Incentive Compatibility (DSIC).

\paragraph{Other Input Models.}
\citet{CFPT25} study prior-free, random-order online budget-feasible procurement, giving a randomized posted-price mechanism with a constant competitive ratio for monotone submodular valuations (with a large constant even for additive valuations). \citet{AMSSTT25} augment this setting with predictions, designing truthful, budget-feasible mechanisms with improved competitive ratios for monotone and non-monotone submodular objectives given a prediction of offline OPT.

\paragraph{Utility Maximization.} One of our primary objectives in this paper is maximizing the buyer's utility: value minus payments to the sellers.  This objective, known by many names (buyer utility, consumer surplus, residual surplus, money burning, settings with ordeals) was first explored in the algorithmic community by \citet{HartlineR08}. They study a forward auction, where the mechanism designer is a single seller determining how to sell $k$ identical goods to unit-demand buyers. In the Bayesian setting, they give an Myersonian-like characterization of the optimal mechanism, and they also provide a prior-free mechanism that gives a constant-approximation to the prior-free benchmark they offer and a tight $\Theta(1 + \log \frac{n}{m})$-approximation to social welfare. \citet{QV23} consider an online pen testing in the random order setting, which is equivalent to using posted price mechanisms. They provide an $O(1/\log n)$-approximation to the optimal social welfare. \citet{GoldnerMT26} consider the random-order single-item utility maximization problem, but augmented with predictions. They provide mechanisms that achieve a constant-factor approximation to the optimal social welfare when the predictions are accurate (consistency) and a constant-factor approximation to the optimal utility when the predictions are arbitrarily bad (robustness).

Additional work has studied Bayesian utility maximization in increasingly complex environments---service feasibility constraints~\citep{GH24}, i.i.d.~heterogeneous items~\citep{goldnerlundy25}, general unit-demand~\citep{ezra2024optimal}, or general outcomes~\citep{fotakis2016efficient}---providing approximation guarantees to the upper bound of social welfare. Related work by \citet{BOT25} studies dynamic rental mechanisms, including mechanisms that maximize consumer surplus. Recently, \citet{BOT26} study Bayesian utility maximization with constrained buyers, focusing on constraints on how buyers bid or spend, but without hard budget caps. Notably, all of these works are for the setting of forward auctions (1 seller, $n$ buyers). One relevant work focuses procurement auctions (1 buyer, $n$ sellers), but does not have a budget constraint, and is focused on equilibrium analysis rather than mechanism design \citet{EssaidiGW24}. In contrast, we focus on mechanism design in the procurement setting (1 buyer, $n$ sellers) with a budget constraint.
\section{Model and Preliminaries}

In our model, we have a single buyer with expenditure (or budget) of $B\in \mathbb{R}_{\ge 0}$ and a set of $n$ sellers (henceforth, the agents). The buyer wants to purchase services from the sellers, subject to the buyer's budget constraints. Each seller can produce a (single) service at cost $c_i\in \mathbb{R}_{\ge 0}$, privately known to the seller, and the buyer values each service at a known $v_i\in \mathbb{R}_{\ge 0}$. The buyer has an additive value over services, that is, when procuring a set $S\subseteq [n]$ of services, they obtain value of $\sum_{i\in S}v_i$. The buyer wants to devise DSIC-IR mechanisms to procure services from agents, in order to maximize various objectives subject to their expenditure constraints, as described below.

For a vector $\mathbf{t}=(t_1,\ldots, t_n)$, we denote by $\mathbf{t}_{-i}$  
the vector when excluding the $i$th coordinate, and by $(\mathbf{t}_{-i},\hat{t}_i)$ the vector that inserts $\hat{t}_i$ at the $i$th coordinate. A mechanism receives the \textit{reported} production costs of sellers $\costs=(c_1,\ldots,c_n)$, and returns a (possibly fractional) allocation vector $\allocs: \mathbb{R}_{\ge 0}^n\rightarrow [0,1]^n$ and a payment vector $\paymts: \mathbb{R}_{\ge 0}^n\rightarrow \mathbb{R}_{\ge 0}^n$, $x_i(\costs)$ and $p_i(\costs)$ return $i$th coordinate of $\allocs(\costs)$ and $\paymts(\costs)$, respectively. The utility of a seller with cost $\hat{c}_i$ for reports $\costs$ in mechanism $(\allocs,\paymts)$ is $$u_i(\hat{c}_i,\costs) = p_i(\costs)-\hat{c}_i\cdot x_i(\costs).$$
We next define our incentive guarantees.

\begin{definition}[DSIC-IR]
    Consider a mechanism $\mech= (\allocs, \paymts)$.
    \begin{itemize}
        \item $\mech$ is Individually Rational (IR) if for every $i$ and $\costs=(c_1,\ldots, c_n)$, $u_i(c_i, \costs)\ge 0$, that is, the utility of a seller from reporting their true cost is non-negative.
        \item $\mech$ is Dominant-Strategy Incentive Compatible (DSIC) if for every $i$, $\costs= (c_1,\ldots, c_n)$, and $\hat{c}_i$, $u_i(c_i,\costs)\ge u_i(c_i,(\costs_{-i},\hat{c}_i)),$ that is the agent cannot obtain a higher utility from reporting a different cost than their true cost.
    \end{itemize}
    
    A mechanism that is both DSIC and IR is said to be DSIC-IR. 
\end{definition}

In order to devise DSIC-IR mechanisms we use Myerson's canonical characterization of such mechanisms, adapted to our setting. An allocation function $\allocs$ is called \emph{implementable} if there exist payments $\paymts$ such that $(\allocs,\paymts)$ is DSIC-IR. 

\begin{lemma}[Myerson's Lemma~\citep{Myerson}]
    \label{lem:myerson}
    An allocation function $\allocs$ is implementable if and only if $x_i(\cdot,\costs_{-i})$ is monotonically non-increasing for every $\costs_{-i}$, and payments are given by the following formula:
    \begin{eqnarray}
        p_i(\costs) = c_i \cdot x_i(\costs) + \int_{c_i}^{\infty} x_i(z,\costs_{-i}) \, dz. \label{eq:myerson_paymts}
    \end{eqnarray}
\end{lemma}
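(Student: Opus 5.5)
The plan is the standard two-direction argument, written for sellers with costs rather than buyers with values. Fix an agent $i$ and reports $\costs_{-i}$, and abbreviate $x(z)=x_i(z,\costs_{-i})$ and $p(z)=p_i(z,\costs_{-i})$. Let $u(c)=p(c)-c\,x(c)$ be the truthful utility. DSIC for this agent says that for all $c,\hat c$,
\[
u(c)\ \ge\ p(\hat c)-c\,x(\hat c)\;=\;u(\hat c)-(c-\hat c)\,x(\hat c).
\]
Thus $u$ is a pointwise maximum of affine functions of $c$ with slopes $-x(\hat c)$. It is therefore convex, and $-x(c)$ is a subgradient of $u$ at $c$.

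\textbf{Necessity.} Write the DSIC inequality for the pair $(c,\hat c)$ and for the swapped pair $(\hat c,c)$, then add them. This gives $(c-\hat c)\bigl(x(\hat c)-x(c)\bigr)\ge 0$, so $x$ is non-increasing. Because $u$ is convex with derivative $-x$ almost everywhere, we get $u(c)=u(\infty)+\int_c^\infty x(z)\,dz$, where $u(\infty)=\lim_{c\to\infty}u(c)$ exists since $u$ is non-increasing and, by IR, bounded below by $0$.

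The paper states a specific formula with no additive constant. So the normalization that must be justified is $u(\infty)=0$, meaning $p_i(\costs)=c_i x_i(\costs)+\int_{c_i}^\infty x_i\,dz$ exactly. IR gives $u(\infty)\ge 0$. The convention I would adopt, as is standard (a seller reporting an arbitrarily large cost is paid nothing, or equivalently the payment is pinned by DSIC up to a constant and one picks the minimal IR one), is to take the constant to be $0$. I expect this to be the main subtlety: the ``only if'' direction holds for payments only up to this normalization. I would state explicitly that the lemma is read with the canonical normalization $\lim_{c\to\infty}u_i(c,\costs_{-i})=0$, which is the cheapest IR choice and the one the paper uses throughout.

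\textbf{Sufficiency.} Assume $x$ is non-increasing and define $p$ by the formula. Then $u(c)=\int_c^\infty x(z)\,dz\ge 0$, which gives IR. For DSIC, compute the gain from misreporting:
\[
u(c)-\bigl[p(\hat c)-c\,x(\hat c)\bigr]=\int_c^{\hat c}x(z)\,dz-(\hat c-c)\,x(\hat c)=\int_c^{\hat c}\bigl(x(z)-x(\hat c)\bigr)dz .
\]
If $\hat c>c$, the integrand is nonnegative because $x$ is non-increasing. If $\hat c<c$, the integral is taken in reverse over a region where $x(z)\le x(\hat c)$, so the expression is again nonnegative.

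One routine technicality remains. The integral should be finite for the payments to be well defined, which holds whenever $x$ vanishes beyond some cost, for example when costs have bounded support. I would note this as an assumption rather than belabor it.
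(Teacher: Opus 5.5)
The paper gives no proof of this lemma; it cites Myerson's result. Your argument is the standard proof and it is correct. It shows convexity of the truthful utility $u$ with subgradient $-x(c)$, gets monotonicity by adding the two swapped DSIC inequalities, and proves sufficiency by writing the misreporting loss as $\int_c^{\hat c}\bigl(x(z)-x(\hat c)\bigr)\,dz$.

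Both caveats you raise are genuine conditions for the statement to be true, not conventions.

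\textbf{Normalization.} Without $\lim_{c\to\infty}u(c)=0$ the payment clause fails. Adding any $h_i(\mathbf{c}_{-i})\ge 0$ to the payments preserves DSIC-IR but violates the formula.

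\textbf{Integrability.} On the unbounded domain $\mathbb{R}_{\ge 0}$, monotonicity alone does not give implementability. For example, $x_i\equiv 1$ is monotone, but DSIC forces a constant payment and IR then requires that payment to be at least every cost. So finiteness of $\int_{c_i}^\infty x_i(z,\mathbf{c}_{-i})\,dz$ is needed.

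The paper's mechanisms satisfy both conditions. Their allocations vanish above $\min(v_i,B)$ or above the posted price, and Observation~\ref{obs:threshold-ub} relies on exactly the zero normalization.

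One detail would make your fundamental-theorem step airtight, including at the endpoint $c=0$. The two DSIC inequalities give $|u(c)-u(\hat c)|\le|c-\hat c|$, so $u$ is Lipschitz and hence absolutely continuous.
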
 

Consider the special case of deterministic monotone allocation functions (that is $x_i(\costs)\in \{0,1\}$). In this case, it is easy to show that the payments are simply threshold payments, and the following holds.
\begin{observation} \label{obs:threshold-ub}
    Consider a deterministic and monotone $\allocs$, if for some $\costs=(c_1,\ldots, c_n)$, $x_i(\costs)=0$, then when using Myersonian payments as defined in Eq.~\eqref{eq:myerson_paymts}, for every $\hat{c}_i$, $p_i(\hat{c}_i, \costs_{-i})\le c_i$. 
\end{observation}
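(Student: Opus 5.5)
We prove the observation directly from the payment formula in Lemma~\ref{lem:myerson}, using that a deterministic monotone allocation is a threshold rule for each agent.

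First, fix $i$ and $\costs_{-i}$. Since $x_i(\cdot,\costs_{-i})$ is $\{0,1\}$-valued and monotonically non-increasing, there is a threshold $\tau_i=\tau_i(\costs_{-i}) = \sup\{z : x_i(z,\costs_{-i})=1\}$, with $\sup\emptyset=0$. Then $x_i(z,\costs_{-i})=1$ for $z<\tau_i$ and $x_i(z,\costs_{-i})=0$ for $z>\tau_i$; the value at $z=\tau_i$ itself does not affect any integral. The hypothesis $x_i(\costs)=0$ together with monotonicity gives $c_i\ge \tau_i$: if $c_i<\tau_i$, there would be some $z>c_i$ with $x_i(z,\costs_{-i})=1$, and monotonicity would force $x_i(c_i,\costs_{-i})=1$, a contradiction.

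Second, we compute the Myersonian payment at an arbitrary report $\hat c_i$ via Eq.~\eqref{eq:myerson_paymts}.
\begin{itemize}
\item If $x_i(\hat c_i,\costs_{-i})=1$, then $\hat c_i\le \tau_i$. The payment is $\hat c_i + \int_{\hat c_i}^{\infty} x_i(z,\costs_{-i})\,dz = \hat c_i + (\tau_i-\hat c_i) = \tau_i$.
\item If $x_i(\hat c_i,\costs_{-i})=0$, then $\hat c_i\ge\tau_i$, so the integrand vanishes almost everywhere on $[\hat c_i,\infty)$. The payment is $0$.
\end{itemize}
In either case $p_i(\hat c_i,\costs_{-i})\le \tau_i \le c_i$, which is the claim.

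The only subtle point is the boundary behavior at $z=\tau_i$, where the allocation could be either $0$ or $1$. This is harmless because the integral ignores a single point, and the case $x_i(\hat c_i,\costs_{-i})=1$ with $\hat c_i=\tau_i$ still gives payment exactly $\tau_i$. We also use that the integral is finite, which holds since $x_i$ vanishes beyond $\tau_i\le c_i<\infty$.
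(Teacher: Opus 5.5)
Your proof is correct and is essentially the paper's approach. The paper gives no formal proof; it only remarks that for deterministic monotone rules the Myersonian payments are threshold payments, and your argument spells this out: the payment is $\tau_i(\mathbf{c}_{-i})$ or $0$, and $x_i(\mathbf{c})=0$ forces $\tau_i\le c_i$.
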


In the Bayesian setting, in order to use the virtual cost formulation of the objective, we consider the appropriate definition of regularity for procurement auctions.
\begin{definition}[Regularity]
Let $F_i$ be the cumulative distribution function of agent $i$'s cost with density $f_i$. The distribution $F_i$ is said to be regular if the associated virtual cost function $\phi_i(c_i)$ is monotonically non-decreasing in $c_i$:
\begin{equation}
    \phi_i(c_i) = c_i + \frac{F_i(c_i)}{f_i(c_i)}.
\end{equation}
\end{definition}

\paragraph{Budget feasibility.} The buyer aims to procure items from a set of agents subject to a budget constraint $B$. Let $\costs = (c_1, \dots, c_n)$ denote the vector of the reported costs, and let $p_i(\costs)$ denote the payment made to agent $i$. There are two main definitions, Ex-Post (Hard) and Ex-Ante (Soft), distinguished by the strictness of the constraints.

\begin{definition}[Ex-Post Budget Feasibility] \label{def:hard-budget-constraint}
A mechanism $\mech$ is Ex-Post Budget Feasible (or satisfies a Hard Budget Constraint) if, for every possible vector of reported costs $\costs$, the total payments never exceed the budget $B$:
    $\sum_{i=1}^{n} p_i(\costs) \le B.$
\end{definition}

\begin{definition}[Ex-Ante Budget Feasibility] \label{def:soft-budget-constraint}
A mechanism $\mech$ is Ex-Ante Budget Feasible (or satisfies a Soft Budget Constraint) if the expected total payment, taken over the joint probability distribution of the agents' costs $D$, does not exceed the budget $B$: $\mathbb{E}_{\costs \sim D} \left[ \sum_{i=1}^{n} p_i(\costs) \right] \le B.$

\end{definition}

\paragraph{Objective Functions.}
Consider an instance with sellers' costs $\costs$ and values $\values$. In the following, we define our main objectives with respect to a mechanism $\mech=(\allocs, \paymts)$.

\begin{definition}[Welfare]
The welfare obtained by mechanism $\mech$ is 
   $W(\costs,\values,\mech)=  \sum_{i=1}^{n}x_i(\costs)\cdot (v_i-c_i).$
\end{definition}

\begin{definition}[Utility]
The utility obtained by mechanism $\mech$ is 
$U(\costs,\values,\mech)=\sum_{i=1}^{n}x_i(\costs)\cdot v_i - p_i(\costs).$
\end{definition}

\paragraph{Approximations.}Since we are bounded by incentive and computational constraints,  we devise mechanisms that provide a guaranteed fraction of the optimal solution. Consider some objective function $\mathcal{G}(\costs, \values,\mech)$ (e.g., welfare or utility). Given an allocation function $\allocs$, consider a (non-truthful) mechanism $\mech_{\allocs}$ that allocates using $\allocs$, and sets payments exactly as the reported costs, that is, $p_i(\costs)=x_i(\costs)\cdot c_i$. In the prior-free setting, we say that a mechanism $\mech=(\allocs,\paymts)$ $\alpha$-approximates the optimal mechanism for objective  $\mathcal{G}$ (for $\alpha\in (0,1)$), if for every $\costs$, $\values$ $$\mathcal{G}(\costs, \values,\mech)\ge \alpha\cdot\max_{\text{budget-feasible }\mech_{\allocs}}  \mathcal{G}(\costs, \values,\mech_{\allocs}).$$
Note that the right-hand side is an overly optimistic benchmark that has full information of the costs of the sellers.

In the Bayesian setting, we say that $\mech=(\allocs,\paymts)$ $\alpha$-approximates the optimal mechanism for objective  $\mathcal{G}$ if for every $\values$ and distribution over costs $\mathcal{D}$, 
$$\E_{\costs\sim \mathcal{D}}[\mathcal{G}(\costs, \values,\mech)]\ge \alpha\cdot\max_{\text{budget-feasible }\mech'}  \E_{\costs\sim \mathcal{D}}[\mathcal{G}(\costs, \values,\mech')].$$
\section{Prior-free Procurement for Welfare Maximization and Generalizations}\label{sec:priorfree}

In this section, we present a mechanism that guarantees an approximation to the optimal welfare (integral knapsack solution) in the prior-free setting. We then show that lower bounds from value maximization extend to the welfare objective, giving a nearly matching lower bound for our mechanism. Finally, we discuss how to adapt the mechanism to handle generalizations of the welfare objective.
Our proposed mechanism and analysis closely resemble the mechanisms in \citep{PS10,CNG11} for the case of additive value maximization.

The formal description follows, but first, we describe it intuitively. The mechanism initially restricts attention to agents with cost smaller than their value (otherwise, procuring their items only hurts the objective), and smaller than the overall budget $B$ (otherwise, the approximation can be arbitrarily bad). 
Then, we consider two candidate solutions. The first candidate solution just exclusively procures from the agent with the highest individual welfare that is feasible to serve. The second candidate solution orders agents by decreasing value-per-cost $\frac{v_i}{c_i}$ (or equivalently, welfare-per-cost $\frac{v_i-c_i}{c_i}$) and then greedily includes agents so long as each agent's cost is no higher per unit of value than what the budget can support on average for those added so far. If the welfare-maximizing agent has high enough welfare compared to the optimal fractional solution without it, we procure from this agent; otherwise, we procure from agents in the greedy solution. Payments come directly from Myerson's formula.

Our mechanism is formally presented in~\Cref{mech:prior-free}. Let $\fopt(\cdot,\cdot)$ be a function that takes a set of value-cost tuples as the first parameter, and a budget as the second parameter, and computes the optimal greedy fractional solution to the knapsack problem for the parameters given. We will prove that \Cref{mech:prior-free} is a DSIC-IR and budget feasible mechanism that gives a $\frac{1}{2+\sqrt{2}}$-approximation to welfare, and is computable in $O(n\log{n})$ time (see Appendix~\ref{app:computation} for details on runtime). 

\begin{algorithm}[htbp!]
\caption{Welfare-maximizing Budget Feasible Auction}
\begin{algorithmic}[1]
\label{mech:prior-free}
\STATE \textbf{Input:} Reported costs $\mathbf{c} = (c_1, \dots, c_n)$, Known valuations $\mathbf{v} = (v_1, \dots, v_n)$, Budget $B$ 
\STATE \textbf{Output:} Allocation set and payments
\vspace{0.2cm}
\STATE Let $F = \{i : c_i \le \min(v_i, B)\}$ \COMMENT{Restrict to sellers with positive welfare and feasible cost.}
\STATE Rename bidders in $F$ such that $\frac{v_1}{c_1} \ge \frac{v_2}{c_2} \ge \ldots \ge \frac{v_{|F|}}{c_{|F|}}.$
\STATE Set $W \leftarrow \emptyset$, $k \leftarrow 1$ 
\vspace{0.2cm}
\WHILE{$k\le|F|$ and $\frac{c_k}{v_k} \le \frac{B}{\sum_{j \le k} v_j}$}
    \STATE $W \leftarrow W \cup \{k\}$
    \STATE $k \leftarrow k+1$
\ENDWHILE
\vspace{0.2cm}
\STATE For every $i\in F$, let $w_i=v_i-c_i$
\vspace{0.2cm}
\STATE Let $i^* \in \arg\max_{i \in F} w_i$
\STATE Compute $\fopt'=\fopt(\{(w_i,c_i)\}_{i\in F\setminus\{i^*\}}, B)$\COMMENT{Compute the optimal fractional solution of all agents but $i^*$.}
\IF{$w_{i^*} > \fopt'/\left(1+\sqrt{2}\right)$} 
    \STATE Allocate $\{i^*\}$ \COMMENT{Allocate $i^*$ if $i^*$ has high welfare}
\ELSE
    \STATE Allocate the set $W$\COMMENT{Otherwise, allocate $W$}
\ENDIF
\vspace{0.2cm}
\STATE Pay agents using Myersonian payments
\end{algorithmic}
\end{algorithm}

For the analysis we define $T_i=v_i\cdot \frac{B}{\sum_{t\in W}v_t}$ for every $i\in W$. The values of $T_i$ serve as an upper bound on the payment of the sellers in $W$ in the case that $W$ is allocated, ensuring budget-feasibility. We first show that every seller who reports a cost greater than $T_i$ will not be included in $W$.

\begin{lemma} \label{lem:W-threshold}
    Consider the case where $i\in W$ reports $\hat{c}_i>T_i$, and all other sellers' reports are fixed. Let $\hat{W}$ be the newly constructed set $W$, when \Cref{mech:prior-free} receives the new reports. Then $W_{-i}\subseteq\hat{W}$ and $i\notin \hat{W}$.
\end{lemma}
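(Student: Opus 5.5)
The plan is to use the prefix structure of $W$ together with the sorted order. After renaming, $W=\{1,\dots,m\}$ is an initial segment of $F$ sorted by decreasing $v/c$. Write $V_W=\sum_{t\in W}v_t$. The key quantitative fact comes from the while-loop condition for the last admitted seller $m$:
\[
\frac{c_m}{v_m}\le \frac{B}{\sum_{j\le m}v_j}=\frac{B}{V_W},
\qquad\text{so}\qquad \frac{v_m}{c_m}\ge \frac{V_W}{B}.
\]
By the sorting, every $k\in W$ then satisfies $v_k/c_k\ge v_m/c_m\ge V_W/B$.

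On the other hand, the deviation $\hat c_i>T_i=v_iB/V_W$ means exactly $v_i/\hat c_i<V_W/B$. If $\hat c_i>\min(v_i,B)$, seller $i$ simply drops out of $F$. Otherwise $i$ stays in $F$ but is re-sorted. Either way, $v_i/\hat c_i<V_W/B\le v_k/c_k$ for every $k\in W\setminus\{i\}$. So in the new order all of $W_{-i}$ strictly precede $i$, and they keep their relative order, with ties broken by a fixed rule.

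Moreover, no seller outside $W$ moves ahead of any member of $W_{-i}$. Only $i$'s report changed, and the sellers in $F\setminus W$ already came after all of $W$ in the old order. Hence in the new order $W_{-i}$ is exactly the initial segment of length $m-1$ (when $i\in W$).

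\emph{Step 1: $W_{-i}\subseteq \hat W$.} For each $k\in W_{-i}$, its new prefix sum of values is its old prefix sum, minus $v_i$ if $i$ originally preceded $k$. So the new prefix sum is at most the old one, and the condition $c_k/v_k\le B/\sum_{j\le k}v_j$, which held before, still holds. Since these sellers form the first $m-1$ positions of the new order and each passes the test, the while loop admits all of them.

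\emph{Step 2: $i\notin\hat W$.} If $i\notin F$ under the new report, this is immediate. Otherwise, suppose the loop reaches $i$. All of $W_{-i}$ precede it, so its prefix sum is at least $(V_W-v_i)+v_i=V_W$. Then
\[
\frac{\hat c_i}{v_i}>\frac{B}{V_W}\ge \frac{B}{\sum_{j\text{ up to } i}v_j},
\]
so the loop condition fails at $i$. The loop therefore terminates at or before $i$, and $i$ is never added.

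The main obstacle is establishing that $i$ cannot jump ahead of some member of $W$ after deviating. Without this, prefix sums of other members could increase and the inclusion $W_{-i}\subseteq\hat W$ could fail. The resolution is the inequality $v_k/c_k\ge v_m/c_m\ge V_W/B$, derived from the stopping condition at the last element of $W$. This covers the edge case $i=m$ as well, since then the chain is applied to $k<m$. Zero costs and ties are handled by the fixed tie-breaking convention and by treating $v/0=\infty$.
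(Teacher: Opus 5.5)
Your proof is correct and follows essentially the same route as the paper. Both arguments use the admission condition of the last member of $W$ to get $c_j/v_j \le B/\sum_{t\in W}v_t < \hat c_i/v_i$ for every $j\in W_{-i}$, so $i$ falls behind all of $W_{-i}$ and their prefix sums can only shrink, while $i$'s prefix sum, if the loop reaches it, is at least $\sum_{t\in W}v_t$ and the loop condition fails there. You also apply the $W_{-i}\subseteq\hat W$ argument uniformly to the case $i\notin F$, which is slightly more explicit than the paper's case split.
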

\begin{proof}
    Consider $i$ changing its reported cost to $\hat{c}_i >T_i$, while all other agents fix their reports. If $\hat{c}_i > \min(v_i,B)$, then $i\notin F$, and clearly $i\notin W$. Otherwise, we first show that $i$ is after all agents in $W_{-i}$ in the value-over-cost order. Denote by $k$ the last item inserted in $W$ when agents report $\costs$. We know that for every $j\in W_{-i}$, we have that 
    \begin{eqnarray*}
        c_j/v_j\ \le\ c_k/v_k\ \le\ \frac{B}{\sum_{t\in W}v_t}\ =\ T_i/v_i\ <\ \hat{c}_i/v_i,
    \end{eqnarray*}
    where the first inequality follows the greedy ordering, the second inequality follows from the fact that $k$ is added to $W$ when bidders report $\costs$, and that $k$ is considered after $W_{-k}$ is added to $W$, the equality follows the definition of $T_i$, and the last inequality follows from $\hat{c}_i>T_i$.

    We notice that $W_{-i}\subseteq \hat{W}$. Consider some $j\in W_{-i}$; notice that until $j$ is considered, either the exact same set of agents is considered to be added to $\hat{W}$, or the same set without $i$. Therefore, the condition they face can only be made less strict, and the algorithm adds them to $\hat{W}$ as well. 

    We finally claim that $i$ will not be added to $\hat{W}$ for reports $(\hat{c}_i,\costs_{-i})$. Either the algorithm does not consider $i$ because some earlier agent in the value-over-cost order does not satisfy the condition to be added to $\hat{W}$; otherwise, by the above, when $i$ is considered, the constructed set $\hat{W}$ already contains agents in $W_{-i}$, and we have that 
    $$\hat{c}_i/v_i > T_i/v_i = B/\sum_{t\in W} v_t\ge B/\sum_{t\in \hat{W}\cup \{i\}} v_t.$$
    
    Therefore, $i$ is not added to $\hat{W}$.
\end{proof}

The following lemma establishes that $T_i$ indeed bounds the payment of agent $i\in W$ in case $W$ is allocated.

\begin{lemma} \label{lem:threshold-no-allocation}
    For every $i\in W$, if $W$ is allocated, then $p_i\le T_i$.
\end{lemma}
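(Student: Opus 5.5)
The plan is to reduce the claim to \Cref{obs:threshold-ub}: once the allocation rule of \Cref{mech:prior-free} is shown to be deterministic and monotone, it suffices to prove that seller $i$ is not allocated whenever it reports any cost above $T_i$. The Myersonian payment is then a threshold payment bounded by $T_i$.

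First I will confirm that the rule is deterministic and monotone (non-increasing in $i$'s report), so that \Cref{lem:myerson} gives threshold payments. This is needed for the whole section anyway, and I will handle it by the same case analysis as below, or take it as part of the mechanism's DSIC property.

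Next, fix reports $\costs$ under which $W$ is allocated and $i\in W$. Consider any deviation $\hat{c}_i>T_i$ with the other reports fixed. By \Cref{lem:W-threshold}, $i\notin\hat{W}$. So under the new reports, $i$ can be allocated only if the mechanism switches to allocating the singleton $\{\hat{i}^*\}$ with $\hat{i}^*=i$.

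I then rule out that switch. Since $i$ is in $F$ under the new report only if $\hat{c}_i\le v_i$, its welfare becomes $\hat{w}_i=v_i-\hat{c}_i<v_i-c_i=w_i$, so $i$'s welfare strictly drops. Suppose $i$ becomes the new argmax. Under the original reports the comparison failed, with either $i^*=i$ or $i^*\neq i$.
\begin{itemize}
\item If $i^*=i$, then $\fopt'$ is computed over $F\setminus\{i\}$ with the others' reports unchanged, so $\fopt'$ is unchanged. Since $\hat{w}_i<w_i\le\fopt'/(1+\sqrt2)$, the test still fails.
\item If $i^*=j\ne i$ originally, then $w_i\le w_j$ and $w_j\le \fopt(F\setminus\{j\})/(1+\sqrt2)$. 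The new comparison is $\hat w_i$ against $\fopt(F\setminus\{i\})/(1+\sqrt 2)$, where the new instance includes $j$. The fractional optimum over $F\setminus\{i\}$ contains $j$ and at least as much as the optimum over $F\setminus\{i,j\}$. I will argue that this dominates $\hat w_i(1+\sqrt2)$, using that $j$ alone is feasible because $c_j\le B$ and gives $w_j\ge w_i>\hat w_i$, so $\fopt(F\setminus\{i\})\ge w_j$.
\end{itemize}

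The second case is the main obstacle. The bound $\fopt(F\setminus\{i\})\ge w_j$ alone only gives a factor of $1$, not $1+\sqrt2$. I would strengthen it by comparing $\fopt(F\setminus\{i\})$ with $\fopt(F\setminus\{j\})$ via exchanging $i$ for $j$. Replacing $i$'s fractional contribution by $j$'s gains at least what is lost, given $w_j\ge w_i$ and that $j$ fits alone. If this exchange is not clean, the fallback is to use tie-breaking and the structure of $\fopt$ more carefully.

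Finally, having shown $x_i(\hat{c}_i,\costs_{-i})=0$ for all $\hat{c}_i>T_i$, I apply \Cref{obs:threshold-ub} at each such report and take the infimum over $\hat{c}_i>T_i$. This gives $p_i\le T_i$.
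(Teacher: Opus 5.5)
Your overall reduction matches the paper's proof: show $i$ is unallocated at every report $\hat c_i>T_i$ (using \Cref{lem:W-threshold} for $i\notin\hat W$), then invoke \Cref{obs:threshold-ub}. Your case $i^*=i$ also matches. Taking monotonicity from the DSIC theorem is not circular, since that proof uses \Cref{obs:decrease_still_included} and not this lemma.

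The gap is in your second case, where $i^*=j\neq i$ under the original reports. There you suppose $i$ becomes the new argmax and try to show the test $\hat w_i>\fopt(F\setminus\{i\})/(1+\sqrt2)$ still fails. You plan to get the missing factor from an exchange argument giving $\fopt(F\setminus\{i\})\ge\fopt(F\setminus\{j\})$. That inequality is false in general, because $j$ can have far lower welfare per unit of cost than $i$. Take $B=1$, $c_i=0.01$, $v_i=1.01$, $c_j=1$, $v_j=2.01$, and two further sellers $r_1,r_2$ with cost $0.495$ and value $1.495$. Then $W=\{i,r_1\}$, $i^*=j$, and $W$ is allocated. Yet $\fopt(F\setminus\{j\})=3$ while $\fopt(F\setminus\{i\})\approx 2.01$. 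The natural bounds only give $\fopt(F\setminus\{i\})\ge\fopt(F\setminus\{i,j\})\ge(1+\sqrt2)w_j-w_i\ge\sqrt2\,w_j$, which is short of the factor $1+\sqrt2$. Your unspecified fallback does not close this.

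The missing observation is that this case is vacuous, so no bound on $\fopt$ is needed. Seller $j$'s report is unchanged, so $j$ stays in $F$ with the same welfare $w_j$. Moreover $w_j\ge w_i>\hat w_i$. The strict inequality holds because $\hat c_i>T_i\ge c_i$, where $c_i\le T_i$ is the first display in the proof of \Cref{lem:W-threshold}. Hence $i$ can never become $\hat i^*$. You actually write this chain yourself, but you use it only to lower-bound $\fopt$ instead of concluding the contradiction. Together with $i\notin\hat W$, this shows $i$ is unallocated.

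This is exactly how the paper handles agents in $W\setminus\{i^*\}$ ("they would not become $i^*$"). Replacing your exchange argument with this observation completes your proof.
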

\begin{proof}
    We show that for every $i\in W$, whenever they report $\hat{c}_i> T_i$, they are not allocated. By \Cref{lem:W-threshold}, whenever $i\in W$ reports $\hat{c}_i> T_i$, they are excluded from $W$. For agents $i\in W\setminus\{i^*\}$, it is clear that reporting a higher cost will keep them unallocated since they would not become $i^*$. For $i^*\in W$, since $W$ is allocated, we know that 
    \begin{eqnarray} 
        w_{i^*} = v_{i^*} - c_{i^*}\le \fopt'/(1+\sqrt{2}). \label{eq:W-cond}   
    \end{eqnarray}
    If $i^*$ reports a higher cost, then $w_{i^*}=v_{i^*}-c_{i^*}$ decreases, and $\fopt'$ stays the same. Since $i^*\notin W$, $i^*$ is not allocated when reporting $\hat{c}_i>T_i$. 
    
    As no agent in $W$ is allocated when reporting $\hat{c}_i>T_i$, by \Cref{obs:threshold-ub}, it follows that $p_i\le T_i$ for every such agent. 
\end{proof}

\begin{observation} \label{obs:decrease_still_included}
    If $i\in W$ when reporting $c_i$, then the same set $W$ is constructed when reporting $\hat{c}_i < c_i$.
\end{observation}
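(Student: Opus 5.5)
Writing $\hat{\costs}=(\hat c_i,\costs_{-i})$, the plan is to show that the greedy loop in \Cref{mech:prior-free} makes exactly the same decisions under $\hat{\costs}$ as under $\costs$. I will first check that $i$ stays in $F$, then compare the new ordering with the old one, and finally verify each agent's admission test. By ``the same set $W$'' I mean the same set of agents; the threshold $B/\sum_{j\in W} v_j$ depends only on values, so it is unchanged as well.

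\emph{Membership in $F$ and the ordering.} Since $i\in W\subseteq F$, we have $c_i\le\min(v_i,B)$, so $\hat c_i<c_i$ still satisfies this and $i\in F$ under the new reports. Lowering $c_i$ raises $v_i/c_i$ and leaves every other ratio fixed. Hence the new order is obtained from the old one by moving $i$ earlier; the relative order of the other agents in $F$ does not change. Let $k$ be the last agent added to $W$ under $\costs$. Every agent of $W$ preceded every agent of $F\setminus W$ in the old order. After the move, $i$ sits at an earlier position, still ahead of every agent outside $W$. So the prefix of length $|W|$ of the new order is again exactly the set $W$, only permuted internally.

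\emph{Admission test on the prefix.} I use the bound from the proof of \Cref{lem:W-threshold}: every $j\in W$ satisfies
\[
\frac{c_j}{v_j}\le\frac{c_k}{v_k}\le\frac{B}{\sum_{t\in W}v_t}.
\]
Consider an agent $j\in W$ processed at some new position, with prefix set $P\subseteq W$ containing $j$. The required condition is $\hat c_j/v_j\le B/\sum_{t\in P}v_t$. This holds because $\hat c_j/v_j\le c_j/v_j\le B/\sum_{t\in W}v_t\le B/\sum_{t\in P}v_t$; for $j=i$ the new ratio is even smaller. So the loop admits all of $W$.

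\emph{Stopping point.} It remains to show the loop stops right after $W$. If $k$ was the last agent of $F$, there is nothing to check. Otherwise let $m$ be the first agent after $W$ in the old order; under $\costs$ the loop stopped at $m$, so $c_m/v_m> B/(\sum_{t\in W}v_t+v_m)$. In the new order $m$ is again the first agent after the same prefix $W$, with an unchanged ratio and an unchanged prefix sum, so it is rejected again and the loop halts. Therefore the constructed set is exactly $W$.

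I expect the main obstacle to be the reordering step: once $i$ jumps ahead, the intermediate prefixes differ from the original ones. This is handled by the uniform bound $c_j/v_j\le B/\sum_{t\in W}v_t$ for all of $W$, which is monotone in the prefix, so admission of each member does not depend on where it lands within the prefix.
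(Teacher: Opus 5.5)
Your proof is correct and follows the paper's structure: the first $|W|$ agents in the new order are exactly $W$, each of them still passes the admission test, and the first agent outside $W$ faces the identical test and is rejected again. The paper splits into cases according to whether $j$ precedes or follows $i$, using $c_j/v_j\le c_i/v_i\le B/\sum_{t\le i}v_t$ for agents that $i$ overtakes; you instead use the single bound $c_j/v_j\le c_k/v_k\le B/\sum_{t\in W}v_t$ from \Cref{lem:W-threshold} for all of $W$ at once, which is slightly cleaner, and you also check explicitly that $i$ remains in $F$.
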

\begin{proof}
$i$ is no later in the value-over-cost ordering when reporting $\hat{c}_{i}$, and $\hat{c}_{i}/v_i < c_i/v_i$. Therefore, if the condition to be included in $W$ is met when reporting $c_i$, it is met when reporting $\hat{c}_i$ as well. Now consider $j<i$. Obviously, $j\in W$ when $i$ reports $c_i$. If $j$ is considered before $i$ in the greedy ordering when $i$ reports $\hat{c}_i$, then $j$ is still added to $W$. If $j$ is considered after $i$, then 
$$c_j/v_j\le c_i/v_i\le B/\sum_{t\le i}v_t\le B/\sum_{\{t\le j\}\cup \{i\}}v_t,$$ 
and $j$ is still added to $W$. Notice that if $j>i$, then the same agents will be added to $W$ until $j$'s turn when $i$ reports $\hat{c}_i$, and therefore, $j$ will be added to $W$ in this case as well. Whenever the first $j\in F\setminus W$ is considered, all the agents of $W$ are already added, so $j$ would not satisfy the condition to be added to $W$, as before.
\end{proof}

We are now ready to prove truthfulness and budget feasibility of \Cref{mech:prior-free}.

\begin{restatable}{theorem}{PFTruthfulAndPayments}
\label{lem:pf_truthful_payments}
 \Cref{mech:prior-free} is a DSIC-IR and budget-feasible mechanism. 
\end{restatable}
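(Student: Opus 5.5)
By Myerson's Lemma (\Cref{lem:myerson}), DSIC-IR reduces to showing that the allocation rule of \Cref{mech:prior-free} is monotone: for every agent $i$ and fixed $\costs_{-i}$, if $i$ is allocated at report $c_i$, then $i$ is still allocated at every report $\hat{c}_i<c_i$. The Myersonian payments then make the mechanism DSIC-IR. The allocation is deterministic, so these payments are threshold payments, which we will use for budget feasibility.

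For monotonicity we split into two cases according to which branch allocates $i$.

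\emph{Case 1: $i=i^*$ is allocated alone.} Lowering the report to $\hat{c}_i<c_i$ keeps $i\in F$, since $\hat{c}_i\le c_i\le \min(v_i,B)$. It also increases $w_i=v_i-\hat{c}_i$. Because every other $w_j$ is unchanged, $i$ remains an $\arg\max$. We fix a consistent tie-breaking rule, e.g.\ by index, so that a strictly larger $w_i$ keeps $i$ selected. The quantity $\fopt'$ is computed on $F\setminus\{i^*\}$, which does not involve $i$'s report and so stays the same. Hence the test $w_{i^*}>\fopt'/(1+\sqrt{2})$ still passes, and $i$ is allocated.

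\emph{Case 2: $i\in W$ and $W$ is allocated.} By \Cref{obs:decrease_still_included}, the same set $W$ is constructed at $\hat{c}_i$, so $i\in W$. It remains to check that the branch does not switch to allocating some single agent $i^*\neq i$ alone. If $i$ was not $i^*$ and remains so after the decrease, then $i^*$'s welfare is unchanged. The candidate set $F\setminus\{i^*\}$ now contains $i$ with a lower cost and higher welfare, so $\fopt'$ weakly increases: the fractional knapsack optimum is monotone when an item becomes cheaper and more valuable. The test therefore still fails and $W$ is allocated. If $i$ becomes the new $i^*$, it is allocated either way, since it is in $W$ and also the singleton candidate. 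If $i$ was $i^*$ and $W$ was allocated, then $\fopt'$ is unchanged and the argument already given in \Cref{lem:threshold-no-allocation} applies: either $i$ is allocated via $W$, or the test now passes and $i$ is allocated alone. This case analysis, especially checking that $\fopt'$ is monotone in another agent's report and tracking the identity of $i^*$, is the main obstacle; I would state a small auxiliary claim that lowering one item's cost, which raises its welfare, weakly increases the greedy fractional optimum.

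\emph{Budget feasibility.} If $i^*$ is allocated alone, $i^*$ is unallocated for any report above $\min(v_{i^*},B)$, since it then leaves $F$. By \Cref{obs:threshold-ub} its payment is at most $B$. If $W$ is allocated, \Cref{lem:threshold-no-allocation} gives $p_i\le T_i$ for each $i\in W$, and unallocated agents are paid $0$. Therefore
\[\sum_i p_i\le\sum_{i\in W}T_i=\sum_{i\in W}v_i\cdot \frac{B}{\sum_{t\in W}v_t}=B.\]
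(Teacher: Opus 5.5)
Your proof is correct and follows the paper's argument. Monotonicity comes from \Cref{obs:decrease_still_included} plus a case analysis on the identity of $i^*$, and budget feasibility comes from \Cref{lem:threshold-no-allocation} together with the $c_i\le B$ filter. Your explicit check that $\fopt'$ weakly increases when an agent in $F\setminus\{i^*\}$ lowers its cost fills in a step the paper leaves implicit. In your last subcase you do not need \Cref{lem:threshold-no-allocation}: since $i$ is both $i^*$ and in $W$, it is allocated under either branch.
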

\begin{proof}

    We first show that each seller's allocation is monotonically non-increasing in their reported cost, which implies DSIC-IR, as we use Myersonian payments. Consider an agent that is allocated when reporting $c_i$. We show that for every report $\hat{c}_i< c_i$, $i$ is still allocated. 
    Consider the case $i$ is $i^*$, the agent maximizing $w_i=v_i-c_i$. If $i$ lowers their cost to $\hat{c}_i$, it remains $i^*$. If $i^*$ is in $W$ when reporting $\hat{c}_{i^*}$, then it is allocated in any case. If $i^*$ is not in $W$ when reporting $\hat{c}_{i^*}$, by \Cref{obs:decrease_still_included}, $i^*$ is not in $W$ when reporting $c_{i^*}$ as well. Thus, $v_{i^*}-\hat{c}_{i^*}> v_{i^*}-{c}_{i^*} > \fopt'/(1+\sqrt{2})$, implying that $i^*$ is still selected.
    Now consider $i$ which is different than $i^*$ for report $c_i$, and $i\in W$. For a lower report $\hat{c}_i<c_i$, by \Cref{obs:decrease_still_included}, the same set $W$ is constructed for report $\hat{c}_{i}$. Therefore, if $i$ becomes $i^*$ when reporting $\hat{c}_i$, it is allocated in any case. If $i$ does not become $i^*$, since $i^*$ was not selected before, it is not selected in this case as well, and $i$ is allocated. We get that the allocation is monotone, and by Myerson's Lemma, DSIC-IR follows.

    It is only left to show budget-feasibility. Assume that $W$ is allocated. Then, by \Cref{lem:threshold-no-allocation}, since every $i\in W$ is not allocated for $\hat{c}_i>T_i$, then $p_i\le T_i$. Thus, the total payment of all allocated bidders is at most 
    \[\sum_{i\in W} T_i = \sum_{i\in W} v_i\cdot \frac{B}{\sum_{j\in W}v_j}=B.\]
    Whenever $W$ is not allocated, and $i^*$ is allocated alone, since every agent with $c_i>B$ is not in $F$, then $p_{i^*}\le B.$ In both cases, the mechanism is budget-feasible.
\end{proof}

We follow with the analysis for the approximation guarantee. Let $k$ be the index of the last agent inserted into $W$ and $\ell$ the largest index such that $\sum_{i\le \ell}c_i\le B$. If $\ell<|F|$, let $c'_{\ell+1} = B-\sum_{i\le \ell}c_i$, $v'_{\ell+1}=v_{\ell+1}\cdot\frac{c'_{\ell+1}}{c_{\ell+1}}$ and $w'_{\ell+1}=v'_{\ell+1}-c'_{\ell+1}=w_{\ell+1}\cdot\frac{c'_{\ell+1}}{c_{\ell+1}}$. Note that the welfare obtained by the optimal fractional greedy allocation is exactly $\sum_{i\le \ell}w_i + \mathbb{I}_{\ell<|F|}\cdot w'_{\ell+1}$.

The following lemma formalizes the intuition that the welfare-over-cost ratio of the tail of the agents that we do not add to $W$ with index $\le \ell$ is no better than the welfare-over-cost ratio of the first agent we do not take. 

\begin{lemma}\label{lem:appx-lemma1}
    When $\ell>k$,
    \[\frac{\sum_{i=k+1}^{\ell}w_i+ \mathbb{I}_{\ell<|F|}\cdot w'_{\ell+1}}{\sum_{i=k+1}^{\ell}c_i+ \mathbb{I}_{\ell<|F|}\cdot c'_{\ell+1}}\le \frac{w_{k+1}}{c_{k+1}}.\]
\end{lemma}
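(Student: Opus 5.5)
The inequality is a weighted-average (mediant) statement. Agents in $F$ are sorted by non-increasing $v_i/c_i$. Since $w_i/c_i = v_i/c_i - 1$, they are equally sorted by non-increasing welfare-per-cost $w_i/c_i$. So every agent $i\ge k+1$ satisfies $w_i/c_i\le w_{k+1}/c_{k+1}$, i.e.
\[
w_i \le \frac{w_{k+1}}{c_{k+1}}\, c_i \qquad \text{for all } i\in\{k+1,\ldots,\ell+1\}.
\]
The plan is to apply this bound term by term in the numerator and then factor out the ratio $w_{k+1}/c_{k+1}$.

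The steps are as follows. First, I note that $\ell>k$ guarantees the index range $k+1,\ldots,\ell$ is nonempty. It also guarantees $k+1\le |F|$, so agent $k+1$ exists and $c_{k+1}>0$; this needs a small remark if costs may be zero, and I take $c_i>0$ for agents in $F$, as the ratio ordering implicitly does. Second, for each $i\in\{k+1,\ldots,\ell\}$ I use the displayed bound. Third, for the fractional part, I use the definition $w'_{\ell+1}=w_{\ell+1}\cdot c'_{\ell+1}/c_{\ell+1}$. This gives $w'_{\ell+1}/c'_{\ell+1}=w_{\ell+1}/c_{\ell+1}\le w_{k+1}/c_{k+1}$ whenever $c'_{\ell+1}>0$, and if $c'_{\ell+1}=0$ both $w'_{\ell+1}$ and $c'_{\ell+1}$ vanish and the term contributes nothing. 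Summing, the numerator is at most
\[
\frac{w_{k+1}}{c_{k+1}}\Bigl(\sum_{i=k+1}^{\ell}c_i+\mathbb{I}_{\ell<|F|}\cdot c'_{\ell+1}\Bigr).
\]
Dividing by the denominator, which is strictly positive because it contains $c_{k+1}>0$, yields the claim.

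The only point needing care is the equivalence between ordering by $v_i/c_i$ and ordering by $w_i/c_i$, together with the well-definedness of the ratios. I would state explicitly that $w_i/c_i=v_i/c_i-1$ is an increasing transformation of $v_i/c_i$, so the mechanism's ordering carries over. I would also handle the degenerate zero-cost cases by the convention above. Beyond that, the argument is a routine mediant inequality.
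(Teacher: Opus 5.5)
Your proposal is correct and follows the paper's proof: the paper also uses $w_i/c_i = v_i/c_i - 1$ to transfer the value-per-cost ordering to welfare-per-cost, bounds each $w_i$ (and $w'_{\ell+1}$) by $c_i \cdot w_{k+1}/c_{k+1}$ (respectively $c'_{\ell+1}\cdot w_{k+1}/c_{k+1}$), and sums. Your extra remarks on the positivity of the denominator and the degenerate case $c'_{\ell+1}=0$ are harmless refinements that the paper leaves implicit.
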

\begin{proof}
    Notice for every agent $i$, $w_i/c_i= v_i/c_i-1$. Thus, for every $i,j$ $w_i/c_i \le w_j/c_j\iff v_i/c_i\le v_j/c_j$. This means that sorting the agents according to value-over-cost ratio, also sorts them according to welfare-over-cost ratio, implying that for every $i\in [k+1,\ell]$, 
    $$w_i/c_i\le w_{k+1}/c_{k+1}\Rightarrow w_i\le c_i\cdot w_{k+1}/c_{k+1}.$$
    Whenever $\ell<|F|,$ we also have that $w'_{\ell+1}/c'_{\ell+1}=w_{\ell+1}/c_{\ell+1}\le w_{k+1}/c_{k+1},$ implying that  $w'_{\ell+1}\le c'_{\ell+1}\cdot w_{k+1}/c_{k+1}$ as well.
    Summing over all such $i$'s, we get
    \begin{eqnarray*}
        \sum_{i=k+1}^{\ell}w_i+ \mathbb{I}_{\ell<|F|}\cdot w'_{\ell+1}\le\frac{w_{k+1}}{c_{k+1}}\cdot \left(\sum_{i=k+1}^\ell c_i+\mathbb{I}_{\ell<|F|}\cdot c'_{\ell+1}\right),
    \end{eqnarray*}
    as desired.
\end{proof}

The following lemma shows that with the welfare of the first item not added to $W$ by the mechanism, the obtained welfare is a good approximation of the optimal welfare.

\begin{lemma}\label{lem:appx-lemma2}
    When $\ell>k$,
    $$\sum_{i\le k+1}w_i\ge\sum_{k+1 \le i\le \ell}w_i + \mathbb{I}_{\ell<|F|}\cdot w'_{\ell+1}.$$ \label{lem:main-approx}
\end{lemma}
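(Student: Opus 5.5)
The plan is to combine \Cref{lem:appx-lemma1} with the fact that agent $k+1$ failed the greedy stopping condition of \Cref{mech:prior-free}, and then convert a bound stated in terms of values into one stated in terms of welfares, using the ordering of the agents.

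First, since $\ell>k$, we have $k+1\le \ell\le |F|$. So the while loop did not stop because $k$ ran past $|F|$; it stopped because agent $k+1$ violated the condition. That is,
\[
\frac{c_{k+1}}{v_{k+1}} > \frac{B}{\sum_{j\le k+1} v_j}, \qquad\text{equivalently}\qquad \frac{B}{c_{k+1}} < \frac{\sum_{j\le k+1} v_j}{v_{k+1}}.
\]
Next, I will apply \Cref{lem:appx-lemma1} to the right-hand side of the claim. This bounds it by $\frac{w_{k+1}}{c_{k+1}}\cdot C$, where $C=\sum_{i=k+1}^{\ell}c_i+\mathbb{I}_{\ell<|F|}\cdot c'_{\ell+1}$. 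By the definitions of $\ell$ and $c'_{\ell+1}$, we have $C\le \sum_{i\le \ell}c_i+\mathbb{I}_{\ell<|F|}\cdot c'_{\ell+1}\le B$. Since $w_{k+1}\ge 0$ (every agent in $F$ has $c_i\le v_i$), it follows that
\[
\sum_{i=k+1}^{\ell}w_i+\mathbb{I}_{\ell<|F|}\cdot w'_{\ell+1} \le w_{k+1}\cdot\frac{B}{c_{k+1}} \le w_{k+1}\cdot\frac{\sum_{j\le k+1}v_j}{v_{k+1}}.
\]

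The remaining step, which is the only real content, is to show $\sum_{j\le k+1} w_j \ge \frac{w_{k+1}}{v_{k+1}}\sum_{j\le k+1} v_j$. For this I will use $w_j/v_j = 1-c_j/v_j$. The agents are sorted by non-increasing $v_j/c_j$, so every $j\le k+1$ has $c_j/v_j\le c_{k+1}/v_{k+1}$, and hence $w_j/v_j\ge w_{k+1}/v_{k+1}$. Multiplying by $v_j\ge 0$ and summing over $j\le k+1$ gives the desired inequality. Chaining it with the previous display proves the lemma.

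I expect the main subtlety to be exactly this switch from the value normalization that appears in the stopping rule to the welfare normalization that appears in the claim. The point is that welfare-over-value is monotone in the same greedy order, much as \Cref{lem:appx-lemma1} used the fact that welfare-over-cost is monotone in that order. The other details to keep in view are the nonnegativity of $w_{k+1}$ and of the $v_j$, and the bound $C\le B$.
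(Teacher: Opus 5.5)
Your proof is correct, but it bridges from the value-normalized stopping rule to the welfare-normalized claim differently from the paper.

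The paper argues by contradiction. It writes $v_{k+1}/c_{k+1}=w_{k+1}/c_{k+1}+1$, uses Lemma~\ref{lem:appx-lemma1} and the tail-cost bound $C\le B$ exactly as you do, and substitutes the negated claim. It then converts welfare back to value through the additive slack $\sum_{i\le k+1}w_i+B=\sum_{i\le k+1}v_i+\bigl(B-\sum_{i\le k+1}c_i\bigr)\ge\sum_{i\le k+1}v_i$. That step uses the fact that the first $k+1$ agents fit in the budget, because $k+1\le\ell$. The resulting inequality $v_{k+1}/c_{k+1}>\sum_{i\le k+1}v_i/B$ says agent $k+1$ would have passed the test.

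You never use that prefix-budget fact. Instead you exploit a second monotonicity of the greedy order: $w_j/v_j=1-c_j/v_j$ is non-increasing, so the prefix's total welfare is at least $\frac{w_{k+1}}{v_{k+1}}\sum_{j\le k+1}v_j$.

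Your version is a direct chain with a transparent ratio-averaging step on the prefix, mirroring how Lemma~\ref{lem:appx-lemma1} averages $w/c$ over the tail. The paper's version spends the budget constraint a second time and needs no further ordering property.

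Both carry over to the weighted objective $w_i=v_i-\alpha c_i$ of the appendix. The paper replaces the $+1$ by $+\alpha$. Your argument goes through because $1-\alpha c_j/v_j$ is still non-increasing in the same order and the filter guarantees $w_{k+1}\ge 0$.
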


\begin{proof}
    Assume towards contradiction that
    \begin{equation}
        \sum_{i\le k+1}w_i<\sum_{k+1 \le i\le \ell}w_i + \mathbb{I}_{\ell<|F|}\cdot w'_{\ell+1}. \label{eq:contradiction}
    \end{equation}
    \refstepcounter{equation}\label{eq:trans1}
    \refstepcounter{equation}\label{eq:trans2}
    \refstepcounter{equation}\label{eq:trans3}
    \refstepcounter{equation}\label{eq:trans4}

    We have that
    \begin{align*}
        v_{k+1}/c_{k+1} 
        &= \frac{v_{k+1}-c_{k+1}}{c_{k+1}} + 1 
         = \frac{w_{k+1}}{c_{k+1}}+1 \\
        &\underset{\eqref{eq:trans1}}{\ge}  \frac{\sum_{i=k+1}^{\ell}w_i+ \mathbb{I}_{\ell<|F|}\cdot w'_{\ell+1}}{\sum_{i=k+1}^{\ell}c_i+ \mathbb{I}_{\ell<|F|}\cdot c'_{\ell+1}}+1 \\
         &\underset{\eqref{eq:trans2}}{\ge}  \frac{\sum_{i=k+1}^{\ell}w_i+ \mathbb{I}_{\ell<|F|}\cdot w'_{\ell+1}}{B}+1 \\
        &\underset{\eqref{eq:trans3}}{>} \frac{\sum_{i=1}^{k+1}w_i}{B}+1 
         = \frac{\sum_{i=1}^{k+1}(v_i-c_i)}{B}+1 \\
        &= \frac{\sum_{i=1}^{k+1}v_i}{B} + \frac{B-\sum_{i=1}^{k+1}c_i}{B}\\
        & \underset{\eqref{eq:trans4}}{\ge} \frac{\sum_{i=1}^{k+1}v_i}{B}.
    \end{align*}
    
    In the above, \eqref{eq:trans1} follows \Cref{lem:appx-lemma1}, \eqref{eq:trans2} and \eqref{eq:trans4} follow from $\sum_{i=1}^\ell c_i + \mathbb{I}_{\ell<|F|}\cdot c'_{\ell+1} \le B$, and \eqref{eq:trans3} follows Eq.~\eqref{eq:contradiction}. Rearranging Eq.~\eqref{eq:trans4} gives $$c_{k+1}/v_{k+1}< \frac{B}{\sum_{i=1}^{k+1}v_i},$$ implying \Cref{mech:prior-free} would have added $k+1$ to $W$ as well, contradicting $k$'s maximality. 
\end{proof}

Using the above lemma, we're able to prove the welfare guarantees of the mechanism.

\begin{restatable}{theorem}{PFApprox}
\label{thm:pf-appx}
    The welfare obtained by \Cref{mech:prior-free} is a $\frac{1}{2+\sqrt{2}}$-approximation to  the optimal welfare.
\end{restatable}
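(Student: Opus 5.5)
The plan is to compare the mechanism's welfare against the optimal fractional greedy welfare on $F$, which upper bounds the optimal integral welfare. Agents outside $F$ have $c_i>v_i$ (negative welfare) or $c_i>B$ (infeasible), so no optimal solution uses them. Write $\opt\le \fopt:=\fopt(\{(w_i,c_i)\}_{i\in F},B)=\sum_{i\le\ell}w_i+\mathbb{I}_{\ell<|F|}w'_{\ell+1}$.

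The first step is to note that $\fopt\le \fopt'+w_{i^*}$. Removing one agent from a fractional knapsack instance lowers the optimum by at most that agent's full welfare, because the remaining part of the optimal fractional solution is still feasible without $i^*$.

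The second step bounds $\fopt$ in terms of $W$ and the first rejected agent. If $\ell\le k$, then $W$ contains all of $1,\dots,\ell$, together with the agents up to $k$ whose welfare is nonnegative. The fractional piece $w'_{\ell+1}$ is at most $w_{\ell+1}$, and either $\ell+1\le k$ or it is at most $w_{i^*}$. So $\fopt\le \sum_{i\in W}w_i+w_{i^*}$. If $\ell>k$, Lemma~\ref{lem:appx-lemma2} gives
\[\fopt\le \sum_{i\le k}w_i+\sum_{i\le k+1}w_i\le 2\sum_{i\in W}w_i+2w_{k+1}\le 2\,w(W)+2w_{i^*}.\]
I will refine this to $\fopt\le 2w(W)+w_{i^*}$ by splitting $\fopt=\sum_{i\le k}w_i+w_{k+1}+(\text{tail})$ and bounding only the tail via the lemma, giving $\fopt\le w(W)+w_{k+1}+w(W)+w_{k+1}$. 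This refinement is the part I need to get right. The cleaner route is to apply the same argument to the instance without $i^*$, obtaining $\fopt'\le 2w(W)+c\cdot w_{i^*}$ for small $c$. Here I use that $W$ computed with $i^*$ present still has welfare at least that of the analogous greedy set, since all $w_i\ge 0$. I expect the target inequality to be $\fopt'\le 2\,w(W)+w_{i^*}$.

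The final step is the case analysis with threshold $\tau=1+\sqrt2$.
\begin{itemize}
\item If $w_{i^*}>\fopt'/\tau$, the mechanism gets $w_{i^*}$, and $\opt\le \fopt'+w_{i^*}<(1+\tau)w_{i^*}=(2+\sqrt2)w_{i^*}$.
\item Otherwise $w_{i^*}\le \fopt'/\tau$, and $\fopt'\le 2w(W)+w_{i^*}$ yields $\fopt'(1-1/\tau)\le 2w(W)$, i.e.\ $\fopt'\le \frac{2\tau}{\tau-1}w(W)$. Then
\[\opt\le \fopt'(1+1/\tau)\le \frac{2(\tau+1)}{\tau-1}\,w(W).\]
\end{itemize}
With $\tau=1+\sqrt2$ the second bound is $2(2+\sqrt2)/\sqrt2=2+2\sqrt2$, which does not match. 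So the real inequality must be tighter, likely $\fopt'\le w(W)+w_{i^*}+\text{something}$ that is only charged once. The main obstacle is establishing the sharp bound $\fopt\le 2\,w(W)+w_{k+1}$ with $w_{k+1}\le w_{i^*}$, and then balancing the two cases. With $\fopt'\le \fopt\le 2w(W)+w_{i^*}$ and the condition $w_{i^*}\le\fopt'/\tau$, I get $\fopt\le 2w(W)+\fopt/\tau$, so
\[\fopt\le \frac{2\tau}{\tau-1}\,w(W)=(2+\sqrt2)\,w(W).\]
This matches, so I will use $\fopt$ rather than $\fopt'+w_{i^*}$ in the second case, while using $\fopt\le\fopt'+w_{i^*}$ in the first case.
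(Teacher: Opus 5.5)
Your final argument is correct and is essentially the paper's proof. In the $i^*$-case you use $\fopt\le\fopt'+w_{i^*}$. In the $W$-case you bound $\fopt\le 2w(W)+w_{k+1}\le 2w(W)+w_{i^*}$ (or $\fopt\le w(W)+w_{i^*}$ when $\ell=k$) and combine it with $w_{i^*}\le \fopt'/(1+\sqrt2)\le \fopt/(1+\sqrt2)$.

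The bound you call the main obstacle is immediate from \Cref{lem:appx-lemma2}. Your own expression $\sum_{i\le k}w_i+\sum_{i\le k+1}w_i$ already equals exactly $2w(W)+w_{k+1}$. You double-counted $w_{k+1}$ when loosening it, and in the ``refinement'' the tail should be bounded by $w(W)$, not by $w(W)+w_{k+1}$. So the detour through the instance without $i^*$ is unnecessary, and its greedy-set comparison was unjustified anyway.
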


\begin{proof}
    Recall that $$\fopt'=\fopt(\{(w_i,c_i)\}_{i\in F\setminus\{i^*\}},B).$$ Let $\alg$ denote the welfare obtained by~\Cref{mech:prior-free} and $\opt$ denote the optimal welfare. Moreover, with slight abuse of notation, let $\fopt=\fopt(\{(w_i,c_i)\}_{i\in F},B)$ be the fractional optimal solution with all agents (agents not in $F$ can only decrease the objective).

    First, consider the case where $i^*$ was allocated, which implies that $w_{i^*}\ge \fopt'/(1+\sqrt{2})$. We have that  
    \begin{eqnarray*}
    (2+\sqrt{2})\alg=(1+\sqrt{2})w_{i^*}+w_{i^*}
    \ge\fopt'+w_{i^*}\ge \fopt
    \ge\opt.  
    \end{eqnarray*}

    For the case $W$ was allocated, recall that $$\fopt=\sum_{i\le \ell}w_i + \mathbb{I}_{\ell<|F|}\cdot w'_{\ell+1}.$$ If $\ell>k$, we bound $\opt$ by 
    \begin{eqnarray*}
        \opt &\le& \fopt = \sum_{i\le \ell}w_i + \mathbb{I}_{\ell<|F|}\cdot w'_{\ell+1} \\
        & =& \sum_{i\le k}w_i +  \sum_{k+1 \le i\le \ell}w_i + \mathbb{I}_{\ell<|F|}\cdot w'_{\ell+1}\\
        & \le& 2\cdot  \sum_{i\le k}w_i +   w_{k+1} \le 
         2\cdot \alg +  w_{i^*}\\
         & \le& 2\cdot \alg + \fopt/(1+\sqrt{2}), 
    \end{eqnarray*}
    where the second inequality follows \Cref{lem:appx-lemma2}. Rearranging and using $\fopt
    \ge\opt$ gives the desired bound.
    
    When $k=\ell$, we get an even stronger bound since
    \begin{eqnarray*}
        \opt \le \fopt = \sum_{i\le \ell}w_i + \mathbb{I}_{\ell<|F|}\cdot w'_{\ell+1}
        = \sum_{i\le k}w_i  + \mathbb{I}_{\ell<|F|}\cdot w'_{\ell+1} \le \alg +  w_{i^*}\le \alg + \fopt/(1+\sqrt{2}).
    \end{eqnarray*}

    Thus, we show the desired bound.
\end{proof}

To complement \Cref{mech:prior-free}'s approximation guarantee, we prove that inapproximability results for the value objective naturally extend to the welfare objective as well. This implies a $\frac{1}{1+\sqrt{2}}$ lower bound for deterministic mechanisms for prior-free welfare maximization (based on the value maximization lower bound result from \citep{CNG11}), implying a small gap between our $\frac{1}{2+\sqrt{2}}$-approximation and the best-possible mechanism.

\begin{restatable}{theorem}{PFLowerBound}
\label{thm:pf-lower-bound}
    Let $\beta\in (0,1)$. If no budget-feasible mechanism can achieve a better than $\beta$-approximation to the value objective, then no budget-feasible mechanism can achieve a better than $\beta$-approximation to the welfare objective.
\end{restatable}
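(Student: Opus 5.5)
We prove the contrapositive by a reduction: from any budget-feasible, DSIC-IR mechanism $\mech$ that $\gamma$-approximates welfare for some $\gamma>\beta$, we build a budget-feasible, DSIC-IR mechanism that $(\gamma-\epsilon)$-approximates value, for every $\epsilon>0$. Choosing $\epsilon$ with $\gamma-\epsilon>\beta$ contradicts the hypothesis. The idea is to scale values up so that costs become negligible relative to values, making welfare and value nearly the same. Concretely, given a value instance $(\values,B)$, run $\mech$ on the scaled instance $(M\values,B)$ with the same reported costs and budget. Feasibility and incentives depend only on costs and payments, and values are public, so the simulated mechanism $\mech_M$ is still DSIC-IR and budget-feasible on the original instance. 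Its allocation set does not change under scaling. The only subtlety is that $\mech$ may behave differently for each $M$, which is harmless because we may choose $M$ after seeing the instance, since $\values$ and $B$ are public.

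Next we compare the benchmarks. Fix reported costs $\costs$. For any budget-feasible allocation $S$ in the full-information benchmark, $\sum_{i\in S}c_i\le B$. Hence, on the scaled instance, welfare and value differ by at most $B$:
\[
M\sum_{i\in S}v_i-B\;\le\;\sum_{i\in S}(Mv_i-c_i)\;\le\;M\sum_{i\in S}v_i .
\]
Therefore the scaled welfare optimum is at least $M\cdot \opt_{\text{val}}(\costs)-B$.

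For the mechanism's side, the set $\mech_M$ allocates is budget-feasible in payments. By IR, payments are at least costs, so its costs also sum to at most $B$. Its scaled welfare is at most $M$ times its value. Combining with the $\gamma$-approximation of $\mech$ gives
\[
M\cdot \mathrm{Val}(\mech_M)\;\ge\;\gamma\bigl(M\cdot\opt_{\text{val}}-B\bigr),
\]
that is, $\mathrm{Val}\ge \gamma\,\opt_{\text{val}}-\gamma B/M$.

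The main obstacle is converting this additive loss into a multiplicative one, uniformly over all cost vectors. Since $M$ may depend only on $\values$ and $B$, not on $\costs$, the tempting choice $M\gg B/\opt_{\text{val}}$ fails when $\opt_{\text{val}}(\costs)$ is tiny. We first try a lower bound: any agent with $c_i\le B$ is individually feasible, so $\opt_{\text{val}}\ge \min_i v_i$ whenever some agent is feasible (discarding zero-value agents). If no agent is feasible, both sides are $0$. Taking $M\ge \gamma B/(\epsilon\min_{i:v_i>0}v_i)$ then yields $\mathrm{Val}\ge(\gamma-\epsilon)\opt_{\text{val}}$ for every $\costs$, completing the contradiction.

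If lower-bound instances are only given in the prior-free worst-case sense, this uniform choice suffices. An alternative route is to apply the reduction directly to the specific hard instances of \citep{CNG11}, where costs are bounded and the gap is witnessed on finitely many cost profiles. Then a single large $M$ clearly works.
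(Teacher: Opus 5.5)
Your proposal is correct and follows essentially the same route as the paper. You scale values by a factor of order $B/(\epsilon\, v_{\min})$, use $\sum_{i\in T} c_i\le B$ for the value-optimal set $T$ and $c_i\ge 0$ on the mechanism's side to lose only an additive $\gamma B/M$, and convert this to a multiplicative loss via $\opt_{\text{val}}\ge v_{\min}$. Your handling of zero-value agents and of cost profiles where no agent is individually feasible is slightly more careful than the paper's remark that one may assume $v_{\min}\ge 0$.
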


\begin{proof}
    Assume by contradiction that there exists a DSIC-IR budget-feasible mechanism that gives a $\rho >\beta$ approximation to the welfare objective for some $\rho\in (0,1)$. We will use this mechanism to construct a $\rho'=\frac{\rho+\beta}{2} >\beta$ approximation DSIC-IR budget-feasible mechanism to the value objective.  

    Let $\delta = \frac{\rho-\beta}{2}$. Consider an instance $I=(\{v_i\}_{i\in [n]}, \{c_i\}_{i\in [n]}, B)$ for the value objective. We transform this instance to an instance for the welfare objective $I'=(\{v'_i\}_{i\in [n]}, \{c_i\}_{i\in [n]}, B)$, where $v'_i=v_i \cdot \Lambda$ for $\Lambda= \frac{B}{\delta\cdot v_{min}}$ (without loss of generality we may assume that $v_{min} \ge 0$). Consider the outcome of a $\rho$-approximate budget-feasible mechanism for the welfare objective on instance $I'$.  Since $I'$ has the same sellers' costs as the original instance and the same budget constraint, this mechanism is DSIC-IR and budget feasible for the original instance $I$. 

    Let $S$ be the set of agents selected by running the mechanism on $I'$ and let $T$ be the optimal set of agents for the value objective for the original instance $I$. Since $S$ $\rho$-approximates the welfare objective, we have that for every other set $S'$ that satisfies $\sum_{i\in S'}c_i\le B$, the welfare of agents in $S$ for instance $I'$ $\rho$-approximates the welfare of $S'$. Specifically, for set $T$, we have that 
    \begin{eqnarray*}
        \sum_{i\in S}(v'_i-c_i) = \sum_{i\in S}(\Lambda \cdot v_i-c_i) \ge \rho\cdot \sum_{i\in T}(v'_i-c_i) = \rho \cdot \sum_{i\in T}(\Lambda \cdot v_i-c_i).
    \end{eqnarray*}
    
    Rearranging gives that 
    \begin{eqnarray*}
        \Lambda \cdot  \sum_{i\in S} v_i\ge \rho \Lambda\cdot  \sum_{i\in T} v_i - \rho \sum_{i\in T} c_i + \sum_{i\in S} c_i \ge \rho \Lambda\cdot  \sum_{i\in T} v_i - \rho\cdot B > \rho \Lambda\cdot \sum_{i\in T} v_i -  B,
    \end{eqnarray*}
    where the second inequality follows budget feasibility. Dividing by $\Lambda$ gives
    \begin{eqnarray*}
        \sum_{i\in S} v_i& \ge & \rho \cdot \sum_{i\in T} v_i -  \frac{B}{\Lambda} = \rho \cdot \sum_{i\in T} v_i -\delta \cdot v_{min} \\ &\ge &\rho \cdot \sum_{i\in T} v_i -\frac{\rho-\beta}{2}\cdot \sum_{i\in T} v_i  =   \frac{\rho+\beta}{2}\cdot \sum_{i\in T} v_i \\ & =& \rho'\cdot \sum_{i\in T} v_i,
    \end{eqnarray*}
    as desired, a contradiction to the assumption that no budget-feasible mechanism can obtain a better than $\beta$-approximation to the value objective.
\end{proof}

\begin{restatable}{corollary}{PFLowerBoundChen}
\label{cor:prior-free-lb}
    No deterministic budget-feasible mechanism for welfare maximization can achieve an approximation ratio better than \(\frac{1}{1+\sqrt{2}}\).
\end{restatable}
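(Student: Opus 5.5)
This follows by combining \Cref{thm:pf-lower-bound} with the known deterministic lower bound for additive value maximization from \citet{CNG11}. \citet{CNG11} show that no deterministic DSIC-IR budget-feasible mechanism for the additive (knapsack) value objective can achieve an approximation ratio better than $\frac{1}{1+\sqrt{2}}$. We take $\beta=\frac{1}{1+\sqrt{2}}$ and invoke \Cref{thm:pf-lower-bound}. The conclusion is that no budget-feasible mechanism for welfare beats $\beta$.

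The one point to verify is that the reduction in the proof of \Cref{thm:pf-lower-bound} preserves determinism. Given a deterministic welfare mechanism, the derived value mechanism runs it on the scaled instance $I'$, whose values are $\Lambda v_i$ with the same costs and budget. Since $\Lambda$ depends only on the public values and the budget, not on the reported costs, the derived mechanism is a deterministic function of the reports. Its allocation and payments are those of the original mechanism on $I'$. So DSIC-IR and budget feasibility carry over, and so does determinism. Thus the theorem holds when restricted to the class of deterministic mechanisms, with the same proof.

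Suppose, then, that a deterministic mechanism achieves welfare ratio $\rho>\frac{1}{1+\sqrt{2}}$. The reduction produces a deterministic value mechanism with ratio $\frac{\rho+\beta}{2}>\frac{1}{1+\sqrt{2}}$, contradicting \citet{CNG11}.

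The only mild obstacle is this restriction of \Cref{thm:pf-lower-bound} to deterministic mechanisms. It must also be checked that the hard instances of \citet{CNG11} fit our model: they have positive values, so $v_{min}>0$ and $\Lambda$ is well defined. Both checks are routine.
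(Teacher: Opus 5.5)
Your proposal is correct and follows the paper's route: instantiate \Cref{thm:pf-lower-bound} with $\beta=\frac{1}{1+\sqrt{2}}$, the deterministic lower bound for additive value maximization from \citet{CNG11}. The paper leaves two details implicit that you verify explicitly: the scaling reduction preserves determinism, since $\Lambda$ depends only on public data, and $v_{\min}>0$ on the hard instances, so $\Lambda$ is well defined.
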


In Appendix~\ref{app:prior-free}, we introduce minor adaptations to the mechanism that extend the guarantees to a general family of objective functions, where for each $i$, $w_i=v_i-\alpha\cdot c_i$ for some $\alpha\ge 0$, and we want to maximize $G(\sum_{i \in S} w_i)$, where $G: \mathbb{R}_{\ge 0} \to \mathbb{R}_{\ge 0}$ is a non-decreasing, concave and normalized ($G(0)=0$) function. Our mechanism follows the footsteps of~\Cref{mech:prior-free}, where the filtering step keeps agents where $w_i\ge 0 $ and $c_i\le B$, and $\fopt$ now takes the new weights $w_i$'s. The proof of DSIC-IR and budget-feasibility remains identical, while the proof of approximation requires subtle modifications. First, we show that the mechanism outputs a set of agents $S$ whose sum of weights $\sum_{i\in S} w_i$ $\beta$-approximates the optimal budget-feasible sum of weights for $\beta=1/(2+\sqrt{2})$. To do so, we notice that the weight-over-cost ordering is identical to the value-over-cost ordering, and that the constructed set $W$ has the same approximation guarantees with respect to the optimal sum of weights. We conclude by noticing that since $G$ is monotonically non-decreasing, then the optimal $G(\sum_{i\in S^*} w_i)$ is just the function $G$ applied on the optimal budget-feasible $\sum_{i\in S^*} w_i$, and that $$G(\sum_{i\in S}w_i)\ge G(\beta\cdot \sum_{i\in S^*}w_i)\ge \beta\cdot G(\sum_{i\in S^*}w_i),$$
where the last inequality follows concavity and normalization of $G$, and $\beta\in (0,1)$.

The following proposition shows that the buyer's utility objective cannot be approximated when using a prior-free mechanism, motivating switching to the Bayesian setting for this objective. 
\begin{restatable}{proposition}{PFUtilityInapprox}
\label{prop:prior-free-utility-impossibility}
    In the prior-free setting, optimal utility is inapproximable even for a single agent.
\end{restatable}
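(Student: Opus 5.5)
With one seller, a DSIC-IR mechanism is just a non-increasing allocation rule $x(\cdot)$ with Myerson payments (\Cref{lem:myerson}). Prior-free means the mechanism must work for every cost $c$, while the benchmark knows $c$. The plan is to fix an arbitrary DSIC-IR, budget-feasible single-seller mechanism and exhibit a cost on which it gets zero utility although the benchmark gets positive utility.

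\emph{Instance.} Take $n=1$, value $v=1$, and budget $B\ge 1$, so the budget never binds for sensible outcomes. For any cost $c<1$, the benchmark $\mech_{\allocs}$ pays exactly $c$ and allocates, so the optimal utility is $1-c>0$.

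\emph{Reduction to posted prices.} By Myerson's characterization, every monotone $x(\cdot)$ is a distribution over posted prices (threshold payments). Concretely, $x(c)=\Pr_{t\sim G}[t\ge c]$ for some distribution $G$ on thresholds, and the Myerson payment is $\E[t\cdot\mathbb{I}[t\ge c]]$. The buyer's utility at cost $c$ is therefore $\E_{t\sim G}[(1-t)\mathbb{I}[t\ge c]]$. Thresholds $t>1$ contribute negative utility, so we may as well restrict attention to $t\le 1$ when upper-bounding.

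\emph{Key step.} We need a cost $c$ where $\E_{t}[(1-t)\mathbb{I}[c\le t\le 1]]\le \epsilon(1-c)$ for arbitrary $\epsilon>0$. If $G$ puts no mass on $(0,1)$, any $c\in(0,1)$ gives utility $\le 0$ while $\opt>0$. Otherwise, choose $c$ just below $1$. The numerator is at most $(1-c)\Pr[c\le t\le 1]$. This tends to $0$ relative to $(1-c)$ as $c\to 1$, provided $G$ has no atom at $1$, and an atom at $1$ contributes zero utility anyway. So the ratio $\alg(c)/\opt(c)\le \Pr[c\le t<1]\to 0$ as $c\uparrow 1$.

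Hence for every mechanism and every $\epsilon$ there is an instance (a single seller, some $c$) where it obtains less than an $\epsilon$ fraction of the optimal utility. The main point to get right is the posted-price decomposition of randomized monotone rules, including the payment formula $\int_c^\infty x(z)\,dz + c\,x(c)=\E[t\,\mathbb{I}[t\ge c]]$. This is standard, but I would verify it via Fubini. Otherwise the argument is elementary.
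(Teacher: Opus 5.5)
Your argument is correct, but it takes a more direct route than the paper. The paper uses the same instance ($v=B=1$, costs approaching $1$) and the same Myerson upper bound on utility, but argues by contradiction. It assumes an $\alpha$-guarantee at every $c<1$ and sets $G(c)=\int_c^1 x(z)\,dz$. It then turns the guarantee into the differential inequality $\frac{d}{dc}\frac{G(c)}{1-c}\le-\frac{\alpha}{1-c}$ and integrates, obtaining a term $\alpha\ln\frac{1-c_1}{1-c_0}\to-\infty$ as $c_1\to 1$, which contradicts $G\ge 0$.

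You instead bound the utility at each single cost by $(1-c)\Pr[c\le t<1]$ and let $c\uparrow 1$. This is more elementary, and it shows that for each fixed mechanism the ratio itself tends to $0$. You can even drop the threshold decomposition. Its identity $x(c)=\Pr[t\ge c]$ needs $x$ to be left-continuous, or a tie-breaking convention at its countably many jumps. Directly, monotonicity gives $\int_c^1 x(z)\,dz\ge(1-c)\,x(1^-)$ with $x(1^-)=\lim_{z\uparrow 1}x(z)$. So the utility is at most $(1-c)\bigl(x(c)-x(1^-)\bigr)$, and the ratio vanishes.

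What the paper's integrated inequality buys in exchange is a uniform, mechanism-independent quantitative statement. Since $G(c_0)\le 1-c_0$, it shows that no DSIC-IR mechanism can guarantee $\alpha(1-c)$ on all of an interval $[c_0,c_1]$ once $\frac{1-c_0}{1-c_1}>e^{1/\alpha}$. Your bound only says the ratio goes to $0$ at a mechanism-dependent rate.
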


\begin{proof}
Consider a single seller with value $v=B=1$, and cost $c\in [0,1]$. For any truthful mechanism, under a reported cost $c$, the
buyer's utility can be upper bounded

\[
x(c)-p(c)
\leq
(1-c)x(c)-\int_c^1 x(z)\,dz, 
\]

where $x$ is a non-increasing function in $c$, and the inequality is by Myerson's payment formula and individual rationality at $c = 1$.

The optimal budget-feasible utility is $1-c$, given by a payment set exactly at cost. Suppose, toward contradiction,
that a mechanism is an $\alpha$-approximation for some $\alpha>0$, meaning that for every \(c<1\),
\[
(1-c)x(c)-\int_c^1 x(z)\,dz
\ge
\alpha(1-c).
\]
Let $G(c)=\int_c^1 x(z)\,dz$, and note that $G'(c)=-x(c)$, and rewrite the previous inequality as
\[
\frac{d}{dc}\left(\frac{G(c)}{1-c}\right)
=
\frac{(1-c)G'(c)+G(c)}{(1-c)^2}
\le
-\frac{\alpha}{1-c}.
\]
Integrating from $c_0$ to $c_1<1$, we get
\[
\frac{G(c_1)}{1-c_1}
\le
\frac{G(c_0)}{1-c_0}
+
\alpha\ln\left(\frac{1-c_1}{1-c_0}\right).
\]
As $c_1\to 1$, the logarithmic term tends to $-\infty$, while
$G(c_1)/(1-c_1)\ge 0$, a contradiction. Therefore no truthful
budget-feasible mechanism can achieve a positive prior-free approximation.
\end{proof}

\section{Bayesian Procurement for Utility Maximization and Generalizations}

\label{sec:bayesian}

In this section, we circumvent the impossibility result from \Cref{prop:prior-free-utility-impossibility} by relaxing the guarantee from worst-case (prior-free) to in expectation (the Bayesian setting). 
In the Bayesian setting, each seller $i$ has a private cost parameter $c_i$, which is (independently) drawn from a publicly known distribution $F_i$.

We are interested in designing a mechanism $(\allocs,\paymts)$ that maximizes expected buyer utility, while respecting the budget constraint. 
We distinguish between ex-post (or \emph{hard}) budget-feasibility and ex-ante (or \emph{soft}) budget-feasibility, as defined in~\Cref{def:hard-budget-constraint} and~\Cref{def:soft-budget-constraint}.

Our primary focus is on designing mechanisms that satisfy the hard-budget constraint, since this is the more meaningful requirement: the buyer should never exceed the available budget. To illustrate the distinction, even in the single-seller case, the optimal ex-ante solution might post a price of $B/q>B$, where the seller accepts with probability $q$. In that case, the budget constraint is satisfied only in expectation, while whenever the buyer obtains non-zero utility, the realized payment exceeds the budget. However, we study the soft-budget formulation first en route to an ex post solution: the ex ante problem is analytically tractable and provides a useful relaxation that we leverage to design and analyze approximately utility-optimal mechanisms subject to a hard-budget constraint. Note, however, that in repeated environments, it may be relevant to study ex-ante feasibility, where temporary budget overruns may average out over time and only long-run expected expenditure must remain bounded.

\subsection{The Optimal Soft-Budget Mechanism}
\label{sec:soft-budget}

In this subsection, we provide a solution for the Bayesian soft-budget utility maximization procurement problem. The high-level idea is that because both the objective and the budget constraint are expressed ex ante---in expectation over the agents’ costs---the optimal mechanism decomposes across agents: the designer effectively allocates budget to agents in expectation and optimizes each agent’s allocation independently. As a result, the global problem reduces to a collection of single-agent optimization problems, whose solutions take the form of posted prices.

The optimization problem at hand is: 
\begin{equation}
\begin{aligned}
    \max_{x,p}\;& \mathbb{E}_\mathbf{c} \left[\sum_{i\in [n]} (v_i x_i(\mathbf{c}) - p_i(\mathbf{c})) \right]\\
    \text{s.t. }\;& \mathbb{E}_\mathbf{c} \left[\sum_{i\in [n]} p_i(\mathbf{c})\, \right] \leq B, \\ 
    & (\allocs, \paymts) \text{ is DSIC-IR}.
\end{aligned}
\label{form:utility-formulation}
\end{equation}

Previous works \citep{EG14,BH15} provide solutions for Bayesian value maximization (i.e., the optimization objective of formulation~\eqref{form:utility-formulation} without the payment term) subject to a soft-budget constraint. The solutions to the value and utility optimization problems end up sharing nearly all of their structural features: under regularity of the virtual cost functions, both admit optimal solutions in the form of independent posted prices. One small conceptual difference is that, in value maximization, it is always optimal to exhaust the entire budget, since allocating additional budget can only increase total value. On the contrary, for utility maximization, the soft-budget constraint need not bind: offering an agent too high a price may reduce expected utility, hence a budget surplus may be preferable. We will show, however, that this distinction is only superficial: without loss of generality, one may restrict attention to instances in which the budget constraint binds.

We rewrite program \eqref{form:utility-formulation} using standard Bayesian mechanism-design tools. \citet{Myerson} dictates that to satisfy DSIC and IR, the mechanism needs to have an allocation function $\allocs(\cdot)$ that is monotone \emph{non-increasing} in cost, and a payment function $\paymts(\cdot)$ that satisfies the Myersonian formula (from \Cref{lem:myerson}). Furthermore, we can rewrite the expected payment to an agent $i$ using the virtual cost functions $\phi_i(c_i) = c_i + \frac{F_i(c_i)}{f_i(c_i)}$ and the allocation function $\allocs(\cdot)$ as 
    \[
    \mathbb{E}_\mathbf{c}[p_i(\mathbf{c})] =\mathbb{E}_\mathbf{c}[\phi_i(c_i)\cdot x_i(\mathbf{c})]. 
    \]

Using these tools, we reformulate the optimization problem in terms of only one variable, the allocation rule $x(\cdot)$:
    
    \begin{equation}
    \label{form:bayesian-utility-virtual}
    \begin{aligned}
    \max_{x}\;& \mathbb{E}_\mathbf{c} \left[\sum_{i\in [n]} (v_i - \phi_i(c_i)) \cdot x_i(\mathbf{c}) \right]\\
    \text{s.t. }\;& \mathbb{E}_\mathbf{c} \left[\sum_{i\in [n]} \phi_i(c_i)x_i(\mathbf{c})\, \right] \leq B, \\ 
    & x_i(c_i, \costs_{-i}) \geq x_i(c_i', \costs_{-i}) &\ \forall\, c_i \leq c_i',\forall c_{-i} ,  \forall i \in [n]\,\, \\
    & 0 \leq x_i (\mathbf{c}) \leq 1 &\ \forall\, \mathbf{c}, \forall i \in [n].    
    \end{aligned}    
    \end{equation}

We can now state \Cref{thm:exante} that characterizes the optimal soft-budget mechanism. The proof parallels the one in \citep{BH15}, which is closely related to the one in \citep{EG14}, and is presented for completeness in Appendix~\ref{app:soft-budget}.

\begin{restatable}{theorem}{thmexante}
    \label{thm:exante}
    Assuming regular cost distributions, there exists a utility-optimal mechanism $\mechante$ for the soft-budget constraint that independently posts prices $\expay_i$ to each agent $i$. Furthermore, these optimal prices $\expay_i$ can be computed efficiently.
\end{restatable}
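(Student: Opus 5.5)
Starting from formulation~\eqref{form:bayesian-utility-virtual}, the first step is to decouple the agents. Both the objective and the budget constraint depend on $x_i$ only through the interim allocation $\bar x_i(c_i)=\mathbb{E}_{\costs_{-i}}[x_i(c_i,\costs_{-i})]$, since $\phi_i$ depends on $c_i$ alone. Any feasible $\allocs$ therefore induces non-increasing interim rules $\bar x_i$ with the same objective and the same expected spend. Conversely, the mechanism that sets $x_i(\costs)=\bar x_i(c_i)$ for each agent independently is feasible for~\eqref{form:bayesian-utility-virtual}, because the ex-ante constraint imposes no coupling across agents. So I may restrict to independent per-agent rules. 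Moving to quantile space with $q=F_i(c_i)$, the expected payment becomes $\int_0^1 \phi_i(c_i(q))\bar x_i(q)\,dq=\int_0^1 R_i'(q)\bar x_i(q)\,dq$, where $R_i(q)=q\,F_i^{-1}(q)$ is the ``cost curve''. Regularity is exactly convexity of $R_i$.

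Next I dualize the budget with a multiplier $\lambda\ge 0$. The Lagrangian is $\sum_i \mathbb{E}[(v_i-(1+\lambda)\phi_i(c_i))\bar x_i(c_i)]+\lambda B$. For fixed $\lambda$ it separates, and pointwise maximization sets $\bar x_i(c_i)=1$ exactly when $\phi_i(c_i)\le v_i/(1+\lambda)$. Because $\phi_i$ is non-decreasing (regularity), this set is an interval $[0,\tau_i(\lambda)]$. The resulting rule is therefore monotone, and it is a posted price at $\tau_i(\lambda)$, with Myerson payment equal to that threshold by \Cref{obs:threshold-ub}. Expected spend $S(\lambda)=\sum_i R_i(F_i(\tau_i(\lambda)))$ is non-increasing in $\lambda$.

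For complementary slackness there are two cases. If $S(0)\le B$, the unconstrained optimum (posting price $\phi_i^{-1}(v_i)$, capped to the support) is feasible, and I am done. Otherwise I pick $\lambda^*$ with $S(\lambda^*)=B$. The function $\lambda\mapsto\tau_i(\lambda)$ may jump where $\phi_i$ is flat or at atoms, so $S$ can skip over $B$. In that event I randomize between the two adjacent thresholds for the single agent where the jump occurs. Equivalently, since $R_i$ is convex, that agent's point lies on a linear piece of $R_i$, so a lottery over prices is still an optimal ``price'' in the concavified sense. The paper calls this an independently posted (possibly randomized) price $\expay_i$. Weak duality together with complementary slackness then shows optimality for~\eqref{form:bayesian-utility-virtual}. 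This also recovers the remark that the budget binds whenever it is active.

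Efficiency follows because $S(\lambda)$ is monotone, so $\lambda^*$ can be found by binary search, with each evaluation requiring $n$ inversions of $\phi_i$ (assumed to be oracle-computable).

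The main obstacle I expect is the tie and discontinuity case in choosing $\lambda^*$, that is, making the exactly-binding solution a legitimate posted price while keeping the duality argument rigorous. My plan is to handle it in quantile space as a one-dimensional concave program, $\max \sum_i (v_iq_i-R_i(q_i))$ subject to $\sum_i R_i(q_i)\le B$. Since the objective is concave and the constraint convex, KKT applies directly, and at most one coordinate falls on a breakpoint.
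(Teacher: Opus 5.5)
Your proposal is correct and follows the paper's argument: dualize the budget with $\lambda$, use monotonicity of $\phi_i$ to get a threshold (posted-price) maximizer of the Lagrangian, and binary-search on the monotone spend. The differences are minor: you decouple agents via interim rules where the paper invokes the payment lower bound of \Cref{lem:expected-payment-lower-bound} (which is what convexity of your $R_i$ gives), and you explicitly handle the slack-budget and discontinuity cases the paper glosses over.

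One correction. \Cref{thm:exante} asserts \emph{deterministic} prices; the two-price lottery belongs to the irregular case, \Cref{thm:exante-irregular}. You never need to randomize here. $S$ can only jump where some $R_i$ has a linear piece of slope $v_i/(1+\lambda^*)$. On such a piece, every intermediate deterministic threshold also maximizes the Lagrangian and yields the same allocation mass and spend as your lottery. By continuity of spend in the threshold, you can therefore choose these thresholds so that $S=B$ exactly, for all jumping agents simultaneously rather than necessarily a single one. Indeed, the optimizer $q^*$ of your quantile program is already implemented by the deterministic prices $F_i^{-1}(q_i^*)$.
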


Finally, we address whether the optimal solution for the objective of utility maximization necessarily expends the entire budget. Exploiting the structure of~\eqref{form:bayesian-utility-virtual} together with the regularity of the virtual cost functions, we can solve $v_i=\phi_i(c_i)$ to identify, for each agent, the price beyond which offering a higher price would decrease the objective. Let $\bar c_i$ denote this price and let $\bar q_i:= F_i(\bar c_i)$ be the probability that agent $i$ accepts it. This yields an upper bound on total useful expenditure, $\bar B:= \sum_i \bar q_i\,\bar c_i$. If the available budget satisfies $B \le \bar B$, then the optimal mechanism expends the entire budget; if $B > \bar B$, we can equivalently solve the problem assuming that the budget is $\bar B$. Consequently, the utility-maximization problem may also be analyzed, without loss of generality, under the assumption that the budget constraint binds.

\subsection{A Constant Approximation Subject to a Hard Budget Constraint}

\label{sec:hard-budget}

In this section, we construct a mechanism that gives a constant-factor approximation to the \emph{ex-post} budget-feasible optimal utility. Notice that enforcing budget feasibility ex-post is only more restrictive than enforcing the budget only in expectation, since every ex-post budget-feasible mechanism is also feasible ex-ante, while the converse need not hold. Consequently, the optimal utility subject to \emph{ex-ante} budget-feasibility upper bounds the optimal utility subject to ex-post budget-feasibility, that is $\exante \ge \expost$.

At a high level, our mechanism will take as input a price vector $\boldsymbol{\expostpay}$ that is ex-ante budget feasible and satisfies $\expostpay_i \le B$ for every agent. For these prices, define the \ippu benchmark $\indprice(\boldsymbol{\expostpay})=\sum(v_i-\expostpay_i)\expostquant_i$.\footnote{$\expostquant_i = \expostquant_i(\expostpay_i) = F_i(\expostpay_i)$, but for ease of notation, we omit dependence on $\expostpay$.} Using $\boldsymbol{\expostpay}$, we partition agents into high-price agents $H$ and low-price agents $L$: if $\indprice(\boldsymbol{\expostpay})$ derives most of its utility from $H$, we run a sequential posted-price procedure on $H$ in decreasing order of utility $v_i-\expostpay_i$; otherwise, we run a sequential posted-price procedure on $L$ in decreasing order of utility-per-payment $(v_i-\expostpay_i)/\expostpay_i$. The formal description appears in \Cref{mech:ex-post}.

\begin{algorithm}[htbp]
\caption{Utility-Maximizing Ex-post Budget-Feasible Auction $\constmech(\alpha,\beta;{\mathbf{\expostpay}})$}
\label{mech:ex-post}
\begin{algorithmic}[1]
\STATE Partition the agents into $H = \{i : \expostpay_i \geq \frac{B}{\alpha}\}$ and $L= \{i: \expostpay_i < \frac{B}{\alpha}\}$, denote their respective \ippu as $\indprice^H(\boldsymbol{\expostpay}) =\sum_{i\in H}{ \left(v_i - \expostpay_i \right)\expostquant_i}$ and $\indprice^L(\boldsymbol{\expostpay}) = \sum_{i\in L}{ \left(v_i - \expostpay_i \right)\expostquant_i}$, and the total \ippu as $\indprice(\boldsymbol{\expostpay}) = \sum_{i \in L \cup H}{\left(v_i - \expostpay_i \right)\expostquant_i}$. \\
\IF{$\indprice^H(\boldsymbol{\expostpay}) \geq \left(1 -\frac{1}{\beta}\right) \cdot \indprice(\boldsymbol{\expostpay})$}
    \STATE Order agents in $H$ by decreasing $v_i-\expostpay_i$. Iterate in this order and offer agent $i$ price $\expostpay_i$ if the remaining budget is at least $\expostpay_i$; otherwise, skip $i$ and continue.
\ELSE
    \STATE Order agents in $L$ by decreasing $\frac{v_i-\expostpay_i}{\expostpay_i}$. Iterate in this order and offer agent $i$ price $\expostpay_i$ if the remaining budget is at least $\expostpay_i$; otherwise, skip $i$ and continue.
\ENDIF
\end{algorithmic}
\end{algorithm}

The ex-ante prices from Section~\ref{sec:soft-budget} may not be suitable for our ex-post mechanism: while they always satisfy $\expay_i \le v_i$, they may individually exceed the budget (i.e., $\expay_i > B$). We therefore instead use the prices obtained by solving the soft-budget problem on the truncated instance $F^{\le B}$, where costs above $B$ are capped at $B$.

This solution may no longer be optimal for the original soft-budget problem, but preserves the posted-price structure under regularity, guarantees $\expay_i\le B$, and remains a valid relaxation because any ex-post budget-feasible mechanism never offers prices above $B$. Hence:
\begin{equation}
        \label{ineq:bayesian-relaxation}
        \exante(F^{\leq B}) \ge \expost(F^{\leq B}) 
        = \expost.
\end{equation}

Our analysis distinguishes between two cases. When low-priced agents contribute most of the utility, we adapt the analysis of~\citet{BH15}, which addresses the Bayesian value-maximization problem, to the utility objective. Importantly, we identify and correct a subtle but important error in the analysis of one of their lemmas (\Cref{rem:BH-fix}). This issue does not undermine the validity of their main theorem; rather, we revisit the argument and supply a modified proof for clarity, correctness, and completeness. The approach of~\citep{BH15} holds whenever the maximal price is not too high, and does not give a constant-factor approximation for high-priced agents. 
When the high-priced agents contribute most of the utility to the ex-ante objective, we show that a posted price mechanism that orders agents according to their utility extracts a constant factor of this value.

We first address the case where low-priced agents contribute the most to the ex-ante objective. As in~\citet{BH15}, let $k=\min_i \frac{B}{\expostpay_i}$. The smaller $k$ is, the higher the maximal price is.\footnote{In fact, the domain of parameter $k$ is not explicitly addressed in~\citep{BH15}: their approximation guarantee is meaningful only when $k \ge 1$, while the methodology used to generate prices may induce instances with $k<1$, for which no guarantee is provided, and the mechanism may return an empty outcome despite feasibility. In contrast, our approach applies uniformly regardless of $k$.} Consider $\mech(\boldsymbol{\expostpay})$, a sequential posted pricing mechanism that orders agents by decreasing $\frac{v_i - \expostpay_i}{\expostpay_i}$ and offers them prices $\expostpay_i$, as described in Line 5 of~\Cref{mech:ex-post}. The next lemma analyzes the approximation ratio of this mechanism (proof in Appendix~\ref{app:hard-budget}).

\begin{restatable}{lemma}{BHlemma}
\label{lem:ex-post-k-approximation}
    Assuming prices $\boldsymbol{\expostpay}$ that satisfy the budget constraint ex-ante and $\expostpay_i < \frac{B}{k}$ individually, $\mech(\boldsymbol{\expostpay})$ achieves a $\left(1-e^{-k}\frac{k^{\lfloor k\rfloor}}{\lfloor k\rfloor!}\right)\left(1-\frac{1}{k}\right)$-approximation to the \ippu $\indprice(\boldsymbol{\expostpay})$.
\end{restatable}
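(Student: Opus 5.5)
We will reduce the analysis to a fractional-knapsack comparison and a correlation-gap style bound on the number of agents actually processed. Relabel agents in the order used by $\mech(\boldsymbol{\expostpay})$, i.e.\ by decreasing $r_i=(v_i-\expostpay_i)/\expostpay_i$, and write $u_i=(v_i-\expostpay_i)\expostquant_i$ and $b_i=\expostpay_i\expostquant_i$, so that $\indprice(\boldsymbol{\expostpay})=\sum_i u_i=\sum_i r_i b_i$ and $\sum_i b_i\le B$. Each agent is offered price $\expostpay_i$ only if the remaining budget is at least $\expostpay_i$; agent $i$ accepts independently with probability $\expostquant_i$ and then contributes $v_i-\expostpay_i$ to utility. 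Hence the mechanism's expected utility is $\sum_i u_i\,\Pr[\text{agent } i \text{ is offered}]$. Since each price is below $B/k$, agent $i$ is surely offered whenever the total spent on agents before $i$ is at most $B-B/k=B(1-1/k)$.

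\textbf{Step 1 (discounting the last stretch of budget).} Define the prefix $S=\{1,\dots,m\}$ of the ratio order with $\sum_{i\le m} b_i\le B(1-1/k)$, allowing a fractional last agent as in the fractional knapsack. Because the $r_i$ are decreasing and $\sum_i b_i\le B$, the utility captured by this prefix is at least a $(1-1/k)$ fraction of $\sum_i u_i$, by the same averaging argument as \Cref{lem:appx-lemma1}: the prefix has the best ratios and uses a $(1-1/k)$ fraction of a budget of at most $B$. If $\sum_i b_i\le B(1-1/k)$, the prefix is everything. This step produces the $(1-1/k)$ factor.

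\textbf{Step 2 (probability the spend stays small).} For agents in $S$, we bound $\Pr[\text{spend before } i \le B-B/k]$. The spend is at most the total spend $X$ of the agents in $S$ had all of them been offered. It suffices to show that, in expectation, $\sum_{i\in S} u_i\mathbf 1[\text{offered}]$ is at least $\bigl(1-e^{-k}\frac{k^{\lfloor k\rfloor}}{\lfloor k\rfloor!}\bigr)\sum_{i\in S}u_i$. To do so, we rescale prices by $B/k$, so each accepted agent consumes a weight in $[0,1]$ and the expected total weight is at most $k$. The relevant quantity is then the expected fraction of ``mass'' processed before the cumulative weight overflows a capacity of order $k$. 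The worst case, by a convexity/majorization argument, is when all agents have maximal price $B/k$ and infinitesimal acceptance probabilities. The number of acceptances is then $\mathrm{Poisson}(k)$, and the fraction served is $\mathbb E[\min(N,\lfloor k\rfloor)]/k$. The identity $\mathbb E[\min(N,k)]/k=1-e^{-k}k^{k}/k!$ for integer $k$ gives the stated constant, in the spirit of Yan's correlation-gap bound for $k$-uniform matroids.

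\textbf{Main obstacle.} Step 2 is where \citet{BH15}'s argument has the hole flagged in \Cref{rem:BH-fix}. The difficulty is that budget feasibility involves heterogeneous prices and ``skip and continue'' behavior rather than a clean $k$-unit constraint. Sequential processing by ratio also couples the event that an agent is offered to its position in the order. I would first try to couple the budget process with a $\lfloor k\rfloor$-unit process: any $\lfloor k\rfloor$ accepted prices, each below $B/k$, total less than $B$, so a $\lfloor k\rfloor$-uniform greedy serves a subset of what the budget mechanism serves. However, ratio-weighting is incompatible with pure counting, so I would instead combine Step 1 with Step 2 multiplicatively. The plan is to apply the Poisson bound to the prefix $S$ with capacity counted as $\lfloor k\rfloor$ units, and to verify that the rearrangement to the worst case (splitting agents into infinitesimal pieces) only decreases the served utility. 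I expect this splitting/monotonicity check to be the delicate point.
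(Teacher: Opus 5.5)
Your Step 2 carries the entire content of the lemma, and it is asserted rather than proved. Two ingredients are missing.

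\textbf{Ratio weights.} You never resolve the weighting issue you raise yourself. Combining Step 1 multiplicatively does not help, because Step 2 is still a weighted claim about $\sum_{i\in S}u_i\mathbf{1}[\text{offered}]$. The missing idea is Abel summation along the ratio order. Since $r_i$ is non-increasing, both $\indprice(\mathbf p)=\sum_i r_i\,\mathbb{E}[p_i\mathbf{1}[i\text{ accepts}]]$ and any lower bound of the form $\sum_i r_i\,\mathbb{E}[D_i]$ can be rewritten as $\sum_i (r_i-r_{i+1})\,\mathbb{E}[\sum_{j\le i}D_j]$, with nonnegative coefficients. It then suffices to compare, prefix by prefix, $\mathbb{E}[g(Z_i)]$ with $\mathbb{E}[Z_i]$, where $Z_i$ is the accepted spend in $\{1,\dots,i\}$ and $g$ is concave.

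\textbf{Reduction to equal prices.} Your ``convexity/majorization'' step is not justified for the integral skip-and-continue mechanism, whose served spend is not a function of $Z_i$. The paper avoids this by comparing the mechanism with the fractional knapsack realization by realization. It loses only the single fractional agent $\ell$, whose utility is at most $\frac{p_\ell}{B}u_B(S)\le\frac1k u_B(S)$. For the fractional knapsack the prefix quantity is exactly $\min(B,Z_i)$. The reduction is then the one-coordinate replacement $(q_j,p_j)\mapsto(\frac{k}{B}p_jq_j,\frac{B}{k})$, which preserves $\mathbb{E}[Z_i]$ and is justified by concavity of $x\mapsto\min(B,t+x)-\min(B,t)$. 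It is followed by Hoeffding's Poisson convex-order bound and monotonicity of $\mathbb{E}[\min(k,Y_\mu)]/\mu$ on $(0,k]$.

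\textbf{Constants.} Even granting Step 2, the constants do not close. With $N\sim\mathrm{Poisson}(k)$ and $m=\lfloor k\rfloor$, your worst-case value is $\mathbb{E}[\min(N,m)]/k=1-\Pr[N=m]-(k-m)\Pr[N>m]/k$. This is strictly below the lemma's constant when $k\notin\mathbb{Z}$, and the lemma is invoked at $k=\alpha_0\approx 2.39$. Multiplied by the $(1-\frac1k)$ from Step 1, it falls short of the claim. Moreover, your Step 1 prefix has scaled expected load at most $k-1$, not $k$. The natural rigorous version of Step 2 on that prefix certifies service via $Z_{i-1}\le B-\frac Bk$, and it yields only $(1-\frac1k)\bigl(1-\Pr[Y_{k-1}=\lfloor k-1\rfloor]\bigr)$, which is strictly weaker than claimed for every $k>1$. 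An ex-ante prefix is the wrong place to pay the $(1-\frac1k)$ factor.

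\textbf{A repair.} Drop Step 1. For every agent, use $\Pr[i\text{ offered}]\ge\Pr[Z_{i-1}\le B-\frac Bk]$, and observe that $\sum_{j\le i}p_j\mathbf{1}[j\text{ accepts},\,Z_{j-1}\le B-\frac Bk]\ge\min(Z_i,B-\frac Bk)$. Apply the Abel summation and the concavity/Poisson steps with capacity $B-\frac Bk$, and finish with $\min(k-1,y)\ge\frac{k-1}{k}\min(k,y)$. Alternatively, follow the paper's pointwise fractional-knapsack comparison, which pays the $(1-\frac1k)$ per realization instead.
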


We now address the case where high-priced agents contribute the most to the ex-ante objective. We provide the following lemma about the approximation ratio of the mechanism that orders agents by decreasing $v_i - \expostpay_i$ and offers them prices $\expostpay_i$ (as described in Line 3 of~\Cref{mech:ex-post}).

\begin{restatable}{lemma}{expostgreedy}
        \label{lem:ex-post-greedy-approximation}
        Assuming prices $\boldsymbol{\expostpay}$ that satisfy the budget constraint ex-ante, are individually ex-post feasible ($\expostpay_i \leq B$) and additionally satisfy $\expostpay_i \geq \frac{B}{\alpha}$, the sequential posted pricing mechanism that orders agents by decreasing $v_i - \expostpay_i$ and offers them prices $\expostpay_i$, achieves a $\frac{1-e^{-\alpha}}{\alpha}$-approximation to the \ippu $\indprice(\boldsymbol{\expostpay})$. 
    \end{restatable}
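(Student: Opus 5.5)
Order the agents of $H$ as $1,\dots,m$ by decreasing $u_i := v_i-\expostpay_i$ and write $\expostquant_i = F_i(\expostpay_i)$. Ex-ante feasibility gives $\sum_i \expostpay_i \expostquant_i \le B$. Since $\expostpay_i \ge B/\alpha$, this yields $\sum_{i\in H}\expostquant_i \le \alpha$.

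The first step is to lower bound the probability that agent $i$ is actually offered its price. Every $\expostpay_j\le B$, so a single acceptance already leaves nothing guaranteed. I will therefore use a crude but clean event: agent $i$ is offered $\expostpay_i$ whenever no earlier agent $j<i$ has accepted. Before any acceptance the remaining budget is $B\ge \expostpay_i$. The events $\{c_j\le \expostpay_j\}$ are independent, so
\[
\Pr[i \text{ accepts}] \ \ge\ \expostquant_i \prod_{j<i}(1-\expostquant_j).
\]
Each agent that accepts contributes exactly $u_i$ to the utility, and $u_i\ge0$ because the ex-ante prices satisfy $\expostpay_i\le v_i$. Hence
\[
\alg \ \ge\ \sum_{i=1}^m u_i\,\expostquant_i \prod_{j<i}(1-\expostquant_j).
\]

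The second step compares this with $\indprice^H=\sum_i u_i \expostquant_i$. Put $r_i := \prod_{j<i}(1-\expostquant_j)$, which is non-increasing in $i$. The $u_i$ are sorted decreasingly, and this is exactly where the ordering by $v_i-\expostpay_i$ is used. The ratio $\sum_i u_i\expostquant_i r_i \big/ \sum_i u_i \expostquant_i$ is a weighted average of the $r_i$, with weights $u_i\expostquant_i$. By a Chebyshev-sum (rearrangement) argument, or equivalently by writing $u_i=\sum_{t\ge i}(u_t-u_{t+1})$ with $u_{m+1}=0$ and using Abel summation, the worst case is when all $u_i$ are equal. It therefore suffices to show, for every prefix length $t$,
\[
\sum_{i\le t}\expostquant_i r_i \ \ge\ \frac{1-e^{-\alpha}}{\alpha}\sum_{i\le t}\expostquant_i.
\]

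The third step evaluates the prefix sum directly. It telescopes:
\[
\sum_{i\le t}\expostquant_i r_i \ =\ 1-\prod_{i\le t}(1-\expostquant_i) \ \ge\ 1-e^{-Q_t}, \qquad Q_t:=\sum_{i\le t}\expostquant_i .
\]
The map $x\mapsto (1-e^{-x})/x$ is decreasing, and $Q_t\le \alpha$, so $1-e^{-Q_t}\ge \frac{1-e^{-\alpha}}{\alpha}\,Q_t$, which is the prefix inequality.

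Combining the three steps gives $\alg \ge \frac{1-e^{-\alpha}}{\alpha}\indprice^H(\boldsymbol{\expostpay})$, as claimed. The main point needing care is the Abel-summation step: each coefficient $u_t-u_{t+1}$ must be nonnegative, which requires both the sorted order and $u_i\ge 0$. The other ingredient to state explicitly is that an offer is guaranteed whenever no one has yet accepted, which uses $\expostpay_i\le B$.
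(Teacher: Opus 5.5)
Your proof is correct and follows essentially the same route as the paper's. Both arguments use the first-acceptor lower bound $\sum_i u_i q_i\prod_{j<i}(1-q_j)$, Abel summation over $\Delta_\ell=u_\ell-u_{\ell+1}$, the telescoping identity $1-\prod_{j\le \ell}(1-q_j)\ge 1-e^{-Q^\ell}$, and monotonicity of $(1-e^{-x})/x$ together with $Q\le\alpha$. You also state explicitly that $u_i\ge 0$ (i.e., $p_i\le v_i$) is required, which the paper uses only implicitly when it asserts $\Delta_n\ge 0$ with $u_{n+1}=0$.
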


    \begin{proof}
            We will lower-bound the expected utility $U$ of this mechanism by the expected value of the first agent that accepts their offered price. 
            Without loss of generality, assume that agents are ordered in decreasing expected utility $u_i = v_i -\expostpay_i$. Define $\Delta_\ell := u_\ell - u_{\ell+1} \ge 0$ with $u_{n+1}=0$, and let $Q = \sum_i^n \expostquant_i$ and $Q^{\ell}=\sum_{i = 1}^\ell \expostquant_i$. We rewrite the \ippu as: 
  \begin{eqnarray*}
        \indprice(\boldsymbol{\expostpay}) = \sum_{i=1}^n  u_i \expostquant_i = \sum_{i=1}^n \left(\sum_{\ell\ge i}  \Delta_\ell\right)\cdot \expostquant_i = \sum_{\ell=1}^n \Delta_\ell\cdot \sum_{i\le \ell} \expostquant_i = \sum_{\ell=1}^n \Delta_\ell\cdot Q^{\ell}.
    \end{eqnarray*}
    
    Since, by assumption, these prices satisfy ex-ante budget feasibility:
    \[
    B \ge \sum_{i = 1}^n \expostquant_i \expostpay_i \ge \frac{B}{\alpha}\sum_{i = 1}^n \expostquant_i = \frac{B}{\alpha}\, Q  \quad \Rightarrow  \quad Q \leq \alpha,
    \]
     where the second inequality is by the Lemma's assumption. 
    
    We now lower bound the utility of the mechanism with the expected utility of the first agent to accept their price:     \begin{align*}
    U(\boldsymbol{\expostpay})
    &\geq 
    \sum_{i=1}^n u_i \expostquant_i \prod_{j<i} (1-\expostquant_j) \\
    &= 
    \sum_{\ell=1}^n \Delta_\ell \sum_{i\le \ell} \expostquant_i \prod_{j<i} (1-\expostquant_j)
    && \text{(rewrite using $\Delta_\ell$)} \\
    &= 
    \sum_{\ell=1}^n \Delta_\ell \left(1-\prod_{j\le \ell}(1-\expostquant_j)\right)
    && \text{(rewrite probability using the complement event)} \\
    &\ge 
    \sum_{\ell=1}^n \Delta_\ell \left(1-e^{-\sum_{j\le \ell}\expostquant_j}\right)
    && \text{$1-x \le e^{-x}$} \\
    &= 
    \sum_{\ell=1}^n \Delta_\ell \left(1-e^{-Q^{\ell}}\right).
    && \text{(definition of $Q^{\ell}$)} 
    \end{align*}
        
    Combining all three inequalities, we get:
    \begin{equation}
    \begin{aligned}
            U(\boldsymbol{\expostpay}) 
            & \geq \frac{\sum_\ell \Delta_\ell(1-e^{-Q^{\ell}})}{\sum_\ell \Delta_\ell \cdot Q^{\ell}} \cdot {\indprice(\boldsymbol{\expostpay})} \\
            & \geq \min_\ell \frac{(1-e^{-Q^{\ell}})}{Q^{\ell}} \cdot \indprice(\boldsymbol{\expostpay}) \\
            & \geq \frac{1-e^{-Q}}{Q} \cdot \indprice(\boldsymbol{\expostpay})\\
            & \geq \frac{1-e^{-\alpha}}{\alpha} \cdot \indprice(\boldsymbol{\expostpay}),
    \end{aligned}
    \end{equation}
    where the fourth inequality follows because $\frac{1-e^{-x}}{x}$ is a decreasing function in $x$.

    \end{proof}

We now provide our approximation guarantee and analysis of \Cref{mech:ex-post}, and afterward, in Corollary~\ref{cor:constant-approx}, we optimize these values to show that our guarantee is in fact at least a $1/5$-approximation. We note that \Cref{mech:ex-post} is computable in \(O(n\log n)\) time (see Appendix~\ref{app:computation}).

\begin{theorem}
\label{thm:ex-post-approx}
Assuming prices $\boldsymbol{\expostpay}$ that satisfy the budget constraint ex-ante and are individually ex-post feasible ($\expostpay_i \leq B$), for any parameters $\alpha > 1$ and $\beta > 1$, the utility-maximizing ex-post budget-feasible auction
$\constmech(\alpha,\beta; \boldsymbol{\expostpay})$ approximates the \ippu  $\indprice(\boldsymbol{\expostpay})$ with an approximation ratio of 
\[
\min\left\{
\frac{1}{\beta}\left(1-\frac{1}{\alpha}\right)
\left(1-e^{-\alpha}\frac{\alpha^{\lfloor \alpha\rfloor}}{\lfloor \alpha\rfloor!}\right),
\;
\left(1-\frac{1}{\beta}\right)\frac{1-e^{-\alpha}}{\alpha}
\right\}.
\]
\end{theorem}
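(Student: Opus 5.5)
The proof splits on which branch of \Cref{mech:ex-post} executes. In each branch I will apply one of the two preceding lemmas to the relevant subset of agents, then use the branch condition to compare that subset's \ippu with the total $\indprice(\boldsymbol{\expostpay})$. Two preliminary facts are needed. First, restricting the price vector to a subset $S\subseteq[n]$ keeps it ex-ante budget feasible, since $\sum_{i\in S}\expostquant_i\expostpay_i\le\sum_i\expostquant_i\expostpay_i\le B$. Second, ex-post budget feasibility is immediate from the mechanism's rule that a price is offered only when the remaining budget covers it. DSIC-IR holds because each agent faces a take-it-or-leave-it price that does not depend on her own report.

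\emph{Case $\indprice^H(\boldsymbol{\expostpay})\ge(1-\frac1\beta)\indprice(\boldsymbol{\expostpay})$.} The mechanism runs Line 3 on $H$. Every $i\in H$ satisfies $B/\alpha\le\expostpay_i\le B$, and the restricted prices are ex-ante feasible, so \Cref{lem:ex-post-greedy-approximation} applied to the instance consisting of $H$ alone gives expected utility at least $\frac{1-e^{-\alpha}}{\alpha}\indprice^H(\boldsymbol{\expostpay})$. Combining with the case condition yields at least $(1-\frac1\beta)\frac{1-e^{-\alpha}}{\alpha}\indprice(\boldsymbol{\expostpay})$.

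One point needs checking here: the proof of \Cref{lem:ex-post-greedy-approximation} lower-bounds the mechanism's utility by that of the first accepting agent. When agent $i$ is reached and all earlier agents have declined, the full budget $B\ge\expostpay_i$ remains, so $i$ is indeed offered $\expostpay_i$. Every later accepted agent contributes $v_j-\expostpay_j\ge0$, because the truncated ex-ante prices satisfy $\expostpay_j\le v_j$; if some $\expostpay_j>v_j$, I would drop those agents, which only increases $\indprice$ terms. So the first-acceptor bound is valid.

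\emph{Case $\indprice^H(\boldsymbol{\expostpay})<(1-\frac1\beta)\indprice(\boldsymbol{\expostpay})$.} Then $\indprice^L(\boldsymbol{\expostpay})>\frac1\beta\indprice(\boldsymbol{\expostpay})$. The mechanism runs Line 5 on $L$, where every price satisfies $\expostpay_i<B/\alpha$, so the hypothesis of \Cref{lem:ex-post-k-approximation} holds with $k=\alpha$ on the restricted, ex-ante-feasible instance. That lemma gives utility at least $(1-e^{-\alpha}\frac{\alpha^{\lfloor\alpha\rfloor}}{\lfloor\alpha\rfloor!})(1-\frac1\alpha)\indprice^L(\boldsymbol{\expostpay})$, hence at least $\frac1\beta$ times this factor applied to $\indprice(\boldsymbol{\expostpay})$.

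The main subtlety is that \Cref{lem:ex-post-k-approximation} is stated for $k=\min_i B/\expostpay_i$, whereas here we only know $\expostpay_i<B/\alpha$, so the true $k$ on $L$ is some $k\ge\alpha$. I would either read the lemma as holding under the weaker hypothesis $\expostpay_i<B/k$, which is how it is phrased, or argue that the bound is monotone non-decreasing in $k$. For the second route, $1-\frac1k$ is increasing, and the Poisson-type factor $1-e^{-k}k^{\lfloor k\rfloor}/\lfloor k\rfloor!$ is non-decreasing on each interval between integers and does not drop at integers, so its value at $\alpha$ lower-bounds its value at the true $k$. Taking the minimum of the two cases completes the proof.
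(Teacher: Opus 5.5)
Your proposal is correct and follows the paper's proof: you split on the branch condition, apply \Cref{lem:ex-post-greedy-approximation} to $H$ and \Cref{lem:ex-post-k-approximation} with $k=\alpha$ to $L$ (whose hypothesis is only $\expostpay_i<B/k$), and combine each bound with the corresponding inequality relating $\indprice^H$ or $\indprice^L$ to $\indprice$. Your extra checks are sound details the paper leaves implicit: restricted price vectors stay ex-ante feasible, the first-acceptor bound is valid, and the bound is monotone in $k$. One quibble: ``dropping'' agents with $\expostpay_j>v_j$ would change $\constmech$, so that point is not a fix but an implicit hypothesis $\expostpay_i\le v_i$, which the truncated ex-ante prices satisfy and which the paper's lemmas also tacitly use.
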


\begin{proof}
Let the expected utility of mechanism $\constmech$\footnote{For ease of notation, we omit the dependence $\alpha$, $\beta$.} be $\constutil(\boldsymbol{\expostpay})$. We consider the two cases of the mechanism  $\constmech$.

    \paragraph{Case 1:} The \ippu of the high agents $H$ is a good approximation to the total \ippu, that is, $\indprice^H(\boldsymbol{\expostpay}) \geq \left(1 -\frac{1}{\beta}\right) \cdot \indprice(\boldsymbol{\expostpay})$. 
    The assumptions of \Cref{lem:ex-post-greedy-approximation} are met by agents in $H$ and prices $\boldsymbol{\expostpay}$, and as such, combining the lemma with the case assumption, we get:    
    \begin{equation}
    \label{ineq:ex-post-case-1}
            \constutil(\boldsymbol{\expostpay}) 
            \geq     \frac{1-e^{-\alpha}}{\alpha} \cdot \left(1 - \frac{1}{\beta}\right) \cdot \indprice(\boldsymbol{\expostpay}).
    \end{equation}
    \paragraph{Case 2:} The \ippu of the low agents $L$ is a good approximation to the total \ippu, that is, $\indprice^L(\boldsymbol{\expostpay}) \geq \frac{\indprice(\boldsymbol{\expostpay})}{\beta}$.
    By definition of the set $L$, the posted prices used by the mechanism (in this subcase) satisfy $\expostpay_i \leq \frac{B}{\alpha}$. Then in this case, \Cref{lem:ex-post-k-approximation} with $k=\alpha$:
    \begin{equation}
        \label{ineq:ex-post-case-2}
        \constutil(\boldsymbol{\expostpay}) 
        \geq \frac{1}{\beta} \left(1-e^{-\alpha}\frac{\alpha^{\lfloor \alpha\rfloor}}{\lfloor \alpha\rfloor!}\right) \left(1-\frac{1}{\alpha}\right)  \indprice(\boldsymbol{\expostpay}).
    \end{equation}
In conclusion, the expected utility of mechanism $\constmech$ is always at least the minimum of the approximation guarantees of the two cases, \Cref{ineq:ex-post-case-1,ineq:ex-post-case-2}, so we get:
\begin{equation}
\begin{aligned}
    \label{ineq:ex-post-thm-approximation}
    \frac{\constutil(\boldsymbol{\expostpay})}{\indprice(\boldsymbol{\expostpay})} \geq 
    \min\left\{
    \frac{1}{\beta}\left(1-\frac{1}{\alpha}\right)\left(1-e^{-\alpha}\frac{\alpha^{\lfloor \alpha\rfloor}}{\lfloor \alpha\rfloor!}\right),
    \;
    \left(1-\frac{1}{\beta}\right)\frac{1-e^{-\alpha}}{\alpha}
    \right\}.
\end{aligned}
\end{equation}
Finally, for the special case where $\boldsymbol{\expostpay} = \boldsymbol{\expay}$ (where $\boldsymbol{\expay}$ are the prices of the optimal solution of the ex-ante problem on the truncated distributions $F^{\leq B}$) we note that $\indprice(\boldsymbol{\expostpay}) = \indprice(\boldsymbol{\expay}) = \exante(F^{\leq B})$ and by \Cref{ineq:ex-post-thm-approximation,ineq:bayesian-relaxation} we get:
\begin{equation*}
\begin{aligned}
    \frac{\constutil(\boldsymbol{\expay})}{\expost} \geq 
    \min\left\{
    \frac{1}{\beta}\left(1-\frac{1}{\alpha}\right)\left(1-e^{-\alpha}\frac{\alpha^{\lfloor \alpha\rfloor}}{\lfloor \alpha\rfloor!}\right),
    \;
    \left(1-\frac{1}{\beta}\right)\frac{1-e^{-\alpha}}{\alpha}
    \right\}.
\end{aligned}
\end{equation*}
\end{proof}

The above theorem holds for regular cost distributions and with $\boldsymbol{\expay}$ being the optimal prices of the ex-ante problem on the truncated distributions $F^{\leq B}$ as prices $\boldsymbol{\expostpay}$. We get the following corollary.

\begin{corollary}
\label{cor:constant-approx}
The approximation guarantee in Theorem~\ref{thm:ex-post-approx} is maximized by
$\alpha_0 \approx 2.39$ and $\beta_0 \approx 2.13$.
For these parameters, the mechanism $\constmech(\alpha_0,\beta_0; \boldsymbol{\expay})$ achieves a constant approximation ratio of at least $0.2015$ to the utility-optimal ex-post budget-feasible $\expost$.
\end{corollary}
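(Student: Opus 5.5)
The corollary is a numerical optimization of the bound in \Cref{thm:ex-post-approx}. Write
\[
g_1(\alpha,\beta)=\frac{1}{\beta}\Bigl(1-\frac{1}{\alpha}\Bigr)\Bigl(1-e^{-\alpha}\tfrac{\alpha^{\lfloor\alpha\rfloor}}{\lfloor\alpha\rfloor!}\Bigr),
\qquad
g_2(\alpha,\beta)=\Bigl(1-\frac{1}{\beta}\Bigr)\frac{1-e^{-\alpha}}{\alpha}.
\]
We want to maximize $\min\{g_1,g_2\}$ over $\alpha,\beta>1$.

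\textbf{Step 1: eliminate $\beta$.} For fixed $\alpha$, $g_1$ is decreasing in $\beta$ and $g_2$ is increasing in $\beta$. The optimum over $\beta$ is therefore where they are equal. Writing $A(\alpha)=(1-1/\alpha)(1-e^{-\alpha}\alpha^{\lfloor\alpha\rfloor}/\lfloor\alpha\rfloor!)$ and $C(\alpha)=(1-e^{-\alpha})/\alpha$, the condition $A/\beta=(1-1/\beta)C$ gives
\[
\beta=1+\frac{A}{C}, \qquad \min\{g_1,g_2\}=\frac{AC}{A+C}.
\]
This reduces the problem to maximizing the one-variable function $h(\alpha)=AC/(A+C)$, the "parallel sum" of $A$ and $C$.

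\textbf{Step 2: maximize $h$.} Since $A$ involves $\lfloor\alpha\rfloor$, I would work piece by piece on intervals $[m,m+1)$. On each interval $A$ is smooth and $C$ is smooth and decreasing.
\begin{itemize}
\item On $(1,2)$, $A=(1-1/\alpha)(1-\alpha e^{-\alpha})$ is small near $1$.
\item On $[2,3)$, $A=(1-1/\alpha)(1-e^{-\alpha}\alpha^2/2)$.
\end{itemize}
Evaluating $h$ numerically should locate the maximizer at $\alpha_0\approx 2.39$, in the $[2,3)$ piece. Plugging in gives $A(\alpha_0)\approx 0.58\cdot(1-0.0917\cdot 2.856)\approx 0.58\cdot 0.738\approx 0.428$ and $C(\alpha_0)\approx 0.908/2.39\approx 0.380$. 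Hence $\beta_0=1+A/C\approx 2.13$ and $h\approx 0.2015$.

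For "at least $0.2015$" only a lower bound is needed. It suffices to fix $\alpha_0,\beta_0$ and verify each of $g_1,g_2\ge 0.2015$ by direct evaluation with rounded-down constants. That each term exceeds $0.2015$ is a finite check.

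\textbf{Step 3: transfer to $\expost$.} Apply the final display of \Cref{thm:ex-post-approx} with $\boldsymbol{\expostpay}=\boldsymbol{\expay}$, which uses \eqref{ineq:bayesian-relaxation} and regularity. This yields $\constutil(\boldsymbol{\expay})\ge 0.2015\,\expost$.

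\textbf{Main obstacle.} The delicate part is the claim that $\alpha_0,\beta_0$ are \emph{maximizers}. The floor makes $h$ discontinuous: $A$ jumps down at integers, since the Poisson tail term resets. Global optimality therefore requires checking every integer piece, plus $h\to0$ as $\alpha\to1^+$ and $h\le C\to 0$ as $\alpha\to\infty$. Bounding $h$ on the pieces $[1,2)$, $[3,4)$, and so on, and arguing that $h\le C(\alpha)<0.2015$ for $\alpha\ge 4.5$, turns this into a finite numerical verification. For the corollary's main content, the $0.2015$ lower bound, the optimality claim is not needed.
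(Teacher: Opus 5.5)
Your proposal follows the paper's own argument: equate the two terms of the minimum, set $\beta=1+A/C$, maximize $AC/(A+C)$ numerically over $\alpha$, and pass to $\expost$ through the final display of Theorem~\ref{thm:ex-post-approx}. There are a few minor numerical slips, none of which affect validity. First, your rounding $1-1/\alpha_0\approx 0.58$ would actually give $h\approx 0.2013$; with $0.5816$ one gets $A\approx 0.4294$, $C\approx 0.3801$, and $\min\{g_1,g_2\}\approx 0.20159\ge 0.2015$ at $(\alpha,\beta)=(2.39,2.13)$. Second, $A$ is in fact continuous at integers, since $e^{-m}m^{m}/m!=e^{-m}m^{m-1}/(m-1)!$. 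Third, $C(4.5)\approx 0.22$, so the tail cutoff for $h\le C<0.2015$ is closer to $\alpha\approx 4.93$.
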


The corollary follows by equating the two expressions in the minimum of
Theorem~\ref{thm:ex-post-approx}, substituting for $\beta = \beta(\alpha)$ and optimizing numerically over $\alpha$. Finally, we note that the standalone mechanism's $\mech(\boldsymbol{\expostpay})$ approximation implied by \Cref{lem:ex-post-k-approximation} becomes better than the one proved in \Cref{thm:ex-post-approx} for \Cref{mech:ex-post} whenever $k \geq 1.45$.

Additionally, we highlight the components of our analysis that are tight, pertaining to the conversion of ex-ante posted prices into sequential posted-price mechanisms that are ex-post feasible. Specifically, we show in Appendix~\ref{app:bayesian-lb} that the \(1-\frac{1}{e}\) factor in \Cref{lem:ex-post-greedy-approximation} is tight (see \Cref{lem:high-price-lower-bound}), 
while the $\left(1-e^{-k}\frac{k^{\lfloor k\rfloor}}{\lfloor k\rfloor!}\right)$ guarantee in \Cref{lem:ex-post-k-approximation}, whose loss from $1$ is $O(\frac{1}{\sqrt{k}})$ is tight up to constants (see \Cref{lem:low-price-lower-bound}). 
Both results are proven using simple symmetric instances in which posted prices are identical across agents, and hence the expected utility of any sequential posted-price mechanism depends only on the set of accepting agents, rather than on the ordering rule.

\subsection{Generalized Objective - Irregularity}

\label{sec:bayesian-extension}

In this subsection, we relax the regularity assumption of Sections~\ref{sec:soft-budget} and \ref{sec:hard-budget} and we generalize the analysis and results to a larger class of objectives.

\subsubsection{Irregular Distributions}

\paragraph{Soft-budget Optimal Mechanism (Irregular Distributions).}
When cost distributions are irregular, we apply the standard ironing reduction \citet{Myerson} (as in \citealp{BH15} for welfare). Concretely, we define an auxiliary optimization program obtained by replacing each virtual-cost function with its \emph{ironed} virtual-cost function, which satisfies regularity. The key step is to show that an optimal solution to this ironed program can be converted into a solution for the original program while preserving feasibility and the achieved objective value; hence, the two programs have the same optimum. We defer the definition of the ironed program and the ironing-to-original conversion argument to Appendix~\ref{app:bayesian-extension}. The resulting optimal mechanism is characterized by the following theorem.

\begin{restatable}{theorem}{thmexantegeneral}
    \label{thm:exante-irregular}
    There exists a utility-optimal mechanism $\mechante$ for the soft-budget problem that posts prices independently across agents. For each agent $i$, $\mechante$ either (i) posts a single price $\expay_i$, or (ii) randomizes between two prices by posting $\expay_{i,1}$ with probability $\theta_i\in[0,1]$ and $\expay_{i,2}$ with probability $1-\theta_i$. Moreover, the corresponding optimal prices and mixing probabilities $(\boldsymbol{\expay}, \boldsymbol{\theta})$ can be computed efficiently.
\end{restatable}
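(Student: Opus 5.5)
We will reduce the irregular soft-budget problem to the regular case of \Cref{thm:exante}, using Myerson's ironing in quantile space. For each agent $i$, pass to quantiles $q=F_i(c)$ and write the expected payment curve $P_i(q)=q\,F_i^{-1}(q)$. Note that $P_i'(q)=\phi_i(F_i^{-1}(q))$, and that the expected value term is $v_i q$. For an allocation that is monotone in cost, the interim allocation $y_i(q)=\E_{\costs_{-i}}[x_i(F_i^{-1}(q),\costs_{-i})]$ is non-increasing in $q$. Program~\eqref{form:bayesian-utility-virtual} then reads
\[
\max \sum_i \int_0^1 (v_i-P_i'(q))\,y_i(q)\,dq \quad\text{s.t.}\quad \sum_i\int_0^1 P_i'(q)\,y_i(q)\,dq\le B .
\]
Next, let $\bar P_i$ be the convex hull (lower convex envelope) of $P_i$, and let $\bar\phi_i=\bar P_i'$ be the ironed virtual cost, which is non-decreasing. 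We form the ironed program by replacing $P_i'$ with $\bar P_i'$ in both the objective and the constraint.

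The key step is to show that the two programs have the same optimum. Integrating by parts gives
\[
\int P_i' y_i=\int \bar P_i' y_i+\int (P_i-\bar P_i)(-dy_i),
\]
where the boundary terms vanish because $P_i=\bar P_i$ at $q=0,1$. Since $P_i\ge\bar P_i$ and $y_i$ is non-increasing, the true expected payment is at least the ironed one. Because payment enters the objective with a negative sign and also appears on the left of the constraint, the ironed program is a relaxation: every feasible $y$ for the original program is feasible for the ironed one with at least as large an objective. Conversely, any $y$ that is constant on each ironing interval (where $P_i>\bar P_i$) makes the correction term zero. We therefore need an optimal solution of the ironed program with this property. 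Because $\bar\phi_i$ is constant on each ironing interval, replacing $y_i$ by its average on that interval leaves the ironed objective and constraint unchanged and preserves monotonicity. Applying this averaging to an ironed optimum yields a solution feasible for the original program with equal value, so the two optima coincide.

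For the structure of the solution, we rerun the Lagrangian argument behind \Cref{thm:exante} on the ironed program, whose virtual costs are now monotone. For a multiplier $\lambda\ge0$, the pointwise integrand is $(v_i-(1+\lambda)\bar\phi_i(q))\,y_i(q)$. Since $\bar\phi_i$ is monotone, it is maximized by a threshold $y_i=\mathbb{1}[q\le \tau_i]$, i.e.\ a posted price. Averaging a threshold over an ironing interval that contains $\tau_i$ produces a $y_i$ equal to $1$ before the interval, to a constant in $(0,1)$ on it, and to $0$ after it. This is exactly a randomization between the two prices at the interval's endpoints, with mixing weight $\theta_i$. A second source of randomization arises when $\lambda$ makes $v_i=(1+\lambda)\bar\phi_i$ on a flat piece and exact budget tightness requires a fractional choice there; this too is a mix of two prices. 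As in \Cref{thm:exante}, budget binding or slack is handled via $\bar B$.

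The main obstacle will be arguing that a single $\lambda$ suffices with at most one randomized (two-price) agent per tie, and that efficiency follows. The plan is to note that total ironed expected payment is monotone in $\lambda$, so $\lambda$ can be found by binary search. Ties at the critical $\lambda$ can then be broken by splitting along flat segments, which still gives per-agent two-point mixtures. Computing each convex hull $\bar P_i$ is standard, so $(\boldsymbol{\expay},\boldsymbol{\theta})$ is computable efficiently.
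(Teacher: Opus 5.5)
Your proposal is correct and follows essentially the same route as the paper. You relax to the ironed program via integration by parts against the lower convex envelope of the cumulative virtual cost (your $P_i(q)=qF_i^{-1}(q)$ is exactly the paper's $\Phi_i$), solve it with the Lagrangian posted-price argument of \Cref{thm:exante}, and average the threshold over its ironed interval to get a two-price lottery at the interval's endpoints that is feasible for the original program with the same value; your added care about ties at the critical multiplier (flat pieces of $\bar\phi_i$ making ironed expenditure jump in $\lambda$, resolved by placing the threshold inside the flat piece) is a detail the paper leaves implicit and does not change the argument.
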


\paragraph{Hard-budget Approximately Optimal Mechanism.}
\Cref{mech:ex-post} requires a \emph{single} posted price for each agent. Under irregular cost distributions, however, the optimal ex-ante mechanism from \Cref{thm:exante-irregular} may randomize for some agent $i$ between two prices. To reconcile this with \Cref{mech:ex-post}, we define the sets $H$ and $L$ over all candidate prices rather than over agents, and modify $\indprice^H$ and $\indprice^L$ by weighting each term with the probability with which each candidate price is used. The mechanism will then use the modified $\indprice^H, \indprice^L$ to decide which branch of \Cref{mech:ex-post} to execute; it then realizes one price for each agent and runs the selected branch only on the realized agents whose sampled price belongs to that branch. We denote this mechanism by $\constmech^{\mathrm{rand}}$ (defined formally as \Cref{mech:ex-post-rand} in Appendix~\ref{app:bayesian-extension}), and provide the proof sketch for the following corollary:

\begin{corollary}
Let $(\boldsymbol{\expay}, \boldsymbol{\theta})$ be an optimal solution of \Cref{thm:exante-irregular} on the truncated distributions. The utility-maximizing ex-post budget-feasible auction $\constmech^{\mathrm{rand}}(\alpha_0,\beta_0;\boldsymbol{\expay}, \boldsymbol{\theta})$ achieves a constant approximation ratio of at least $0.2015$ to the utility-optimal ex-post budget-feasible $\expost$.
\end{corollary}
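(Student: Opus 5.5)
The plan is to reduce to \Cref{thm:ex-post-approx} by viewing each (agent, candidate price) pair as a ``virtual agent'' and showing that the two lemmas behind that theorem survive the extra randomness of sampling prices. Concretely, let $(\boldsymbol{\expay},\boldsymbol{\theta})$ be the optimal ex-ante solution from \Cref{thm:exante-irregular} on $F^{\le B}$. Then every candidate price satisfies $\expay_{i,j}\le B$, and its ex-ante utility $\sum_i\sum_j \theta_{i,j}(v_i-\expay_{i,j})F_i(\expay_{i,j})$ equals $\exante(F^{\le B})\ge \expost$ by \eqref{ineq:bayesian-relaxation}. Its expected spend $\sum_i\sum_j\theta_{i,j}\expay_{i,j}F_i(\expay_{i,j})$ is at most $B$; here $\theta_{i,1}=\theta_i$ and $\theta_{i,2}=1-\theta_i$. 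The modified $\indprice^H,\indprice^L$ are exactly this sum restricted to the price-pairs with $\expay_{i,j}\ge B/\alpha_0$ and $<B/\alpha_0$, respectively. The branch choice is deterministic given the prices, so I will split into the same two cases as in \Cref{thm:ex-post-approx} and show that each branch, run on the realized agents whose sampled price lies in that class, recovers the corresponding fraction of $\indprice^H$ or $\indprice^L$.

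\textbf{Case 1 (high branch).} Condition on nothing and view agent $i$ as being ``active at price $\expay_{i,j}$'' with probability $\theta_{i,j}$, and then accepting with probability $F_i(\expay_{i,j})$. The effective acceptance probability of agent $i$ in $H$ is then $\tilde q_i=\sum_{j:\expay_{i,j}\in H}\theta_{i,j}F_i(\expay_{i,j})$, and these events are independent across agents. To rerun the proof of \Cref{lem:ex-post-greedy-approximation}, I order price-pairs by decreasing $v_i-\expay_{i,j}$ and lower bound the mechanism's utility by the utility of the first acceptance. Two pairs of the same agent are mutually exclusive events, so the probability that none of the first $\ell$ pairs accepts is still at most $\prod(1-q)\le e^{-Q^\ell}$. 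Here I use that, for an agent with mutually exclusive pairs, $1-\sum q\le\prod(1-q)$ is the wrong direction, but $1-\sum q\le e^{-\sum q}$ still holds directly. The budget bound $Q\le\alpha$ follows as before from ex-ante feasibility and prices $\ge B/\alpha$. This yields the factor $\frac{1-e^{-\alpha}}{\alpha}(1-\frac1\beta)$. The subtlety is that the mechanism orders realized agents by $v_i-\expay_{i,\cdot}$ rather than by pairs, but the realized order coincides with the pair order restricted to active pairs, so the argument goes through.

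\textbf{Case 2 (low branch).} Here I apply \Cref{lem:ex-post-k-approximation} with $k=\alpha$, treating pairs as independent agents. This is the main obstacle, since two pairs of the same agent are negatively correlated (exactly one is active), not independent. My first attempt will be to argue that the proof of that lemma (in the appendix) only uses a concentration or Poisson-type bound on total spend, of the form $\Pr[\text{spend before reaching pair}>B-B/k]$. Replacing independent Bernoullis by a sum of independent agent-level variables, each bounded by $B/k$, only reduces variance and keeps the same mean. So the same bound holds, because the worst case for those tail bounds is independent Bernoulli sums with given means; see Hoeffding's comparison for sums of bounded independent variables. If that route is delicate, the fallback is to condition on the realized price vector. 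Conditionally, the agents in $L$ face single prices and are independent, but ex-ante feasibility then holds only on average over the realization, so the conditional approach needs a convexity argument and is less clean.

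Combining the two cases gives the same minimum expression as in \Cref{thm:ex-post-approx}. Substituting $\alpha_0,\beta_0$ from \Cref{cor:constant-approx} gives at least $0.2015\cdot\exante(F^{\le B})\ge0.2015\cdot\expost$. Ex-post feasibility and DSIC are immediate, since the mechanism is a sequential posted-price mechanism with prices chosen independently of reports and offered only when the remaining budget suffices.
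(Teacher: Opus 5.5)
Your Case 1 is the paper's argument (\Cref{lem:ex-post-greedy-approximation-randomized}). You group candidate prices by agent, bound each agent's no-acceptance probability by $1-Q_i^\ell\le e^{-Q_i^\ell}$, and multiply across independent agents. Your aside that $1-\sum q\le\prod(1-q)$ is ``the wrong direction'' is mistaken: it holds and points the right way, though nothing depends on it.

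Case 2 is where you genuinely diverge from the paper. The paper treats \Cref{lem:ex-post-k-approximation} as a black box. It splits each randomized agent into independent copies, one per candidate price with acceptance probability $\theta_{i,j}\expostquant_{i,j}$, and applies the lemma to that deterministic instance. It then appeals to \citep[Theorem~32]{BH15} for the claim that removing the mutual exclusivity only increases budget contention, and so only lowers utility.

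You instead reopen the lemma's proof on the randomized instance, and this works. Order pairs by $(v_i-\expostpay_{i,j})/\expostpay_{i,j}$; the mechanism's order on realized agents is this order restricted to active pairs. Then the pointwise bound $u_{\mech}(S)\ge(1-\frac1k)u_B(S)$ and both telescoping identities use no independence. Independence enters only in \Cref{lem:L-lower-bound}, where each prefix spend is $\sum_i Z_i$ with $Z_i\in[0,B/k]$ independent across agents.

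You should state the needed fact precisely, though. The proof never uses a tail bound like $\Pr[\cdot>B-B/k]$, and convex comparisons would not give one; ``smaller variance'' is also not enough. What you need is that $Z_i$ is dominated in convex order by $\frac{B}{k}\mathrm{Bern}\bigl(\frac{k}{B}\E[Z_i]\bigr)$, which controls $\E[\min(B,\cdot)]=\E[\cdot]-\E[(\cdot-B)^+]$. Equivalently, the paper's coordinate-replacement step goes through verbatim for an arbitrary $[0,B/k]$-valued $Z_i$: concavity of $h$ with $h(0)=0$ gives $\E[h(Z_i)]\ge\frac{k}{B}\E[Z_i]\,h(B/k)$. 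Ex-ante feasibility then gives $\sum_i\frac{k}{B}\E[Z_i]\le k$, and the Poisson step is unchanged.

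The paper's route is modular but rests on an outside comparison theorem. Yours is self-contained and makes that comparison unnecessary, at the cost of re-verifying each step of the lemma for correlated candidate prices.
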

\begin{proof}[Proof sketch]
To prove this result we need to consider the two cases of the mechanism. In the high-price case, where the mechanism orders by utility (Line~4), the arguments in the proof of \Cref{lem:ex-post-greedy-approximation} go through by viewing each randomized price as a candidate price whose acceptance probability and expected payment are weighted by its mixing probability (for the full details, we refer to \Cref{lem:ex-post-greedy-approximation-randomized} in Appendix~\ref{app:bayesian-extension}). In the low-price case, where the mechanism orders by utility over price
(Line~6), we can apply \Cref{lem:ex-post-k-approximation} to a deterministic
instance: each agent with a randomized price is replaced by one copy for each candidate price, with the same value and appropriately weighted (by $\theta_i$) acceptance probability. This preserves ex-ante utility and expected payments, but removes the mutual exclusivity among price realizations, thereby (potentially) increasing ex-post budget contention. We can then prove that the deterministic instance has weakly lower expected budget-feasible utility than the original randomized instance (see \citep[Theorem~32]{BH15} for the full argument).
\end{proof}

\subsubsection{Different Objectives}

\paragraph{Generalized Utility Objective.}

The mechanisms and guarantees in \Cref{sec:soft-budget,sec:hard-budget} extend to any additively separable affine objective with nonnegative weights on values and payments: for coefficients $\{\generala_i,\generalb_i\}_{i\in[n]}$, maximize the \emph{generalized utility} 
\[\mathbb{E}_{\mathbf{c}}\!\left[\sum_i(\generala_i v_i x_i(\mathbf{c})-\generalb_i p_i(\mathbf{c}))\right]\] under the same DSIC-IR and budget constraints.

Under regularity, the optimal soft-budget mechanism remains independent posted pricing (and under irregularity, possibly randomizing over two prices for each agent); the hard-budget reduction applies verbatim by replacing per-agent utility with $g_i=\generala_i v_i-\generalb_i \expay_i$, yielding the same constant-factor approximation (details in Appendix~\ref{app:bayesian-extension}).

\paragraph{Ironing limitations.}
Ironing worked above because both the objective and the (ex-ante) budget constraint depend on the interim rule through the same linear function $\int \phi_i(q)\,x_i(q)\,dq$ (and thus $\int (v_i-\phi_i(q))x_i(q)\,dq$), so averaging on ironed intervals preserved value. If the objective is not linear in $\phi_i(\cdot)x_i(\cdot)$, then Lemma~\ref{lem:ironing-inequality} need not be tight for the objective and constraint simultaneously: flattening to preserve feasibility can strictly decrease the objective, while averaging to preserve the objective can violate the budget constraint. This suggests that our generalized utility family may capture the natural frontier of objectives for which ex-ante optimality via ironing continues to hold.

\paragraph{Other objectives.}
Regularity in \Cref{sec:soft-budget} was used to ensure that each single-agent Lagrangian subproblem is optimized by a \emph{single} posted price; more generally, this conclusion continues to hold for any objective that is additive across agents and (weakly) decreasing in cost, so the resulting ex-ante optimal mechanism remains independent posted pricing (and can again be used as input to our hard-budget reduction to obtain an approximately optimal ex-post mechanism). Additionally our ex-post approximation framework does not require exact optimality of the ex-ante solution: for any additive objective admitting independent posted prices that are ex-ante budget feasible, individually ex-post feasible (i.e., each price is at most $B$), and achieve an $\alpha$-approximation to the corresponding ex-ante truncated benchmark, these prices can be used as input to $\constmech$ to obtain a constant-factor approximation to the optimal ex-post value.

\section{Discussion}

In this work, we extend the literature on budget-feasible procurement to more general objectives. In the prior-free setting, we provide a class of extremely simple, efficient mechanisms with constant-factor guarantees for welfare (or generalizations where seller costs are weighted by $\alpha \geq 0$). In the Bayesian setting, which is necessary for the objective of buyer utility, we also provide a class of extremely simple, efficient mechanisms with constant-factor guarantees, this time for any objective where value and payment are weighted by $a_i, b_i \geq 0$ respectively. 

There are many interesting directions for future work. One interesting direction may be to split the gap between the prior-free and Bayesian settings by examining prior-independent mechanism design or learning from samples: both (1) how many samples are required to effectively learn the sellers' cost distributions to approximately implement this optimization and (2) whether one can effectively use samples to run a mechanism without learning the distributions first. Both of these directions have been tackled for other areas of mechanism design (e.g., revenue maximization, see, for example, the introduction of~\citep{guo2019settling}). 
Armed with the solution of the Bayesian problem to use as a benchmark, one can now investigate sample complexity questions and best understand how sample-complexity bounds trade off with approximations and fit into the bigger picture.

Another interesting direction is to study simple mechanisms for general objectives beyond the additive (knapsack) setting. While additive valuations are frequently seen in practice, there are also some scenarios where they do not capture a buyer’s value, i.e., where the marginal value of procured goods might depend on other procured goods. In such cases, the mechanisms will need to be adapted. However, we believe that basic ideas used in our mechanisms, such as (marginal) value-over-cost ordering and pricing based on soft budget relaxation, will be useful in such settings as well. 

One other interesting question is to consider production constraints on the part of the sellers. For instance, in the example of API access to language models, in some cases, specialized LLMs might have limited compute, which might introduce an additional layer of competition among buyers and which buyers need to take into account when setting prices. However, the way to model this would be in a multi-buyer setting, which, in addition to a procurement setting, becomes a two-sided market with budgeted buyers. This would be a very interesting problem for future work, introducing all of the additional complexities that come with two-sided markets.

\bibliographystyle{plainnat} 
\bibliography{bf_mb,masterbib}

\appendix
\section{Missing proofs of \Cref{sec:priorfree}} \label{app:prior-free}

\subsection{Generalized Welfare Maximization} \label{app:generalized-welfare}

We consider the problem of maximizing a generalized objective subject to a hard budget constraint $B$. Let each agent $i$ have a public valuation $v_i$ and a private cost $c_i$. We define the weight of agent $i$ as $w_i = v_i - \alpha c_i$ for a known scalar $\alpha \ge 0$.  Our goal is to maximize $G(\sum_{i \in S} w_i)$, where $G: \mathbb{R}_{\ge 0} \to \mathbb{R}_{\ge 0}$ is a non-decreasing, concave and normalized ($G(0)=0$) function. 

Our mechanism is an adaptation of the welfare-maximizing mechanism that incorporates the parameter $\alpha$ into the selection criteria while maintaining budget feasibility via value-based thresholding.

As before, $\fopt(\cdot, \cdot)$ denotes the value of the optimal fractional solution to the Knapsack problem given a set of weight-cost tuples and capacity $B$.

\begin{algorithm}[htbp]
\caption{Generalized-Welfare Budget Feasible Auction}
\begin{algorithmic}[1]
\label{mech:generalized}
\STATE \textbf{Input:} Reported costs $\mathbf{c} = (c_1, \dots, c_n)$, Known valuations $\mathbf{v} = (v_1, \dots, v_n)$, Budget $B$, parameter $\alpha>0$
\STATE \textbf{Output:} Allocation set and payments

\vspace{0.2cm}

\STATE Let $F = \{i : c_i \le \min(v_i/\alpha, B)\}$ \COMMENT{Restrict to sellers with positive welfare and feasible cost.}
\STATE Rename bidders in $F$ such that $\frac{v_1}{c_1} \ge \frac{v_2}{c_2} \ge \ldots \ge \frac{v_{|F|}}{c_{|F|}}.$
\STATE Set $W \leftarrow \emptyset$, $k \leftarrow 1$ 

\vspace{0.2cm}

\WHILE{$k\le|F|$ and $\frac{c_k}{v_k} \le \frac{B}{\sum_{j \le k} v_j}$}
    \STATE $W \leftarrow W \cup \{k\}$
    \STATE $k \leftarrow k+1$
\ENDWHILE

\vspace{0.2cm}
\STATE For every $i\in F$, let $w_i=v_i-\alpha\cdot c_i$
\vspace{0.2cm}
\STATE Let $i^* \in \arg\max_{i \in F} w_i$
\STATE Compute $\fopt'=\fopt(\{(w_i,c_i)\}_{i\in F\setminus\{i^*\}}, B)$\COMMENT{Compute the optimal fractional solution of all agents but $i^*$.}
\IF{$w_{i^*} > \fopt'/\left(1+\sqrt{2}\right)$} 
    \STATE Allocate $\{i^*\}$ \COMMENT{Allocate $i^*$ if $i^*$ has high welfare}
\ELSE
    \STATE Allocate the set $W$\COMMENT{Otherwise, allocate $W$}
\ENDIF

\vspace{0.2cm}

\STATE Pay agents using Myersonian payments
\end{algorithmic}
\end{algorithm}

\subsection{Analysis}

As the analysis of the generalized mechanism closely resembles the analysis for the welfare objective, we shortly state how the proof would adapt to this case. 

\begin{theorem}[Budget Feasibility and Truthfulness]
\Cref{mech:generalized} is DSIC, IR, and Ex-Post Budget Feasible.
\end{theorem}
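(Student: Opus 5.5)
The proof follows that of \Cref{lem:pf_truthful_payments} almost verbatim. The only differences are the filter $F=\{i: c_i\le \min(v_i/\alpha,B)\}$ and the weights $w_i=v_i-\alpha c_i$. Crucially, the construction of $W$ is unchanged: it still sorts by $v_i/c_i$ and uses the value-based condition $c_k/v_k\le B/\sum_{j\le k}v_j$. Hence \Cref{lem:W-threshold} and \Cref{obs:decrease_still_included} carry over essentially word for word, with $T_i=v_i B/\sum_{t\in W}v_t$ as before. The one point to recheck is the membership case in \Cref{lem:W-threshold}. If $\hat c_i>\min(v_i/\alpha,B)$, then $i\notin F$ and so $i\notin\hat W$. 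Otherwise the greedy-ordering argument applies unchanged, since it never uses the filter bound. Deleting $i$ from $F$ can only relax the conditions faced by agents in $W_{-i}$, so $W_{-i}\subseteq\hat W$ still holds.

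\textbf{Monotonicity, and hence DSIC-IR.} I would repeat the case analysis of \Cref{lem:pf_truthful_payments}. Lowering $c_i$ keeps $i$ in $F$, because both $v_i/\alpha$ and $B$ are fixed. It also strictly increases $w_i=v_i-\alpha c_i$, since $\alpha>0$. Therefore, if $i=i^*$, it remains $i^*$. In that case $\fopt'$ is unchanged, because it depends only on agents other than $i^*$. Thus the test $w_{i^*}>\fopt'/(1+\sqrt2)$ stays true whenever it was true. If instead $i^*$ was allocated through $W$, then \Cref{obs:decrease_still_included} gives that $W$ is unchanged, so $i^*$ is still allocated either way.

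If $i\neq i^*$ and $i\in W$ was allocated, then $W$ is unchanged after the decrease. If $i$ now becomes the argmax, it is allocated in either branch. Otherwise $i^*$ and its weight are unchanged, but $\fopt'$ may change because $(w_i,c_i)$ improved. Since the fractional knapsack optimum is monotone under improving an item (higher weight, lower cost), $\fopt'$ can only increase. So the condition $w_{i^*}\le\fopt'/(1+\sqrt2)$ persists and $W$ is still allocated. This $\fopt'$-monotonicity point is implicit in the original proof; here I would state it explicitly. By \Cref{lem:myerson}, Myersonian payments then give DSIC-IR.

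\textbf{Budget feasibility.} Suppose first that $W$ is allocated. As in \Cref{lem:threshold-no-allocation}, any $i\in W$ reporting $\hat c_i>T_i$ leaves $W$. If $i\ne i^*$, it cannot become $i^*$, because its weight decreases. If $i=i^*$, its weight decreases while $\fopt'$ is unchanged, so the $W$-branch condition persists and $i$ is unallocated. \Cref{obs:threshold-ub} then gives $p_i\le T_i$, and summing yields $\sum_{i\in W}T_i=B$. Suppose instead that $i^*$ is allocated alone. Any report above $B$ removes it from $F$, so $p_{i^*}\le B$ by \Cref{obs:threshold-ub}.

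The main obstacle is the robustness of \Cref{lem:W-threshold} and of the monotonicity argument to the new $\alpha$-dependent filter and weights. Both reduce to the observations above: $W$ depends only on values and the budget, weights are decreasing in reported cost, and $\fopt'$ is monotone in other agents' items.
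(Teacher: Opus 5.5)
Your proposal is correct and follows the paper's own route: monotonicity via the case analysis of \Cref{lem:pf_truthful_payments}, and the bound $p_i\le T_i$ via \Cref{lem:W-threshold,lem:threshold-no-allocation}. Your explicit remark that $\fopt'$ can only increase when a non-$i^*$ agent lowers its cost fills in a step the paper leaves implicit.

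One small imprecision, shared with the paper's proof of \Cref{lem:threshold-no-allocation}: when $i^*$ raises its report above $T_{i^*}$ it may stop being the argmax, and then $\fopt'$ does change. In that case, however, $i$ can only be allocated through $\hat W$, from which \Cref{lem:W-threshold} excludes it, so your conclusion still holds.
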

\begin{proof}
The proof of monotonicity is identical to that in \Cref{lem:pf_truthful_payments}, which implies DSIC-IR as we use Myersonian payments. An identical proof to that of \Cref{lem:threshold-no-allocation} establishes that when $W$ is allocated, every agent $i\in W$ receives payment of at most $T_i=v_i \cdot \frac{B}{\sum_{j \in W} v_j}$, implying a total payment of at most $\sum_{i\in W} T_i\le B$. When $i^*$ is the only agent allocated, then, as we filter agents with cost larger than $B$, budget-feasibility is immediate.
\end{proof}

\begin{theorem}[Approximation Guarantee]
\Cref{mech:generalized} provides a $\frac{1}{2+\sqrt{2}}$-approximation for the objective $G(\sum_{i\in S} w_i)$.
\end{theorem}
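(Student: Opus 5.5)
The argument has two parts. First, show that the output set $S$ satisfies $\sum_{i\in S} w_i \ge \beta \sum_{i\in S^*} w_i$ with $\beta = 1/(2+\sqrt{2})$, where $S^*$ maximizes $\sum_{i\in S'} w_i$ over budget-feasible sets $S'$. Second, lift this to $G$. Agents outside $F$ have $w_i<0$ or $c_i>B$, so they can be dropped from any optimal set without loss; $G$ is non-decreasing, so this holds for the $G$-objective as well. Hence $\opt_G = G(\sum_{i\in S^*} w_i)$ with $S^*\subseteq F$. Concavity together with $G(0)=0$ gives $G(\beta x)\ge \beta G(x)$ for $\beta\in(0,1)$ and $x\ge 0$. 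Combined with monotonicity, this yields $G(\sum_{S} w_i)\ge G(\beta\sum_{S^*}w_i)\ge \beta\, G(\sum_{S^*}w_i)$.

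For the linear part, I will replay \Cref{lem:appx-lemma1}, \Cref{lem:appx-lemma2} and \Cref{thm:pf-appx} with $w_i=v_i-\alpha c_i$.

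\emph{Ordering.} Since $w_i/c_i = v_i/c_i-\alpha$, the value-over-cost order used to build $W$ coincides with the weight-over-cost order. So $\fopt(\{(w_i,c_i)\}_{i\in F},B)=\sum_{i\le \ell}w_i+\mathbb{I}_{\ell<|F|}w'_{\ell+1}$, with $\ell$ and $c'_{\ell+1}$ defined exactly as before, and \Cref{lem:appx-lemma1} holds verbatim.

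\emph{Case $i^*$ is allocated.} This case is unchanged: $(2+\sqrt2)w_{i^*}\ge \fopt'+w_{i^*}\ge\fopt\ge\opt$.

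\emph{Case $W$ is allocated.} The same chain as before applies: $\opt\le \fopt \le 2\sum_{i\le k}w_i + w_{k+1}\le 2\alg+\fopt/(1+\sqrt2)$, or the simpler bound when $k=\ell$. This requires the analogue of \Cref{lem:appx-lemma2}: $\sum_{i\le k+1}w_i\ge \sum_{k+1\le i\le\ell}w_i+\mathbb{I}_{\ell<|F|}w'_{\ell+1}$.

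This analogue is the main obstacle, because the chain in the original proof used $w_i/c_i+1=v_i/c_i$ and $w_i=v_i-c_i$. I will redo the contradiction with $\alpha$. Assume the inequality fails. Then
\[
\frac{v_{k+1}}{c_{k+1}}=\frac{w_{k+1}}{c_{k+1}}+\alpha \ge \frac{\text{tail}}{B}+\alpha>\frac{\sum_{i\le k+1}w_i}{B}+\alpha=\frac{\sum_{i\le k+1}v_i}{B}+\alpha\,\frac{B-\sum_{i\le k+1}c_i}{B}.
\]
The first inequality uses \Cref{lem:appx-lemma1} together with the tail cost being at most $B$. When $\ell>k$ we have $\sum_{i\le k+1}c_i\le B$, so the last term is nonnegative for $\alpha\ge0$. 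This gives $c_{k+1}/v_{k+1}<B/\sum_{i\le k+1}v_i$, contradicting the maximality of $k$. The step needing care is checking that $\alpha$ enters with the right sign everywhere, including $\alpha=0$ and $\alpha>1$; the argument above seems uniform in $\alpha\ge0$. For $\alpha=0$ the filter $c_i\le v_i/\alpha$ is read as vacuous.

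Finally, the bound $w_{i^*}\le \fopt'/(1+\sqrt2)\le\fopt/(1+\sqrt2)$ in the $W$ branch is exactly the branching condition. Rearranging gives $\alg\ge \opt/(2+\sqrt2)$, which completes the proof via the $G$-lifting above.
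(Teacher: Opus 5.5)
Your proposal is correct and follows essentially the same route as the paper. You reuse \Cref{lem:appx-lemma1} via the identical weight-over-cost and value-over-cost ordering, and you redo the contradiction in \Cref{lem:appx-lemma2} using $v_{k+1}/c_{k+1}=w_{k+1}/c_{k+1}+\alpha$, so the extra term $\alpha(B-\sum_{i\le k+1}c_i)/B$ is nonnegative. You then rerun the two-case argument of \Cref{thm:pf-appx} and lift to $G$ via monotonicity together with $G(\beta x)\ge\beta G(x)$, which follows from concavity and $G(0)=0$.
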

\begin{proof}
We first consider the objective of maximizing $\sum_i w_i$ subject to budget-feasibility, where $\opt$ is the optimal sum of weights, and $\alg$ is the sum of weights of agents allocated by our mechanism. We analogously define $k$ and $\ell$ as in \Cref{sec:priorfree}. The proof of \Cref{lem:appx-lemma1} is identical, as the sorting of weights-over-costs is identical to the sorting of values-over-costs. The statement of \Cref{lem:appx-lemma2} stays the same, but the proof needs minor adaptations, as now $v_{k+1}/c_{k+1}=w_{k+1}/c_{k+1}+\alpha$, and applying the contradictory assumption and budget feasibility yields that 
\begin{eqnarray*}
v_{k+1}/c_{k+1}& > &\frac{\sum_{i=1}^{k+1}v_i}{B} + \frac{\alpha(B-\sum_{i=1}^{k+1}c_i)}{B} \\& \ge& \frac{\sum_{i=1}^{k+1}v_i}{B}.
\end{eqnarray*}
As in \Cref{lem:appx-lemma2}, rearranging shows that agent $k+1$ should have been added to $W$, implying a contradiction.

With \Cref{lem:appx-lemma1,lem:appx-lemma2}, the proof of approximation holds, and we can show $\alg\ge \opt/(2+\sqrt{2})$. In order to generalize to concave, monotone, and normalized objective $G(\sum_i w_i)$, we note that since $G$ is monotone, $G(\opt)$ is the optimal solution for the new objective. The value our mechanism obtains is:
\begin{eqnarray*}
    G(\alg) \ge G(\opt/(2+\sqrt{2})) \\\ge   G(\opt)/(2+\sqrt{2}),
\end{eqnarray*}
where the last inequality holds because $G$ is concave and $G(0)=0$.
\end{proof}

\section{Additional Material from Section~\ref{sec:bayesian}}

\subsection{Missing Proofs from Subsection \ref{sec:soft-budget}}
\label{app:soft-budget}

\thmexante*

\begin{proof}

    The proof parallels the one in \citep{BH15}, which is closely related to the one in \citep{EG14}, and is presented for completeness. 

    The optimal (randomized) allocation rule $x(\cdot)$ induces a distribution over sets of allocated agents $S$; call this distribution $\mathcal{D}$. Then let $\mathbf{q}$ be the marginal probability that each agent is allocated, i.e., $q_i = \Pr_{S \sim \mathcal{D}}[i 
    \in S]$. We now state a crucial lemma from \citep{BH15} that restricts the optimal mechanism. 
    
    \begin{lemma}
    \label{lem:expected-payment-lower-bound}
    Let agent $i$ have cost $c_i$ drawn from a regular distribution $F_i$.
    For any incentive-compatible mechanism that allocates to agent $i$ with (ex-ante) probability $q_i$, the expected payment paid to agent $i$ is at least $q_i \expay_i$, where $\expay_i = F_i^{-1}(q_i)$.
    \end{lemma}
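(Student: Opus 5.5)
We work in quantile space: let $q = F_i(c)$ and write the interim allocation of agent $i$ as a function of its quantile, $x_i(q)=\E_{\costs_{-i}}[x_i(F_i^{-1}(q),\costs_{-i})]$. By monotonicity (Myerson's Lemma, \Cref{lem:myerson}) this function is non-increasing in $q$, it takes values in $[0,1]$, and $\int_0^1 x_i(q)\,dq = q_i$. The expected payment is $\E[\phi_i(c_i)x_i(c_i)]$. In quantile space this becomes $\int_0^1 R_i'(q)\,x_i(q)\,dq$, where $R_i(q) = q\,F_i^{-1}(q)$ is the expected payment of posting the price $F_i^{-1}(q)$. To check this, note that $\frac{d}{dq}\bigl(qF_i^{-1}(q)\bigr) = F_i^{-1}(q) + q/f_i(F_i^{-1}(q)) = \phi_i(F_i^{-1}(q))$. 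Regularity says $\phi_i$ is non-decreasing in cost, and $F_i^{-1}$ is increasing, so $R_i'(q)$ is non-decreasing in $q$. Hence $R_i$ is convex on $[0,1]$ with $R_i(0)=0$.

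The goal is to show $\int_0^1 R_i'(q)x_i(q)\,dq \ge R_i(q_i) = q_i\expay_i$. The cleanest route is a layer-cake decomposition. Any non-increasing $x_i:[0,1]\to[0,1]$ can be written as a mixture of step functions $\mathbb{1}[q\le t]$, where the threshold $t$ is drawn from a distribution $\mu$ on $[0,1]$ with $\mu(t \ge s) = x_i(s)$. Each step function corresponds to posting price $F_i^{-1}(t)$ and has payment $R_i(t)$. So the expected payment equals $\E_{t\sim\mu}[R_i(t)]$, while the allocation probability is $\E_{t\sim\mu}[t]=q_i$. Jensen's inequality for the convex $R_i$ then gives $\E_\mu[R_i(t)]\ge R_i(\E_\mu[t]) = R_i(q_i)$, which is the claim.

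An alternative direct route avoids the decomposition. Since $R_i'$ is non-decreasing and $x_i$ is non-increasing with total mass $q_i$, moving allocation mass from higher quantiles down to lower quantiles can only decrease $\int R_i' x_i$. A rearrangement inequality then shows the integral is minimized by $x_i=\mathbb{1}[q\le q_i]$, which gives $\int_0^{q_i}R_i' = R_i(q_i)$.

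The main obstacle is technical rather than conceptual. One must justify the quantile change of variables and the payment identity $\E[p_i]=\E[\phi_i x_i]$ at the interim level, handle atoms or flat regions of $F_i$ (where $F_i^{-1}$ must be taken as the generalized inverse), and treat the boundary term at $q=0$. I expect these details to be routine under the density assumptions implicit in the regularity definition.
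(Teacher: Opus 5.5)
The paper gives no proof of this lemma; it imports it from \citet{BH15}. So there is no in-paper argument to match, and your proof is correct and self-contained.

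The standard justification is essentially your second route. Use $\E[p_i]=\E[\phi_i(c_i)x_i(c_i)]$; this becomes $\ge$ if IR is slack at the top cost, which only helps. With the allocation mass $q_i$ fixed, payment is minimized by placing that mass on the lowest virtual costs, which under regularity are the lowest costs. With $g=R_i'$ non-decreasing and $\int_0^1 x_i=q_i$, this is one line:
\[
\int_0^1 g\,x_i\,dq-\int_0^{q_i} g\,dq=\int_0^1\bigl(g(q)-g(q_i)\bigr)\bigl(x_i(q)-\mathds{1}[q\le q_i]\bigr)\,dq\ \ge 0,
\]
because the two factors share a sign pointwise. This bathtub argument needs only $0\le x_i\le 1$, not monotonicity of $x_i$.

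Your primary route (layer-cake plus Jensen) spends monotonicity to represent every IC rule as a lottery over posted prices. In exchange it exposes the geometry the paper later relies on. Your $R_i$ is exactly the paper's $\Phi_i(q)=\int_0^q\phi_i$, and regularity is convexity of $\Phi_i$. Running the same Jensen step against the lower convex envelope $\bar\Phi_i$ gives the irregular analogue $\E[p_i]\ge\bar\Phi_i(q_i)$, attained by the two-price lotteries of \Cref{thm:exante-irregular}.

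One correction to your closing paragraph: atoms and flat regions of $F_i$ are not routine details that a generalized inverse absorbs. An atom at $c_0$ above the bottom of the support puts a concave kink in $R_i$: on the atom's quantile range $R_i(q)=qc_0$ has slope $c_0<\phi_i(c_0^-)$. A gap in the support makes $R_i$ jump upward. Either way convexity fails, and so does the statement.

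For example, take a uniform $[0,2]$ cost capped at $B=1$, which puts an atom of mass $1/2$ at $1$. This is the same capping used to build $F^{\le B}$ in \Cref{sec:hard-budget}. Post price $1/2$ with probability $1/3$ and price $1$ with probability $2/3$. This allocates with probability $3/4$ yet pays $17/24<3/4=q_iF_i^{-1}(q_i)$.

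So the assumption that $F_i$ has a positive density on an interval is genuinely used and should be stated explicitly. Under it your proof is complete, once the leftover mass $1-x_i(0^+)$ of $\mu$ is placed on the null threshold $t=0$.
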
 

    Lemma~\ref{lem:expected-payment-lower-bound} implies that among all incentive-compatible mechanisms that induce the same marginal allocation probabilities $\mathbf{q}$, independent posted pricing minimizes the expected payments. Consequently, for any feasible mechanism, there exists an independent posted-price mechanism that induces the same marginals, expends weakly less budget, and achieves the same expected utility. As a result, the mechanism design problem reduces to optimizing directly over marginal allocation probabilities $\mathbf{q}$, with the understanding that any optimal choice of marginals can be implemented by an independent posted-price mechanism.

    Following the methodology of \citep{BH15}, we adopt a Lagrangian virtual-surplus framework for Bayesian budget-feasible mechanism design. In contrast to their value objective, our objective is expected \emph{utility}. Introducing a Lagrange multiplier $\lambda \ge 0$ for the budget constraint, the resulting Lagrangian objective is

\[
\max_{x}\; \lambda B + \sum_i \mathbb{E}_{\mathbf{c}}\!\left[\bigl(v_i - (\lambda+1)\phi_i(c_i)\bigr)\,x_i(\mathbf{c})\right].
\]

For any fixed $\lambda$, this objective decomposes pointwise and is maximized by selecting agent $i$ whenever $v_i \ge (\lambda+1)\phi_i(c_i)$, which assuming regularity induces an independent monotone allocation rule with threshold $\expay_i = \phi_i^{-1}\!\bigl(v_i/(\lambda+1)\bigr)$, that is, a posted price. As such, the maximizer of the Lagrangian objective satisfies IC and IR as well, so it is also a maximizer of the original program.

Let $\hat{q}_i = F_i(\expay_i)$ denote the acceptance probability of the price $\expay_i$. The expected payment is
$\sum_i \expay_i \hat{q}_i$, which is monotonically decreasing in $\lambda$.
Hence, we can (efficiently) binary search for the Lagrange multiplier for which $\sum_i \expay_i \hat{q}_i = B$.

\end{proof}

\subsection{Missing Proofs from Subsection~\ref{sec:hard-budget}} 
\label{app:hard-budget} 

    We now present the proof of the approximation guarantee of the standalone mechanism $\mathcal{M}$.

    \BHlemma*
    
    \begin{proof} 
    For the analysis, let $N$ denote the agents ordered in decreasing ratio $(v_i-\expostpay_i)/\expostpay_i$. 
        
    In this proof of Lemma~\ref{lem:ex-post-k-approximation}, we will use the concept of ``\fku''. If we fix the payment to each agent $i$ as $\expostpay_i$, then our utility maximization problem is just a knapsack optimization problem, constrained by our budget $B$ and payments. However, if each payment is accepted with (independent) ex-ante probability $\expostquant_i$, and we allow for fractional allocations, this is instead the independent fractional knapsack problem, which admits a greedy solution: serve agents in decreasing order of utility-per-payment $\frac{v_i-\expostpay_i}{\expostpay_i}$. The utility of this fractional solution is what we call \fku, and is formally defined as:
        
\begin{equation}
\label{eq:fractional-knapsack-definition}
\begin{aligned}
    \fractutil (\boldsymbol{\expostpay}) &= \E_{S\sim ind(\boldsymbol{\expostquant})}\left[u_B(S)\right]\\ &= 
    \E_{S\sim ind(\boldsymbol{\expostquant})}\left[\sum_{i \in N}
    \frac{v_i-\expostpay_i}{\expostpay_i}
    \left(\min\left\{B,\ \sum_{j\in S\cap\{1,\dots,i\}} \expostpay_j\right\} - \min\left\{B,\ \sum_{j\in S\cap\{1,\dots,i-1\}} \expostpay_j\right\} \right) \right],
\end{aligned}
\end{equation}

where $S$ denotes the \emph{realized} set of agents that will accept their price should they be offered it and $ind(\boldsymbol{\expostquant})$ the product distribution, where each agent is realized independently with probability $q_i$.

Let us denote the expected utility of the mechanism $\mech$ as $U_\mech(\boldsymbol{\expostpay}) = \E_{S\sim ind(\boldsymbol{\expostquant})}[u_\mech(S)]$. This expected utility is almost equal to the optimal independent \fku  $\fractutil (\boldsymbol{\expostpay})$  (losing out only on the contribution of the (at most) one agent that would be fractionally served). We first prove this pointwise, for a fixed set $S$. Let the fractional agent be agent $\ell$, and them being served fractionally at $\gamma\in[0,1)$. We lower bound the \fku $u_B(S)$:

    \begin{equation}
    \label{ineq:fractional-utility-lower-bound}
        u_B(S) = \sum_{i \in S\cap\{1,\dots,\ell-1\}}  \frac{v_i - \expostpay_i}{\expostpay_i} \cdot \expostpay_i + \gamma \cdot \frac{v_\ell - \expostpay_\ell}{\expostpay_\ell} \cdot \expostpay_\ell \geq \frac{v_\ell - \expostpay_\ell}{\expostpay_\ell} \left( \sum_{i \in S\cap\{1,\dots,\ell-1\}} \expostpay_i  + \gamma \expostpay_{\ell}\right) = \frac{v_\ell - \expostpay_\ell}{\expostpay_\ell} \cdot B ,
    \end{equation}

    where the inequality uses the fact that agents are ordered by decreasing $\frac{v_i - \expostpay_i}{\expostpay_i}$, and the last equality uses the fact that for a fractional agent to exist, the expenditure for all agents should be equal to the budget $B$.

    We now lower bound the mechanism's utility using the \fku:    
    \begin{equation}
    \label{ineq:mechanism-to-fractional}        
    u_{\mech}(S) \geq u_B(S) - \frac{v_\ell - \expostpay_\ell}{\expostpay_\ell} \cdot \expostpay_\ell \geq \left(1 - \frac{\expostpay_\ell}{B}\right) \cdot u_B(S) \geq \left(1 - \frac{1}{k} \right) \cdot u_B(S),
    \end{equation}
    where the first inequality is because $\mathcal{M}$ loses out at most the utility of the fractional agent compared to the \fku, the second inequality is due to \Cref{ineq:fractional-utility-lower-bound}, and the third inequality is due to the $\expostpay_\ell \leq \frac{B}{k}$.

    Taking expectations over the set $S$ yields the desired lower bound for mechanism's $\mech$ expected utility:
    \begin{equation}
        \label{ineq:case2-mechanism-lower-bound}
        U_\mech(\boldsymbol{\expostpay}) \geq \left(1 - \frac{1}{k} \right) \fractutil (\boldsymbol{\expostpay})
    \end{equation}    
We now focus on the performance of the optimal independent \fku  $\fractutil (\boldsymbol{\expostpay})$. We can rewrite $\fractutil (\boldsymbol{\expostpay})$ starting from its definition \eqref{eq:fractional-knapsack-definition} as:
\begin{equation}
\label{eq:fractional-knapsack-utility-bang-per-buck}
\begin{aligned}
    \fractutil (\boldsymbol{\expostpay}) 
    & =
    \E_{S\sim ind(\boldsymbol{\expostquant})} [u_B(S)] \\
    & =
    \E_{S\sim ind(\boldsymbol{\expostquant})}\left[\sum_{i \in N}
    \frac{v_i-\expostpay_i}{\expostpay_i}
    \left(\min\left\{B,\ \sum_{j\in S\cap\{1,\dots,i\}} \expostpay_j\right\} - \min\left\{B,\ \sum_{j\in S\cap\{1,\dots,i-1\}} \expostpay_j\right\} \right) \right] \\
    & = 
    \sum_{i \in N} \left(\left(\frac{v_i - \expostpay_i}{\expostpay_i} - \frac{v_{i+1} - \expostpay_{i+1}}{\expostpay_{i+1}}\right) \cdot \E_{S\sim ind(\boldsymbol{\expostquant})} \left[ \min \left(B,\sum_{j\in S \cap \{1,\dots,i\}} \expostpay_j \right) \right] \right).
\end{aligned}
\end{equation}
The third equality follows from rearranging the terms in a telescopic sum,  using linearity of expectation, and using the convention that $\frac{v_{n+1} - \expostpay_{n+1}}{\expostpay_{n+1}} = 0$.

Now, for our benchmark, we will be comparing the performance of our mechanism to $\indprice(\boldsymbol{\expostpay})$, the \ippu of the prices $\boldsymbol{\expostpay}$. The \ippu can be rewritten as: 

\begin{equation}
\label{eq:ex-ante-utility-bang-per-buck}
\begin{aligned}
\indprice(\boldsymbol{\expostpay}) = \mathbb{E}_{S\sim ind(\boldsymbol{\expostquant})}[u(S)]
&= \mathbb{E}_{S\sim ind(\boldsymbol{\expostquant})}\!\left[\sum_{i\in S}\bigl(v_i-\expostpay_i\bigr)\right] \\
&= \mathbb{E}_{S\sim ind(\boldsymbol{\expostquant})}\!\left[\sum_{i\in S}\expostpay_i\cdot \frac{v_i-\expostpay_i}{\expostpay_i}\right] \\
&= \mathbb{E}_{S\sim ind(\boldsymbol{\expostquant})}\!\left[\sum_{i\in N}\left(\frac{v_i-\expostpay_i}{\expostpay_i}\right)\cdot 
\left(\sum_{j\in S\cap\{i\}}\expostpay_j\right)\right] \\
&= \sum_{i\in N}\left(\frac{v_i-\expostpay_i}{\expostpay_i}-\frac{v_{i+1}-\expostpay_{i+1}}{\expostpay_{i+1}}\right)\cdot 
\mathbb{E}_{S\sim ind(\boldsymbol{\expostquant})}\left[\sum_{j\in S\cap\{1,\dots,i\}}\expostpay_j\right],
\end{aligned}
\end{equation}

where the last equality is rearranging terms in a telescopic sum and uses linearity of expectation.

We now present a technical lemma that we will use to lower bound the optimal independent \fku with the \ippu (the proof of which we will present later).

\begin{restatable}{lemma}{lbound}
\label{lem:L-lower-bound}
For any budget $B>0$, $k\ge 1$, prices $\expostpay_j \le B/k$ and marginal probabilities $q_j$ that satisfy the ex-ante budget constraint $\sum q_j \cdot \expostpay_j \leq B$, we prove
\[
\frac{ \mathbb{E}_{S\sim ind(\boldsymbol{\expostquant})} \left[\min \left(B,\sum_{j\in S \cap \{1,\dots,i\}} \expostpay_j \right) \right]}{\mathbb{E}_{S\sim ind(\boldsymbol{\expostquant})} \left[\sum_{j\in S \cap \{1,\dots,i\}} \expostpay_j\right]} \geq \frac{ \mathbb{E}_{S\sim ind(\boldsymbol{\expostquant'})} \left[\min \left(B,\sum_{j\in S \cap \{1,\dots,i\}} \frac{B}{k} \right) \right]}{\mathbb{E}_{S\sim ind(\boldsymbol{\expostquant'})} \left[\sum_{j\in S \cap \{1,\dots,i\}} \frac{B}{k}\right]}
\geq 
\left(1-e^{-k}\frac{k^{\lfloor k\rfloor}}{\lfloor k\rfloor!}\right),
\]
where $q'_j = \frac{k}{B}\cdot\expostpay_j\cdot q_j$. 
\end{restatable}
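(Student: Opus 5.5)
The plan is to prove the two inequalities separately. Throughout, write $X=\sum_{j\in S\cap\{1,\dots,i\}}\expostpay_j$, a sum of independent variables $X_j$ with $X_j=\expostpay_j$ with probability $q_j$ and $0$ otherwise. The denominators of both ratios coincide: $\E[X]=\sum_{j\le i}q_j\expostpay_j=\sum_{j\le i}q'_j\cdot\frac{B}{k}$. So the first inequality reduces to a comparison of the numerators $\E[\min(B,\cdot)]$. Note that $q'_j\le q_j\le 1$ because $\expostpay_j\le B/k$, so $q'$ is a valid probability vector.

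\emph{First inequality.} I would replace the variables one at a time. Variable $X_j$ becomes $X'_j\in\{0,B/k\}$ with $\Pr[X'_j=B/k]=q'_j$. Then $\E[X'_j]=\E[X_j]$, and $X'_j$ is a mean-preserving spread of $X_j$. Concretely, couple them so that $X_j=0\Rightarrow X'_j=0$, and conditioned on $X_j=\expostpay_j$ set $X'_j=B/k$ with probability $\frac{k\expostpay_j}{B}\in[0,1]$. Then $\E[X'_j\mid X_j]=X_j$. Conditioning on all other (independent) summands $Y$, the map $x\mapsto\min(B,Y+x)$ is concave. Conditional Jensen therefore gives $\E[\min(B,Y+X_j)]\ge\E[\min(B,Y+X'_j)]$. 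Iterating over $j\le i$ yields the first inequality.

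\emph{Second inequality.} After scaling by $k/B$, the right-hand ratio equals $\E[\min(k,N)]/\E[N]$, where $N$ is a sum of independent Bernoulli$(q'_j)$ variables with mean $\mu=\sum_{j\le i}q'_j=\frac{k}{B}\sum_{j\le i}q_j\expostpay_j\le k$ by the ex-ante budget constraint. I would proceed in three steps.

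\begin{enumerate}
\item Replace each Bernoulli$(q)$ by an independent Poisson$(q)$. For any concave $f$ on $\mathbb{Z}_{\ge0}$, the chord bound $f(n)\le f(0)+n(f(1)-f(0))$ gives $\E f(\mathrm{Poi}(q))\le f(0)+q(f(1)-f(0))=\E f(\mathrm{Bern}(q))$. Apply this conditionally on the other summands with $f(n)=\min(k,Y+n)$. After all replacements, $N$ becomes Poisson$(\mu)$ with the same mean, and $\E[\min(k,N)]$ only decreases.
\item Show that $g(\mu)=\E[\min(k,\mathrm{Poi}(\mu))]$ satisfies that $g(\mu)/\mu$ is non-increasing. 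Since $g'(\mu)=\Pr[\mathrm{Poi}(\mu)\le\lceil k\rceil-1]$, which is decreasing in $\mu$, the function $g$ is concave with $g(0)=0$. Hence the worst case is $\mu=k$.
\item Compute $1-\E[(N-k)^+]/k$ for $N\sim\mathrm{Poi}(k)$. Let $m=\lfloor k\rfloor+1$ and use $\sum_{n\ge m}n\Pr[N=n]=k\Pr[N\ge m-1]$. This gives $\E[(N-k)^+]=k\Pr[N=\lfloor k\rfloor]=k\,e^{-k}k^{\lfloor k\rfloor}/\lfloor k\rfloor!$. The ratio is then exactly $1-e^{-k}\frac{k^{\lfloor k\rfloor}}{\lfloor k\rfloor!}$.
\end{enumerate}

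The step I expect to be the main obstacle is the Bernoulli-to-Poisson comparison combined with the monotonicity in $\mu$. The integer-versus-non-integer behaviour of $k$ enters there through $\min(k,\cdot)$ on integers, so the derivative formula and the chord bound must be checked with care. I would handle it with the concavity arguments above. I would also note the degenerate case $\mu=0$, where both sides are read by the limiting convention.
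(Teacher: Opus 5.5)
Your proof is correct and has the same skeleton as the paper's. Both replace $(q_j,\expostpay_j)$ by $(q'_j,B/k)$ one coordinate at a time, then Poissonize, then show monotonicity in $\mu$, then evaluate at $\mu=k$. Your mean-preserving-spread/conditional-Jensen step is the paper's chord inequality $q_\ell h(\expostpay_\ell)\ge q'_\ell h(B/k)$ for $h(x)=\min(B,t+x)-\min(B,t)$. Your final value $\E[(N-k)^+]=k\Pr[N=\lfloor k\rfloor]$ agrees with the paper's.

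Where you differ is in justifying the two middle steps:
\begin{itemize}
\item \emph{Bernoulli-to-Poisson comparison.} The paper cites the convex-order lemma of \citet{CSV14} (Hoeffding) applied to $(t-k)^+$. You prove the needed special case directly, via the discrete chord bound and one-at-a-time replacement. This makes the argument self-contained at no real cost.
\item \emph{Monotonicity of $\E[\min(k,Y_\mu)]/\mu$.} The paper writes $A(\mu)=\mu\Pr[Y_\mu\le m-1]+k\Pr[Y_\mu\ge m+1]$ and computes the derivative of $A(\mu)/\mu$ explicitly. It then bounds it using $\Pr[Y_\mu\ge m+1]\ge\frac{\mu}{m+1}\Pr[Y_\mu=m]$ and $\delta\le 1$. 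Your route, concavity of $g$ together with $g(0)=0$, is shorter and avoids that bookkeeping.
\end{itemize}

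One correction: your formula $g'(\mu)=\Pr[Y_\mu\le\lceil k\rceil-1]$ holds only for integer $k$. Write $k=m+\delta$ with $\delta\in(0,1)$. The increment $\min(k,n+1)-\min(k,n)$ equals $\delta$ at $n=m$, so
\[
g'(\mu)=\Pr[Y_\mu\le m-1]+\delta\Pr[Y_\mu=m]=(1-\delta)\Pr[Y_\mu\le m-1]+\delta\Pr[Y_\mu\le m],
\]
which is the paper's $A'(\mu)$. This is still non-increasing in $\mu$, being a convex combination of two Poisson CDFs at fixed points. More generally, $g'(\mu)=\E[\Delta f(Y_\mu)]$ with $\Delta f$ non-increasing and $Y_\mu$ stochastically increasing in $\mu$. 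So concavity of $g$ holds, and the rest of your argument goes through unchanged.
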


Using this \Cref{lem:L-lower-bound}, we now compare the optimal independent \fku $\fractutil(\boldsymbol{\expostpay})$ to the benchmark $\indprice(\boldsymbol{\expostpay})$:

\begin{equation}
    \label{ineq:case2-mechanism-middle-bound}
    \begin{aligned}
        \fractutil(\boldsymbol{\expostpay}) & = \sum_{i \in N} \left(\left(\frac{v_i - \expostpay_i}{\expostpay_i} - \frac{v_{i+1} - \expostpay_{i+1}}{\expostpay_{i+1}}\right) \cdot \mathbb{E}_{S\sim ind(\boldsymbol{\expostquant})} \left[ \min \left(B,\sum_{j\in S \cap \{1,\dots,i\}} \expostpay_j \right) \right] \right) \\ 
        & \geq
        {\sum_{i \in N} \left(\left(\frac{v_i - \expostpay_i}{\expostpay_i} - \frac{v_{i+1} - \expostpay_{i+1}}{\expostpay_{i+1}}\right) \cdot \left(1-e^{-k}\frac{k^{\lfloor k\rfloor}}{\lfloor k\rfloor!}\right)
        \cdot \mathbb{E}_{S\sim ind(\boldsymbol{\expostquant})} \left[ \sum_{j\in S \cap \{1,\dots,i\}} \expostpay_j \right] \right)} \\
        & = 
        \left(1-e^{-k}\frac{k^{\lfloor k\rfloor}}{\lfloor k\rfloor!}\right)
        \cdot \indprice(\boldsymbol{\expostpay})
    \end{aligned}
\end{equation}

    We now express mechanism's $\mech$ expected utility guar    antee:

    \[
    U_{\mech}(\boldsymbol{\expostpay}) \geq \left(1 - \frac{1}{k} \right) \cdot \fractutil(\boldsymbol{\expostpay}) \geq \left(1 - \frac{1}{k} \right) \cdot  
    \left(1-e^{-k}\frac{k^{\lfloor k\rfloor}}{\lfloor k\rfloor!}\right)
    \cdot \indprice(\boldsymbol{\expostpay}),
    \]

    where the first inequality is due to \Cref{ineq:case2-mechanism-lower-bound}, the second is due to \Cref{ineq:case2-mechanism-middle-bound}.
    
\end{proof}

We now present the proof of the technical Lemma~\ref{lem:L-lower-bound}. 

\begin{proof}

    Define:
    
    \begin{minipage}{0.49\textwidth}
    \[
    L_i = \frac{ \mathbb{E}_{S\sim ind(\boldsymbol{\expostquant})} \left[\min \left(B,\sum_{j\in S \cap \{1,\dots,i\}} \expostpay_j \right) \right]}{\mathbb{E}_{S\sim ind(\boldsymbol{\expostquant})} \left[\sum_{j\in S \cap \{1,\dots,i\}} \expostpay_j\right]},    \]
    \end{minipage}\hfill
    \begin{minipage}{0.49\textwidth}
    \[
    L'_i  = \frac{ \mathbb{E}_{S\sim ind(\boldsymbol{\expostquant'})} \left[\min \left(B,\sum_{j\in S \cap \{1,\dots,i\}} \frac{B}{k} \right) \right]}{\mathbb{E}_{S\sim ind(\boldsymbol{\expostquant'})} \left[\sum_{j\in S \cap \{1,\dots,i\}} \frac{B}{k}\right]}.
    \]
    \end{minipage}
    
    We first prove that $L_i \geq L_i'$, and then prove that $L_i' \geq 
    \left(1-e^{-k}\frac{k^{\lfloor k\rfloor}}{\lfloor k\rfloor!}\right)
    $.

        \paragraph{Proving $L_i \geq L'_i$.} 
        
        We prove the first inequality by an iterative coordinate replacement argument. Fix an index $\ell\le i$ and replace $(q_\ell,\expostpay_\ell)$ by $\bigl(q_\ell',\frac{B}{k} \bigr)$, where $q_\ell' := \frac{k}{B}\cdot\expostpay_\ell \cdot q_\ell $, while keeping all other coordinates fixed. Note that by this transformation $q_\ell' \leq q_\ell \leq 1$.

        Notice that this replacement preserves the \emph{denominator}, since
        $q_\ell'\cdot \frac{B}{k} = q_\ell \expostpay_\ell$, and all other terms remain unchanged.
    
        For the \emph{numerator}, define function $h(x) = \min(B,t+x) - \min(B,t)$, which for any fixed constant \(t\ge 0\) is concave in $x$. Now, condition on the realization of all other marginals besides the $\ell$-th one, which fixes set $S$ (besides the $\ell$-th agent). This fix induces some $t$ such that $\sum_{j\in S \cap \{1,\dots,i\}\setminus\{\ell\}} \expostpay_j = t$. We now quantify the expected contribution of agent $\ell$ with parameters $(q_\ell,\expostpay_\ell)$ and with parameters $\bigl(q_\ell',\frac{B}{k} \bigr)$:
        \begin{align*}
            \mathbb{E}_{q_\ell} \left[\min \left(B, t + \expostpay_\ell \right)  \middle| \sum_{j\in S \cap \{1,\dots,i\}\setminus\{\ell\}} \expostpay_j = t \right] 
            & = q_\ell \cdot \min \left(B, t + \expostpay_\ell \right) + (1-q_\ell) \cdot \min \left(B, t \right) \\
            & = q_\ell h(\expostpay_{\ell}) + \min(B,t) \\
            & \geq q'_\ell h(\frac{B}{k}) + \min(B,t) \\
            & =             \mathbb{E}_{q'_\ell} \left[\min \left(B, t + \frac{B}{k} \right)  \middle| \sum_{j\in S \cap \{1,\dots,i\}\setminus\{\ell\}} \expostpay_j = t \right], 
        \end{align*}
    
        where we have used the concavity property $h(\lambda x + (1-\lambda)y) \geq \lambda h(x) + (1-\lambda)h(y)$ with $x = \frac{B}{k}$, $y=0$, $\lambda = \frac{k}{B}\cdot {\expostpay_\ell}$ and $h(0) = 0$. 
    
        Taking expectations over the realizations of $t$ (or equivalently $S$) proves the claim for a single coordinate change, and iteratively applying this approach for all coordinates yields the desired inequality.
    
        As a result, we have proven the \emph{first inequality} of the lemma $L_i \geq L'_i$, as required.

\paragraph{Proving the lower bound on $L'_i$.}

We rewrite $L'_i$ as follows
\[
L'_i
=
\frac{\E\left[\min\left(B,\frac{B}{k}X\right)\right]}
{\E\left[\frac{B}{k}X\right]}
=
\frac{\E[\min(k,X)]}{\E[X]}
=
\frac{\E[\min(k,X)]}{\mu},
\]

where $X$ is the sum of independent Bernoulli random variables with probability $q'_j$ and $\mu = \E[X]$. 

We will make use of the following lemma from \citep{CSV14} to bound the numerator using the expectation of a Poisson random variable. 

\begin{lemma}\citep{CSV14}
\label{lem:hoeffding-poisson-convex-order}
Let $S=\sum_{i=1}^n X_i$ be a sum of independent Bernoulli random variables,
and let $\mu=\E[S]$. Let $Y\sim\mathrm{Pois}(\mu)$ be a Poisson random
variable with the same mean. Then, for every convex function
$g:\mathbb{N}\to\mathbb{R}$,
\[
\E[g(S)]\le \E[g(Y)].
\]
\end{lemma}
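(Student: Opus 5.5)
The plan is to prove the comparison one Bernoulli summand at a time. The argument rests on two facts. First, each single Bernoulli variable is dominated in the convex order by a Poisson variable with the same mean. Second, convex-order domination is preserved when an independent summand is added. Chaining these gives the claim, because a sum of independent Poissons is Poisson with the summed mean.

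\emph{Step 1 (single coordinate).} Let $X_i\sim\mathrm{Ber}(p_i)$ and $Y_i\sim\mathrm{Pois}(p_i)$. A convex function $g:\mathbb{N}\to\mathbb{R}$ has non-decreasing increments $g(m+1)-g(m)$. Summing increments therefore gives, for every integer $m\ge 0$,
\[
g(m)\ \ge\ g(0)+m\bigl(g(1)-g(0)\bigr).
\]
So $g$ lies above the line $\ell(m)=g(0)+m(g(1)-g(0))$ on all of $\mathbb{N}$, with equality at $m=0,1$. Taking expectations under the Poisson law gives
\[
\E[g(Y_i)]\ \ge\ \E[\ell(Y_i)]\ =\ g(0)+p_i\bigl(g(1)-g(0)\bigr).
\]
Since $X_i$ is supported on $\{0,1\}$, where $\ell$ agrees with $g$, this right-hand side equals $\E[\ell(X_i)]=\E[g(X_i)]$. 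We assume expectations are well defined, possibly $+\infty$ on the Poisson side, which only helps. Note that $\E[\ell(Y_i)]$ is finite because $\ell$ is affine.

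\emph{Step 2 (stability under independent sums).} Suppose $U\le_{cx}V$ and $Z$ is independent of both. Condition on $Z=z$. The map $m\mapsto g(m+z)$ is again convex on $\mathbb{N}$, so Step 1's conclusion (or more generally the hypothesis $U\le_{cx}V$) gives $\E[g(U+z)]\le \E[g(V+z)]$. Integrating over $z$ yields $\E[g(U+Z)]\le\E[g(V+Z)]$.

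\emph{Step 3 (hybrid argument).} Take $Y_1,\dots,Y_n$ independent of each other and of the $X_i$. Define the hybrids $S_j=\sum_{i\le j}Y_i+\sum_{i>j}X_i$, so $S_0=S$. Applying Steps 1 and 2 with $Z=\sum_{i<j}Y_i+\sum_{i>j}X_i$ gives $\E[g(S_{j-1})]\le\E[g(S_j)]$ for each $j$. Telescoping gives $\E[g(S)]\le \E[g(S_n)]$. Finally, $S_n=\sum_i Y_i\sim\mathrm{Pois}(\sum_i p_i)=\mathrm{Pois}(\mu)$, so $S_n$ has the law of $Y$.

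The only delicate point is Step 1: choosing the chord through $0$ and $1$ as the supporting line, and checking that integrability issues cannot reverse the inequality. Since the lower bound is affine and Poisson has a finite mean, this is harmless. The paper needs this for $g(x)=\min(k,x)$, which is concave, so it would use the lemma with $g=-\min(k,\cdot)$ to obtain $\E[\min(k,X)]\ge\E[\min(k,Y)]$.
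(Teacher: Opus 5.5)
The paper gives no proof of this lemma; it imports it from the cited reference, so there is no in-paper argument to match. Judged on its own, your proof is correct and complete.

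The key step is Step 1. Because $0$ and $1$ are adjacent integers, the chord through $(0,g(0))$ and $(1,g(1))$ lies below $g$ at every integer. This gives $\E[g(\mathrm{Ber}(p))]\le\E[g(\mathrm{Pois}(p))]$ in one line. In fact it shows $\mathrm{Ber}(p)$ is convex-order minimal among all integer-valued laws with mean $p$. Closure of the convex order under independent convolution, the hybrid argument, and Poisson additivity then finish the proof.

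The one technicality you leave implicit is in Step 2: exchanging conditioning and expectation when values may be $+\infty$. The same chord handles it. The bound $g(m)\ge g(0)+m(g(1)-g(0))$ means $g^-$ grows at most linearly. Every hybrid $S_j$ has finite mean. So all expectations lie in $(-\infty,+\infty]$ and Tonelli/Fubini apply.

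Your route differs from the classical one due to Hoeffding. That route first shows that equalizing the success probabilities increases $\E[g(S)]$, i.e.\ that $S$ is dominated in convex order by $\mathrm{Bin}(n,\mu/n)$, typically via a two-coordinate smoothing argument. Only then does it compare the binomial with the Poisson. You skip the binomial entirely. That makes the argument shorter and fully elementary, at the cost of the sharper intermediate binomial bound. The paper never needs that bound, since it only uses the Poisson tail $\E[(Y_\mu-k)^+]$.

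One small remark on usage: the paper applies the lemma with the convex function $g(t)=(t-k)^+$ and then uses $\min(k,t)=t-(t-k)^+$ together with equality of means. Your choice $g=-\min(k,\cdot)$ is equivalent.
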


Let \(Y_\mu\sim \mathrm{Pois}(\mu)\). Apply the lemma to $L'_i$ for the convex function $g(t)=(t-k)^+$: 

\[
L'_i \geq \frac{\E[\min(k,X)]}{\mu}
=
\frac{\mu-\E[(X-k)^+]}{\mu}
\ge
\frac{\mu-\E[(Y_\mu-k)^+]}{\mu}
=
\frac{\E[\min(k,Y_\mu)]}{\mu}.
\]

\begin{claim}
    Let \(Y_\mu\sim \mathrm{Pois}(\mu)\). Function $h(\mu)=\frac{\E[\min(k,Y_\mu)]}{\mu}$ is non-increasing on \((0,k]\).
\end{claim}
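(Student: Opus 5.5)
The plan is to write $m(\mu):=\E[\min(k,Y_\mu)]$ and show that $m$ is concave on $[0,\infty)$ with $m(0)=0$. Then $h(\mu)=m(\mu)/\mu$ is the slope of the chord from the origin to $(\mu,m(\mu))$. For a concave function vanishing at $0$, this chord slope is non-increasing in $\mu$. This gives the claim on $(0,k]$, and in fact on all of $(0,\infty)$.

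For concavity, I would differentiate the Poisson expectation in $\mu$. With $p_j(\mu)=e^{-\mu}\mu^j/j!$, we have the standard identity $p_j'(\mu)=p_{j-1}(\mu)-p_j(\mu)$ (with $p_{-1}\equiv 0$). For any function $g$ of polynomial growth this yields
\[
\frac{d}{d\mu}\E[g(Y_\mu)]=\E[g(Y_\mu+1)-g(Y_\mu)].
\]
Taking $g(t)=\min(k,t)$, the increment $g(t+1)-g(t)$ equals $1$ for $t+1\le k$, equals $k-t\in(0,1)$ for the single integer $t$ with $t<k<t+1$ (if $k$ is not an integer), and equals $0$ for $t\ge k$. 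So $\Delta g$ is a non-increasing function of $t$ on the integers. Applying the identity again gives
\[
m''(\mu)=\E[\Delta^2 g(Y_\mu)]\le 0,
\]
since the second difference of a concave sequence is non-positive. Hence $m$ is concave; equivalently, $m'(\mu)=\Pr[Y_\mu<k]$ up to the fractional correction, and this is decreasing in $\mu$.

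With concavity in hand, I would conclude directly. Since
\[
h'(\mu)=\frac{\mu m'(\mu)-m(\mu)}{\mu^2},
\]
and concavity together with $m(0)=0$ gives $m(\mu)=\int_0^\mu m'(s)\,ds\ge \mu\, m'(\mu)$, it follows that $h'\le 0$.

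The main obstacle is justifying the term-by-term differentiation of the Poisson series. This is routine here because $g$ is bounded by $k$, so the series converges uniformly on compact sets. Beyond that, one only needs care with non-integer $k$ when checking that the forward differences are monotone. If needed, I would alternatively avoid calculus and express $m(\mu)=\sum_{j\ge0}\Delta g(j)\Pr[Y_\mu>j]$, then use that each $\Pr[Y_\mu>j]/\mu$ behaves appropriately. The derivative route is the one I would commit to.
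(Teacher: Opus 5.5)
Your proof is correct, but you reach the key inequality $\mu A'(\mu)\le A(\mu)$, where $A(\mu)=\E[\min(k,Y_\mu)]$, by a different route than the paper.

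The paper uses the Poisson derivative identity only once. With $m=\lfloor k\rfloor$ and $\delta=k-m$, it writes $A(\mu)=\mu\Pr[Y_\mu\le m-1]+k\Pr[Y_\mu\ge m+1]$ and $A'(\mu)=\Pr[Y_\mu\le m-1]+\delta\Pr[Y_\mu=m]$. The numerator of $h'$ is then explicitly $\mu\delta\Pr[Y_\mu=m]-k\Pr[Y_\mu\ge m+1]$. The paper shows this is non-positive using the one-term tail bound $\Pr[Y_\mu\ge m+1]\ge\frac{\mu}{m+1}\Pr[Y_\mu=m]$ together with $\delta\le k/(m+1)$.

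You instead apply the identity twice to get $A''(\mu)=\E[\Delta^2 g(Y_\mu)]\le 0$. You then use the fact that a concave function vanishing at $0$ has non-increasing chord slope from the origin.

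Your version is more conceptual and more general. It shows that $\E[g(Y_\mu)]/\mu$ is non-increasing on all of $(0,\infty)$ for any concave $g$ on the integers with $g(0)=0$ and polynomial growth. It also avoids the somewhat ad hoc comparison of $\delta$ with $k/(m+1)$. The paper's version is more hands-on and gives a closed form for $h'$. Neither argument actually uses $\mu\le k$.

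One caution about your calculus-free fallback. The individual terms $\Pr[Y_\mu>j]/\mu$ are \emph{not} non-increasing for $j\ge 1$; for instance, $\Pr[Y_\mu>1]/\mu=\mu/2+O(\mu^2)$ near $0$. So a termwise argument would fail. The working version groups terms as $\min(k,t)=(1-\delta)\min(m,t)+\delta\min(m+1,t)$. That still requires monotonicity of $\E[\min(j,Y_\mu)]/\mu$ for integer $j$, which is the same concavity fact. Since you commit to the derivative route, this does not affect your proof.
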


\begin{proof}
    We will show that $h'(\mu) \leq 0$ in \((0,k]\). Let \(m=\lfloor k\rfloor\) and \(\delta=k-m\). Define the auxiliary function \[A(\mu)=\E[\min(k, Y_\mu)] = \mu\Pr[Y_\mu\le m-1]+k\Pr[Y_\mu\ge m+1].\]
    
    Using the identity of Poisson random variables and convex functions
    \(\frac{d}{d\mu}\E[g(Y_\mu)]=\E[g(Y_\mu+1)-g(Y_\mu)]\), we compute the derivative of this function \[
    A'(\mu)=\Pr[Y_\mu\le m-1]+\delta\Pr[Y_\mu=m].\]
    
    Putting both together,
    \[
    h'(\mu)
    =
    \frac{\mu A'(\mu)-A(\mu)}{\mu^2}
    =
    \frac{\mu\delta\Pr[Y_\mu=m]-k\Pr[Y_\mu\ge m+1]}{\mu^2} 
    \leq \left(\delta - \frac{k}{m+1}\right) \cdot \frac{\Pr[Y_\mu=m]}{\mu} \leq 0,
    \]

    where the first inequality is because $\Pr[Y_\mu\ge m+1]\ge \Pr[Y_\mu=m+1] = \frac{\mu}{m+1}\Pr[Y_\mu=m]$ and the second because $\delta \leq 1$. 

\end{proof} 

Finally, notice that by $\sum_{j \in N} q_j \cdot \expostpay_j = \sum_{j \in N} q_j' \cdot \frac{B}{k} \leq B$, we get that $\mu = \sum_{j\leq i} q_j' \leq k$, meaning that we can argue:

\[
L'_i \geq h(\mu) \geq h(k) = \frac{\E[\min(k,Y_k)]}{k} =
1-\Pr[Y_k=\lfloor k\rfloor]
=
1-e^{-k}\frac{k^{\lfloor k\rfloor}}{\lfloor k\rfloor!},
\]

where the second inequality is by monotonicity of $h(\cdot)$ and $\mu \leq k$ and the second equality is because $\E[\min(k,Y_k)]
=
\E[Y_k\mathbf{1}\{Y_k\le m\}]
+
k\Pr[Y_k\ge m+1]$ and \(\E[Y_k\mathbf{1}\{Y_k\le m\}]=k\Pr[Y_k\le m-1]\).

\end{proof}

\begin{remark}[Correction to~\citep{BH15} Analysis] \label{rem:BH-fix}
    The methodology we have employed in proving \Cref{lem:ex-post-k-approximation,lem:L-lower-bound} uses the same tools presented in \citet{BH15}. The crucial difference (besides, of course, the difference in objective) is that in our approach, we state \Cref{lem:L-lower-bound} in the form of a lower bound, which we subsequently use in \Cref{ineq:case2-mechanism-middle-bound}. In contrast, \citet{BH15} use \Cref{lem:L-lower-bound} to argue the following quantity is minimized when $\expostpay_i = \frac{B}{k}$:
    \[\frac{\sum_{i \in N} \left(\left(\frac{v_i}{\expostpay_i} - \frac{v_{i+1} }{\expostpay_{i+1}}\right) \cdot \mathbb{E}_{S\sim ind(\boldsymbol{\expostquant})} \left[ \min \left(B,\sum_{j\in S \cap \{1,\dots,i\}} \expostpay_j \right) \right] \right)}{\sum_{i \in N} \left(\left(\frac{v_i}{\expostpay_i} - \frac{v_{i+1} }{\expostpay_{i+1}}\right) \cdot \mathbb{E}_{S\sim ind(\boldsymbol{\expostquant})} \left[ \sum_{j\in S \cap \{1,\dots,i\}} \expostpay_j \right] \right)  }.\]

In order to deduce that the global ratio is minimized at $\expostpay_i = \frac{B}{k}$ for all $i$, we would require that the coefficients are fixed (independent of $\expostpay$). As such, these arguments do not suffice to prove their claim, and so, to the best of our understanding, their claim is inaccurate. Fortunately, our approach fixes this concern and nevertheless recovers the same guarantee. As such, their proposed theorem remains valid despite this unresolved step in the original reasoning.
\end{remark}

\subsection{Missing Material from Subsection~\ref{sec:bayesian-extension}} 
\label{app:bayesian-extension}

In this subsection, we discuss ironing and showcase how it can be used to produce optimal solutions for the soft-budget feasibility problem. We also provide the formal definition of mechanism $\constmech^{\mathrm{rand}}$.

\paragraph{Ironing and Irregular Distributions.} In the previous subsections, we assumed regularity to keep the exposition focused on the budget-feasibility aspect of the \emph{ex-ante} problem. When cost distributions are irregular, we handle this with the standard ironing step: we solve the closely related \emph{ex-ante} program defined by \emph{ironed} virtual cost functions $\bar{\phi}_i(c_i)$ (which satisfy regularity and differ to $\phi_i(c_i)$ only on \emph{ironed intervals}), and then translate its solution back to the original instance with minor (and standard) modifications. We now define the ironed optimization program as well as the standard lemma that ties functions $\phi_i(\cdot)$ and $\bar{\phi}_i(\cdot)$.
    \begin{equation}
    \label{form:bayesian-utility-virtual-ironed}
    \begin{aligned}
    \max_{x}\;& \mathbb{E}_\mathbf{c} \left[\sum_{i\in [n]} (v_i - \bar{\phi}_i(c_i)) \cdot x_i(\mathbf{c}) \right]\\
    \text{s.t. }\;& \mathbb{E}_\mathbf{c} \left[\sum_{i\in [n]} \bar{\phi}_i(c_i)x_i(\mathbf{c})\, \right] \leq B, \\ 
    & x_i(c_i, c_{-i}) \geq x_i(c_i', c_{-i}) &\ \forall\, c_i \leq c_i',\forall c_{-i} ,  \forall i \in [n]\,\, \\
    & 0 \leq x_i (\mathbf{c}) \leq 1 &\ \forall\, \mathbf{c}, \forall i \in [n].    
    \end{aligned}    
    \end{equation}

    For the remainder of the subsection, as is customary in Bayesian mechanism design, we translate our problem from cost space (where seller costs are our Bayesian random variables) to quantile space, where $q = F_i(c_i)$, $c_i = F_i^{-1}(q)$. Note that higher costs correspond to higher quantiles, and thus, the monotonicity of the allocation rule in cost translates to monotonicity in quantiles. We additionally express the virtual cost function as $\phi_i(q) = c_i(q) + \frac{F_i(c_i(q))}{f_i(c_i(q))} = F_i^{-1}(q) + \frac{q}{f_i(F_i^{-1}(q))}$. To facilitate the ironing procedure, we define the cumulative cost function $\Phi_i(q) = \int_{0}^q \phi_i(t) \, dt$ and the ironed cumulative cost function $\bar{\Phi}_i(q)$ to be the \emph{lower} convex envelope of $\Phi_i(q)$ \emph{pinned at the endpoints} (that is $\bar{\Phi}_i(0) = \Phi_i(0)$ and $\bar{\Phi}_i(1) = \Phi_i(1)$) and finally the ironed virtual cost function as $\bar{\phi}_i(q) = \frac{d}{dq}\bar{\Phi}_i(q)$.

    We now prove the following standard lemma that ties the expenditure of an allocation rule, under functions $\phi_i(\cdot)$ and $\bar{\phi}_i(\cdot)$.

    \begin{lemma}
    \label{lem:ironing-inequality}
     For any non-increasing allocation function $x(\boldsymbol{\expostquant})$ it holds:
     \begin{equation}
        \E_{\boldsymbol{\expostquant}}[ \phi_i(\expostquant_i)x(\boldsymbol{\expostquant})] \geq \E_{\boldsymbol{\expostquant}}[ \bar{\phi}_i(\expostquant_i)x(\boldsymbol{\expostquant})],
    \end{equation}

    with equality holding as long as $x'(\boldsymbol{\expostquant}) =  0$ on all ironed interval.
    \end{lemma}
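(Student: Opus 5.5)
The plan is to reduce the statement to a single-agent, one-dimensional integration by parts in quantile space. Fix an agent $i$ and condition on $\boldsymbol{\expostquant}_{-i}$. The map $y(q):=x(q,\boldsymbol{\expostquant}_{-i})$ is then a non-increasing function of $q\in[0,1]$. It suffices to prove, for every such $y$,
\[
\int_0^1 \phi_i(q)\,y(q)\,dq \;\ge\; \int_0^1 \bar{\phi}_i(q)\,y(q)\,dq .
\]
The lemma then follows by taking expectation over $\boldsymbol{\expostquant}_{-i}$, since quantiles are independent and uniform. I will use the notation of the statement, so $\phi_i=\Phi_i'$ and $\bar{\phi}_i=\bar{\Phi}_i'$.

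First, I would integrate each side by parts, treating $y$ as a function of bounded variation and writing $dy$ for its (Lebesgue--Stieltjes) derivative. This gives
\[
\int_0^1 \phi_i\,y\,dq \;=\; \Phi_i(1)y(1)-\Phi_i(0)y(0)-\int_0^1 \Phi_i(q)\,dy(q),
\]
and the same identity with $\bar{\Phi}_i$ in place of $\Phi_i$. The boundary terms agree because of the pinning $\bar{\Phi}_i(0)=\Phi_i(0)$ and $\bar{\Phi}_i(1)=\Phi_i(1)$. The difference of the two sides is therefore
\[
\int_0^1 \bigl(\bar{\Phi}_i(q)-\Phi_i(q)\bigr)\,dy(q).
\]

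Second, I would determine the sign of this integral. The lower convex envelope satisfies $\bar{\Phi}_i\le \Phi_i$ pointwise, so the integrand $\bar{\Phi}_i-\Phi_i$ is non-positive. Because $y$ is non-increasing, $dy\le 0$ as a measure. Their product is therefore non-negative, and so
\[
\int \phi_i\,y\,dq \;-\; \int \bar{\phi}_i\,y\,dq \;=\; \int (\bar{\Phi}_i-\Phi_i)\,dy \;\ge\; 0,
\]
which is the claimed inequality.

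For the equality case, $\bar{\Phi}_i-\Phi_i$ vanishes outside the open ironed intervals. If $y$ is constant on each ironed interval, i.e.\ $x'=0$ there, then $dy$ puts no mass on those intervals and the integral is exactly zero. The main obstacle I expect is a technical one: justifying integration by parts when $y$ may jump, including at an endpoint of an ironed interval. I would handle this either by using the Stieltjes form directly, noting that jumps at endpoints are harmless since $\bar{\Phi}_i=\Phi_i$ there, or by approximating $y$ with smooth non-increasing functions and passing to the limit by dominated convergence.
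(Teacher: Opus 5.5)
Your proposal is correct and follows essentially the same argument as the paper: integrate by parts in quantile space, cancel the boundary terms using the pinning $\bar{\Phi}_i(0)=\Phi_i(0)$ and $\bar{\Phi}_i(1)=\Phi_i(1)$, and sign the remaining integral using $\Phi_i \ge \bar{\Phi}_i$ together with the monotonicity of the allocation. Your additional steps—conditioning on the other agents' quantiles and using the Lebesgue--Stieltjes form so that step allocations with jumps (such as posted prices) are handled rigorously—only make explicit details that the paper glosses over.
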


    \begin{proof}
    We express in quantiles, write the expectations with integrals, and integrate by parts the difference of the functions:
    \[
    \int_{0}^1 (\phi_i(q) - \bar{\phi}_i(q))\cdot x(q) \, dq = \big[(\Phi_i(q) - \bar{\Phi}_i(q))x(q)\big]\bigg|_{0}^1 + \int_{0}^1 (\Phi_i(q) - \bar{\Phi}_i(q)) \cdot (-x'(q)) \, dq \geq 0.
    \]

    Note that by construction $\Phi_i(0) =\bar{\Phi}_i(0)$ and $\Phi_i(1) =\bar{\Phi}_i(1)$, which implies that the first term vanishes. Notice that the second term is positive since $x(q)$ is non-increasing and $\Phi_i(q) - \bar{\Phi}_i(q) \geq 0$ by definition. Finally, notice that if $x'(q) = 0$ on every ironed interval, the second term is in fact $0$.
    \end{proof}

We now proceed with proving the theorem about the optimal mechanism that satisfies the budget constraint ex-ante:

\thmexantegeneral*

\begin{proof}

We start by comparing the optimization formulations induced by the virtual cost functions $\phi_i(\cdot)$ and the ironed virtual cost functions $\bar{\phi}_i(\cdot)$. \Cref{lem:ironing-inequality} allows us to compare feasibility between \eqref{form:bayesian-utility-virtual} and \eqref{form:bayesian-utility-virtual-ironed}: for any monotone allocation rule, evaluating expected expenditure with $\bar{\phi}_i$ is never more restrictive than evaluating it with $\phi_i$. Consequently, every solution feasible for the original program \eqref{form:bayesian-utility-virtual} remains feasible for the ironed one \eqref{form:bayesian-utility-virtual-ironed}. In addition, for the same reason, the ironed objective weakly \emph{upper bounds} the original objective when evaluated at the same allocation rule. Therefore, the optimal value of the ironed program upper bounds the optimal value of the original program. 

Now, it suffices to show that an optimal solution of the ironed program can be modified into a feasible solution for the original program while preserving its (ironed) objective value; combining this with the above upper bound yields that this solution is optimal for the original program. Since the ironed program has \emph{regular} virtual cost functions $\bar{\phi}_i$, its optimum decomposes across agents (as per \Cref{thm:exante}) into independent posted prices $\expay$; thus, it is enough to carry out this modification at the level of a \emph{single-agent}, and then apply it independently to each agent.

Fix an agent $i$. Let $x_i^*(q)$ be the optimal solution of the ironed program~\eqref{form:bayesian-utility-virtual-ironed} for agent $i$ (which corresponds to a posted price $\expay_i$, or equivalently a quantile $q_i^*$). We now convert $x_i^*(\cdot)$ into a solution feasible for the original program~\eqref{form:bayesian-utility-virtual} without changing its objective value. If $q_i^*$ does not fall on an ironed interval of $\bar{\phi}_i$, then Lemma~\ref{lem:ironing-inequality} is already tight for $x_i^*$, and agent~$i$'s expected expenditure is the same under both $\phi_i$ and $\bar{\phi}_i$.

Otherwise, $q_i^*$ falls on some ironed interval $[a,b]$ of $\bar{\phi}_i$. Note now that on an ironed interval $[a,b]$, the ironed cumulative function $\bar{\Phi}_i$ coincides with the straight-line (chord) connecting $(a,\Phi_i(a))$ and $(b,\Phi_i(b))$. Since $\bar{\Phi}_i$ is linear on $[a,b]$, its slope is constant there; hence the ironed virtual cost $\bar{\phi}_i(q)$ is constant throughout $[a,b]$. 

Let $\theta_i :=\ \frac{1}{b-a}\int_a^b x_i^*(t)\,dt \in [0,1].$ Define $\tilde{x}_i$ by replacing the jump of $x_i^*$ inside $[a,b]$ by a constant level that preserves the total mass on that interval, as seen in \Cref{img:ironed-allocation} (notice that $\tilde{x}_i$ remains non-increasing):
\[
\tilde{x}_i(q)\ :=\
\begin{cases}
x_i^*(q), & q<a,\\[2pt]
\theta_i, & a\le q\le b,\\[6pt]
x_i^*(q), & q>b.
\end{cases}
\]

\begin{figure}[htbp]
\begin{center}
\begin{tikzpicture}
  \def\a{0.30}      
  \def\b{0.75}      
  \def\qstar{0.58}  
  \pgfmathsetmacro{\c}{(\qstar-\a)/(\b-\a)} 
  \def\eps{0.010}   

  \begin{axis}[
    width=12cm, height=6.1cm,
    xmin=0, xmax=1, ymin=-0.02, ymax=1.02,
    axis lines=left,
    xlabel={Quantile $q$}, ylabel={Allocation probability},
    xtick={0,\a,\qstar,\b,1},
    xticklabels={$0$,$a$,$q_i^*$,$b$,$1$},
    ytick={0,\c,1},
    yticklabels={$0$,$\theta_i$,$1$},
    tick style={black},
    clip=false,
    legend entries={A,B},
    legend style={
      at={(0.98,0.98)}, anchor=north east,
      draw=none, fill=white, fill opacity=0.75,
      text opacity=1, nodes={inner sep=2pt}
    },
    legend cell align=left
  ]

    \addplot[ultra thick, red]  coordinates {(0,0) (0,0)};  \addlegendentry{$x_i^*(q)$}
    \addplot[ultra thick, blue] coordinates {(0,0) (0,0)};  \addlegendentry{$\tilde x_i(q)$}

    \addplot[densely dotted, gray] coordinates {(\a,0) (\a,1)};
    \addplot[densely dotted, gray] coordinates {(\qstar,0) (\qstar,1)};
    \addplot[densely dotted, gray] coordinates {(\b,0) (\b,1)};

    \addplot[ultra thick, red, domain=0:\qstar] {1};
    \addplot[ultra thick, red, domain=\qstar:1] {0};
    \addplot[ultra thick, red] coordinates {(\qstar,1) (\qstar,0)};
    \addplot[red, mark=*, only marks] coordinates {(\qstar,1) (\qstar,0)};

    \addplot[ultra thick, blue, domain=0:\a] {1 + \eps};
    \addplot[ultra thick, blue, domain=\a:\b] {\c + \eps};
    \addplot[ultra thick, blue] coordinates {(\a,1+\eps) (\a,\c+\eps)};
    \addplot[ultra thick, blue] coordinates {(\b,\c+\eps) (\b,0+\eps)};
    \addplot[ultra thick, blue, domain=\b:1] {\eps};
    \addplot[blue, mark=*, only marks] coordinates {(\a,1+\eps) (\a,\c+\eps) (\b,\c+\eps) (\b,0+\eps)};

    \addplot[densely dotted, gray] coordinates {(0,\c) (\a,\c)};

    \path[name path=topA] (axis cs:\a,1) -- (axis cs:\qstar,1);
    \addplot[name path=constA, draw=none, domain=\a:\qstar] {\c + \eps};
    \addplot[fill=red!18, draw=none] fill between[of=topA and constA];

    \addplot[name path=constB, draw=none, domain=\qstar:\b] {\c + \eps};
    \path[name path=zeroB] (axis cs:\qstar,0) -- (axis cs:\b,0);
    \addplot[fill=blue!18, draw=none] fill between[of=constB and zeroB];

    \node[fill=white, inner sep=1.5pt] at (axis cs:{(\a+\qstar)/2},{(1+\c+\eps)/2}) {$\mathcal{I}$};
    \node[fill=white, inner sep=1.5pt] at (axis cs:{(\qstar+\b)/2},{(\c+\eps)/2}) {$\mathcal{I}$};
    
  \end{axis}
\end{tikzpicture}
\end{center}
\caption{Allocation functions $\tilde{x}_i(q)$ and $x^*_i(q)$.}
\label{img:ironed-allocation}
\end{figure}

We rewrite agent $i$'s contribution to the (ironed) objective in quantile space:
\begin{equation}
\label{eq:ironed-obj-decomp}
\int_0^1 \bigl(v_i-\bar{\phi}_i(q)\bigr)\,x_i(q)\,dq
=
v_i\int_0^1 x_i(q)\,dq\;-\;\int_0^1 \bar{\phi}_i(q)\,x_i(q)\,dq .
\end{equation}

We claim that $x_i^*$ evaluated under $\bar{\phi}_i$ yields the same (ironed) objective value and the same expected expenditure as $\tilde{x}_i$ evaluated under $\phi_i$. In particular, we will show the equalities \Cref{eq:original-objective-tilde-equals-ironed-objective-xstar,eq:expenditure-phi-tilde-equals-barphi-xstar}.

First, by construction $\tilde{x}_i(q)=x_i^*(q)$ for $q\notin[a,b]$. On $[a,b]$, $\tilde{x}_i$ is constant with value $\tilde{x}_i(q)=\theta_i$, and thus
\begin{equation}
\label{eq:mass-preserved-on-ab}
\int_a^b \tilde{x}_i(q)\,dq
= (b-a)\cdot \theta_i = (b-a) \cdot \frac{1}{b-a}\int_a^b x_i^*(t)\,dt 
= \int_a^b x_i^*(q)\,dq .
\end{equation}

Next, since $\bar{\phi}_i$ is constant on the ironed interval $[a,b]$, say $\bar{\phi}_i(q)\equiv \kappa$ for $q\in[a,b]$, \Cref{eq:mass-preserved-on-ab} also implies
\begin{equation}
\label{eq:barphi-expenditure-preserved-on-ab}
\int_a^b \bar{\phi}_i(q)\,\tilde{x}_i(q)\,dq
= \kappa \int_a^b \tilde{x}_i(q)\,dq
= \kappa \int_a^b x_i^*(q)\,dq
= \int_a^b \bar{\phi}_i(q)\,x_i^*(q)\,dq.
\end{equation}

Since $\tilde{x}_i=x_i^*$ outside $[a,b]$, \Cref{eq:mass-preserved-on-ab,eq:barphi-expenditure-preserved-on-ab}, we get
\begin{equation}\label{eq:mass-and-barphi-preserved-global}
\int_0^1 \tilde{x}_i(q)\,dq=\int_0^1 x_i^*(q)\,dq,
\qquad
\int_0^1 \bar{\phi}_i(q)\,\tilde{x}_i(q)\,dq
=
\int_0^1 \bar{\phi}_i(q)\,x_i^*(q)\,dq,
\end{equation}

Plugging \Cref{eq:mass-and-barphi-preserved-global} into \Cref{eq:ironed-obj-decomp} (with $x_i=\tilde{x}_i$ and $x_i=x_i^*$, respectively) gives
\begin{equation}
\label{eq:ironed-objective-equality}
\int_0^1 \bigl(v_i-\bar{\phi}_i(q)\bigr)\,\tilde{x}_i(q)\,dq
=
\int_0^1 \bigl(v_i-\bar{\phi}_i(q)\bigr)\,x_i^*(q)\,dq.
\end{equation}

Next, we show that $\tilde{x}_i$ achieves the same expenditure when evaluated under $\phi_i$ or under $\bar{\phi}_i$; by tightness of \Cref{lem:ironing-inequality} for $\tilde{x}_i$ and the second equality in \Cref{eq:mass-and-barphi-preserved-global},
\begin{equation}
\label{eq:expenditure-phi-tilde-equals-barphi-xstar}
\int_0^1 \phi_i(q)\,\tilde{x}_i(q)\,dq
=\int_0^1 \bar{\phi}_i(q)\,\tilde{x}_i(q)\,dq
=\int_0^1 \bar{\phi}_i(q)\,x_i^*(q)\,dq.
\end{equation}

Combining \Cref{eq:mass-and-barphi-preserved-global,eq:expenditure-phi-tilde-equals-barphi-xstar} and expanding both sides as in \Cref{eq:ironed-obj-decomp} yields
\begin{equation}
\label{eq:original-objective-tilde-equals-ironed-objective-xstar}
\int_0^1 \bigl(v_i-\phi_i(q)\bigr)\,\tilde{x}_i(q)\,dq
=
\int_0^1 \bigl(v_i-\bar{\phi}_i(q)\bigr)\,x_i^*(q)\,dq.
\end{equation}

It remains to argue that the interim rule $\tilde{x}_i(\cdot)$ can be implemented by (at most) two posted prices. If $q_i^*$ does not lie in an ironed interval, then $\tilde{x}_i$ already corresponds to posting a single price $\expay_i$. Otherwise, if $q_i^*$ in an ironed interval $[a,b]$, define the prices at the endpoints, $\expay_{i,1}:=F_i^{-1}(a)$ and $\expay_{i,2}:=F_i^{-1}(b)$ (so $\expay_{i,1}\le \expay_{i,2}$). Notice that posting the larger price $\expay_{i,2}$ with probability $\theta_i$ and the smaller price $\expay_{i,1}$ with probability $1-\theta_i$ implements $\tilde{x}_i$, wrapping up our claim.
\end{proof}

\paragraph{Ex-post Mechanism for Irregular Distributions.} 
We now define the ex-post budget feasible mechanism $\constmech^{\mathrm{rand}}$. Note that for deterministic prices, we set $\expostpay_{i,1}=\expay_i$, $\theta_{i,1}=1$, and add a dummy option $\expostpay_{i,2}=0$, $\theta_{i,2}=0$, while for randomized prices we set $\expostpay_{i,1}=\expay_{i,1}$, $\theta_{i,1}=\theta_i$, $\expostpay_{i,2}=\expay_{i,2}$, $\theta_{i,2}=1-\theta_i$. Finally let $\expostquant_{i,j}=F_i(\expostpay_{i,j})$.

\begin{algorithm}[htbp!]
\caption{Randomized Utility-Maximizing Ex-post Budget-Feasible Auction $\constmech^{\mathrm{rand}}(\alpha,\beta;{\mathbf{\expostpay}},\boldsymbol{\theta})$}
\label{mech:ex-post-rand}
\begin{algorithmic}[1]
\STATE Define $H = \{(i,j) : \expostpay_{i,j} \geq \frac{B}{\alpha}\}$ and $L= \{(i,j): \expostpay_{i,j} < \frac{B}{\alpha}\}$. Let $\indprice^H(\boldsymbol{\expostpay},\boldsymbol{\theta}) =\sum_{(i,j)\in H}{ \theta_{i,j}\left(v_i - \expostpay_{i,j} \right)\expostquant_{i,j}}$, $\indprice^L(\boldsymbol{\expostpay},\boldsymbol{\theta}) = \sum_{(i,j)\in L}{\theta_{i,j} \left(v_i - \expostpay_{i,j} \right)\expostquant_{i,j}}$, and $\indprice(\boldsymbol{\expostpay},\boldsymbol{\theta}) = \indprice^H(\boldsymbol{\expostpay},\boldsymbol{\theta})+\indprice^L(\boldsymbol{\expostpay},\boldsymbol{\theta})$.
\STATE Independently draw a realized price $\widetilde{\expostpay}_i$ for each agent $i$ according to $\boldsymbol{p_i}, \boldsymbol{\theta_i}$. Correspondingly, partition the agents into $\widetilde H=\{i:\widetilde{\expostpay}_i\geq \frac{B}{\alpha}\}$ and $\widetilde L=\{i:\widetilde{\expostpay}_i< \frac{B}{\alpha}\}$.
\IF{$\indprice^H(\boldsymbol{\expostpay},\boldsymbol{\theta}) \geq \left(1 -\frac{1}{\beta}\right) \cdot \indprice(\boldsymbol{\expostpay},\boldsymbol{\theta})$}
\STATE Order agents in $\widetilde H$ by decreasing $v_i-\widetilde{\expostpay}_i$. Iterate in this order and offer agent $i$ price $\widetilde{\expostpay}_i$ if the remaining budget is at least $\widetilde{\expostpay}_i$; otherwise, skip $i$ and continue.
\ELSE
\STATE Order agents in $\widetilde L$ by decreasing $\frac{v_i-\widetilde{\expostpay}_i}{\widetilde{\expostpay}_i}$. Iterate in this order and offer agent $i$ price $\widetilde{\expostpay}_i$ if the remaining budget is at least $\widetilde{\expostpay}_i$; otherwise, skip $i$ and continue.
\ENDIF
\end{algorithmic}
\end{algorithm}

We now provide the following lemma, that extends \Cref{lem:ex-post-greedy-approximation} to account for randomized pricing that the optimal ex-ante mechanism may require on instances with irregular distributions.

\begin{restatable}{lemma}{randexpostgreedy}
        \label{lem:ex-post-greedy-approximation-randomized}
        Consider a randomized posted-price solution where each agent $i$ posts price $\expostpay_{i,r}$ with probability $\theta_{i,r}$, with $\sum_r \theta_{i,r}=1$, and let $\expostquant_{i,r}=F_i(\expostpay_{i,r})$. Let $H=\{(i,r): \expostpay_{i,r}\geq \frac{B}{\alpha}\}$ be the set of high candidate prices. Assuming the randomized posted prices satisfy the budget constraint ex-ante and are individually ex-post feasible ($\expostpay_{i,r}\leq B$), the sequential posted pricing mechanism that first realizes one price for each agent and then orders the realized prices in $H$ by decreasing $v_i-\expostpay_{i,r}$ achieves a $\frac{1-e^{-\alpha}}{\alpha}$-approximation to the high-price randomized \ippu $\indprice^H(\boldsymbol{\expostpay},\boldsymbol{\theta}) := \sum_{(i,r)\in H}\theta_{i,r}\expostquant_{i,r}(v_i-\expostpay_{i,r})$. 
\end{restatable}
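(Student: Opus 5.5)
The plan is to mirror the proof of \Cref{lem:ex-post-greedy-approximation}, treating each candidate pair $(i,r)\in H$ as a pseudo-agent. Realizing the price $\expostpay_{i,r}$ and then accepting it are independent events, so this pseudo-agent is ``active'' with probability $\theta_{i,r}\expostquant_{i,r}$. Pseudo-agents belonging to the same agent are mutually exclusive. I sort all pairs in $H$ by decreasing $u_{i,r}=v_i-\expostpay_{i,r}$, set $\Delta_\ell=u_\ell-u_{\ell+1}\ge 0$ (with $u_{|H|+1}=0$), and write $\pi_{i,r}=\theta_{i,r}\expostquant_{i,r}$ and $Q^\ell=\sum_{(i,r)\le \ell}\pi_{i,r}$. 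The same telescoping as before gives $\indprice^H(\boldsymbol{\expostpay},\boldsymbol{\theta})=\sum_\ell \Delta_\ell Q^\ell$. Ex-ante feasibility gives $B\ge \sum_{(i,r)}\pi_{i,r}\expostpay_{i,r}\ge \frac{B}{\alpha}\sum_{(i,r)\in H}\pi_{i,r}$, hence $Q:=Q^{|H|}\le\alpha$.

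Next I lower bound the mechanism's utility by the utility of the first accepting agent. The mechanism scans realized high prices in decreasing order of $v_i-\widetilde{\expostpay}_i$. Because every price is at most $B$, the first agent who is offered a price is always affordable. So the mechanism certainly collects the utility $u_{i,r}$ of the first pair, in the merged order, that is both realized and accepted. That pair has the largest utility among all active pairs, so its utility is at least $\max$ over active pairs. By the same telescoping, the mechanism's expected utility is therefore at least $\sum_\ell \Delta_\ell\Pr[\text{some pair among the first }\ell\text{ is active}]$.

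The key step is bounding $\Pr[\text{no pair among the first } \ell \text{ is active}]$. The events for different agents are independent. Within an agent $i$, the events ``price $r$ realized and accepted'' are disjoint, so $\Pr[\text{agent } i\text{ has no active pair among the first } \ell]=1-\sum_{r:(i,r)\le\ell}\pi_{i,r}$. Taking the product over agents and using $1-x\le e^{-x}$ bounds the no-active-pair probability by $e^{-Q^\ell}$. Disjointness helps here, since it yields exactly $1-\sum_r \pi_{i,r}$ rather than a product $\prod_r(1-\pi_{i,r})$. This per-agent disjointness is the only genuinely new point, and I expect it to be the main thing to check carefully.

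Finally I conclude exactly as in \Cref{lem:ex-post-greedy-approximation}. The ratio of the two sums is at least $\min_\ell (1-e^{-Q^\ell})/Q^\ell$. Since $(1-e^{-x})/x$ is decreasing and $Q^\ell\le Q\le \alpha$, this minimum is at least $\frac{1-e^{-\alpha}}{\alpha}$, which gives the claimed approximation to $\indprice^H(\boldsymbol{\expostpay},\boldsymbol{\theta})$.
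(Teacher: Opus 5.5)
Your proposal is correct and follows the paper's proof essentially step for step. Both treat each high candidate $(i,r)$ as a pseudo-agent with mass $\theta_{i,r}q_{i,r}$, use the telescoping identity $U^H_{\text{price}}=\sum_\ell \Delta_\ell Q^\ell$, derive $Q\le\alpha$ from ex-ante feasibility, and rest on the same key observation: high candidates of one agent are disjoint while different agents are independent, which gives $\prod_i\bigl(1-Q_i^\ell\bigr)\le e^{-Q^\ell}$ before concluding via the monotonicity of $(1-e^{-x})/x$.
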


\begin{proof}
        We follow the proof of \Cref{lem:ex-post-greedy-approximation}. 
        For every high candidate price $(i,r)\in H$, define $u_{i,r}=v_i-\expostpay_{i,r}$ and $a_{i,r}=\theta_{i,r}\expostquant_{i,r}$. 
        Order high candidate prices by decreasing utility, relabel them as $h=1,\ldots,m$, define $\Delta_\ell:=u_\ell-u_{\ell+1}\geq 0$ with $u_{m+1}=0$, and let $Q=\sum_{h=1}^m a_h$ and $Q^\ell=\sum_{h=1}^\ell a_h$. 
        As in \Cref{lem:ex-post-greedy-approximation},
        \[
            \indprice^H(\boldsymbol{\expostpay},\boldsymbol{\theta})
            =
            \sum_{\ell=1}^m \Delta_\ell Q^\ell.
        \]
        Moreover, ex-ante budget feasibility and the definition of $H$ imply
        \[
            B
            \geq
            \sum_{(i,r)\in H}\theta_{i,r}\expostquant_{i,r}\expostpay_{i,r}
            \geq
            \frac{B}{\alpha}Q,
        \]
        and therefore $Q\leq \alpha$.

        It remains to justify the first-acceptance lower bound. 
        The only technical difference from \Cref{lem:ex-post-greedy-approximation} is that high candidate prices corresponding to the same agent are not independent. 
        For every prefix $\ell$, define the probability that agent $i$ realizes and accepts one of its high candidate prices in the prefix
        \[
            Q_i^\ell
            :=
            \sum_{\substack{h\leq \ell\\ h\text{ corresponds to agent }i}} a_h .
        \]
        
        Since different agents remain independent,
        \[
            \Pr[\text{no high candidate in the prefix is realized and accepted}]
            =
            \prod_i(1-Q_i^\ell)
            \leq
            e^{-Q^\ell}.
        \]
        Thus, exactly as in \Cref{lem:ex-post-greedy-approximation}, with
        $U(\boldsymbol{\expostpay},\boldsymbol{\theta})$ now denoting expectation over both the agents' costs and the price randomization,
        \[
            U(\boldsymbol{\expostpay},\boldsymbol{\theta})
            \geq
            \sum_{\ell=1}^m \Delta_\ell(1-e^{-Q^\ell}).
        \]
        The remaining comparison is identical to \Cref{lem:ex-post-greedy-approximation}; using $Q^\ell\leq Q\leq \alpha$ and that $\frac{1-e^{-x}}{x}$ is decreasing in $x$, we get
        \[
            U(\boldsymbol{\expostpay},\boldsymbol{\theta})
            \geq
            \frac{1-e^{-\alpha}}{\alpha}
            \cdot
            \indprice^H(\boldsymbol{\expostpay},\boldsymbol{\theta}).
        \]
\end{proof}

\subsection{Adapting proofs for the Generalized Utility Objective}
\label{app:generalized-utility-changes}

This subsection records the modifications required to extend the arguments for the generalized utility in the Bayesian setting. The purpose is not to redo any derivations, but to make explicit which proof steps must be adjusted and why the adjustments are valid.

\paragraph{Lagrangian reduction and the posted-price structure}
We need to verify that the Lagrangian relaxation continues to yield a separable optimization problem and that each single-agent problem continues to admit an optimal threshold rule. The new Lagrangian objective is
\[
\max_{x}\; \lambda B \;+\; \sum_i \mathbb{E}_{\mathbf{c}}\!\left[\bigl(\generala_i v_i - (\generalb_i+\lambda)\phi_i(c_i)\bigr)\,x_i(\mathbf{c})\right].
\]
For any fixed $\lambda$, the objective separates across agents and cost realizations and is optimized pointwise. In particular, for each agent $i$, the maximizer selects 
\[
\expay_i \;=\; \phi_i^{-1}\!\Bigl(\frac{\generala_i v_i}{\generalb_i+\lambda}\Bigr)
\]

\paragraph{Ironing: feasibility comparison and preservation on ironed intervals}
We need to verify that ironing carries over under the generalized objective. Since \Cref{lem:ironing-inequality} still holds and coefficients $\generala_i,\generalb_i$ are non-negative, the feasibility and optimality properties between the ironed and non-ironed programs remain. As such, we need to argue that the transformation from solution $x^*_i$ to $\tilde{x}_i$ retains its objective guarantee and its expenditure. The crucial note here is that under the generalized objective, coefficients $\generala_i,\generalb_i$ appear as constant multipliers of expectations in the objective, and by linearity of expectation, all previously proven equalities still hold.

\paragraph{Ex-post mechanism: definition changes under generalized utility}

We now define the modified ex-post mechanism $\modifiedmech$:

\begin{algorithm}[htbp]
\caption{General-Utility-Maximizing Ex-post Budget-Feasible Auction $\modifiedmech(\alpha,\beta;\hat{\mathbf{\expostpay}})$}
\label{mech:ex-post-general}
\begin{algorithmic}[1]
\STATE Partition the agents into $H = \{i : \expostpay_i \geq \frac{B}{\alpha}\}$ and $L= \{i: \expostpay_i < \frac{B}{\alpha}\}$, denote their respective \ippgu $\indprice^H(\boldsymbol{\expostpay}) =\sum_{i\in H}{ \left(\generala_iv_i - \generalb_i \expostpay_i \right)\expostquant_i}$ and $\indprice^L(\boldsymbol{\expostpay}) = \sum_{i\in L}{ \left(\generala_iv_i - \generalb_i \expostpay_i \right)\expostquant_i}$, and the total \ippgu as $\indprice(\boldsymbol{\expostpay}) = \sum_{i \in L \cup H}{\left(\generala_iv_i - \generalb_i \expostpay_i \right)\expostquant_i}$. \\
\IF{$\indprice^H(\boldsymbol{\expostpay}) \geq \left(1 -\frac{1}{\beta}\right) \cdot \indprice(\boldsymbol{\expostpay})$}
    \STATE Order agents in $H$ by decreasing $\generala_iv_i - \generalb_i \expostpay_i$. Iterate in this order and offer agent $i$ price $\expostpay_i$ if the remaining budget is at least $\expostpay_i$; otherwise, skip $i$ and continue.
\ELSE
    \STATE Order agents in $L$ by decreasing $\frac{\generala_iv_i - \generalb_i \expostpay_i}{\expostpay_i}$. Iterate in this order and offer agent $i$ price $\expostpay_i$ if the remaining budget is at least $\expostpay_i$; otherwise, skip $i$ and continue.
\ENDIF
\end{algorithmic}
\end{algorithm}

The approximation ratio of this mechanism is the same as the one proved in \Cref{thm:ex-post-approx}. In the case analysis of the theorem, Case 1 is directly followed by setting $u_i = \generala_iv_i - \generalb_i \expostpay_i$. For Case 2, we need to decompose for the generalized utility objective similarly to \Cref{eq:fractional-knapsack-utility-bang-per-buck,eq:ex-ante-utility-bang-per-buck}. We now give the analogous decompositions for the generalized objective. For notational convenience, define the generalized utility per payment ratios (assumed to be decreasing)
\[
\rho_i = \frac{\generala_i v_i-\generalb_i\expostpay_i}{\expostpay_i},
\qquad\text{and set }\rho_{n+1}:=0.
\]

The optimal independent fractional knapsack benchmark $\fractutil(\boldsymbol{\expostpay})$ admits the same telescoping decomposition:
\begin{equation}
\label{eq:fractional-knapsack-general-bang-per-buck}
\begin{aligned}
\fractutil(\boldsymbol{\expostpay})
&=
\E_{S\sim ind(\boldsymbol{\expostquant})}\!\bigl[u_B(S)\bigr] \\
&=
\E_{S\sim ind(\boldsymbol{\expostquant})}\!\left[\sum_{i\in N}
\rho_i\cdot
\Bigl(\min\!\Bigl\{B,\sum_{j\in S\cap\{1,\dots,i\}}\expostpay_j\Bigr\}
-\min\!\Bigl\{B,\sum_{j\in S\cap\{1,\dots,i-1\}}\expostpay_j\Bigr\}\Bigr)\right] \\
&=
\sum_{i\in N}\Bigl(\rho_i-\rho_{i+1}\Bigr)\cdot
\E_{S\sim ind(\boldsymbol{\expostquant})}\!\left[\min\!\Bigl(B,\sum_{j\in S\cap\{1,\dots,i\}}\expostpay_j\Bigr)\right].
\end{aligned}
\end{equation}
The last equality follows by rearranging into a telescoping sum, using linearity of expectation, and the convention $\rho_{n+1}=0$.

Likewise, the ex-ante independent posted-pricing benchmark $\indprice(\boldsymbol{\expostpay})$ rewrites as
\begin{equation}
\label{eq:ex-ante-general-bang-per-buck}
\begin{aligned}
\indprice(\boldsymbol{\expostpay})
&=
\mathbb{E}_{S\sim ind(\boldsymbol{\expostquant})}\!\left[\sum_{i\in S} \generala_i v_i-\generalb_i\expostpay_i\right] \\
&=
\mathbb{E}_{S\sim ind(\boldsymbol{\expostquant})}\!\left[\sum_{i\in S}\expostpay_i\cdot \rho_i\right] \\
&=
\mathbb{E}_{S\sim ind(\boldsymbol{\expostquant})}\!\left[\sum_{i\in N}\rho_i\cdot \left(\sum_{j\in S\cap\{i\}}\expostpay_j\right)\right] \\
&=
\sum_{i\in N}\Bigl(\rho_i-\rho_{i+1}\Bigr)\cdot
\mathbb{E}_{S\sim ind(\boldsymbol{\expostquant})}\!\left[\sum_{j\in S\cap\{1,\dots,i\}}\expostpay_j\right].
\end{aligned}
\end{equation}

Comparing \Cref{eq:ex-ante-general-bang-per-buck,eq:fractional-knapsack-general-bang-per-buck}, we see that both expressions have the same coefficient sequence $(\rho_i-\rho_{i+1})$ multiplying the same two cumulative expenditure terms; hence \Cref{lem:L-lower-bound} applies unchanged to relate the two expectations.

\section{Inapproximability \& Lower Bounds}

\subsection{Bayesian Lower Bounds}
\label{app:bayesian-lb}

We next provide two simple symmetric instances that clarify how tight our analysis is for the framework that converts ex-ante posted prices into a sequential posted-price mechanism. In these instances, the posted prices will be identical across agents, meaning that the behavior of the sequential posted-price mechanism depends only on the set of accepting agents and not on the ordering rule; thus, any ordering yields the same expected utility against the ex-ante posted-pricing benchmark. For the first instance (high prices), we prove that our analysis is tight at the \(1-\frac{1}{e}\) factor, matching the guarantees of \Cref{lem:ex-post-greedy-approximation}. The second instance (low prices) shows that the \(1-O(1/\sqrt{k})\) approximation guarantee in \Cref{lem:ex-post-k-approximation} is tight up to constants.

\begin{restatable}{lemma}{highpricelowerbound}
\label{lem:high-price-lower-bound}
There exist instances with ex-ante optimal prices $\boldsymbol{\expay}$ that are individually ex-post budget feasible ($\expay_i \leq B$), for which the best approximation to the \ippu $\indprice(\boldsymbol{\expay})$ achievable by any sequential posted-pricing mechanism using these prices converges to \(1-\frac{1}{e}\).
\end{restatable}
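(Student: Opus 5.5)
We will build a symmetric family of instances indexed by $n$: $n$ i.i.d.\ sellers, each with value $v_i=v$ and a regular cost distribution $F$, with budget $B$. The parameters are chosen so that the ex-ante optimal price on the truncated distribution equals $B$ for every agent, and each agent accepts with probability $\hat q = 1/n$. Then ex-ante budget feasibility reads $n\cdot \hat q\cdot B = B$, which binds exactly. Any ex-post feasible sequential posted-price mechanism using these prices can pay at most one agent, because a single acceptance exhausts the budget.

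\textbf{Step 1: the construction.} One concrete choice is $F$ uniform on $[0,nB]$, truncated at $B$, so that $F(B)=1/n$. The virtual cost is $\phi(c)=2c$, which is regular. We then tune $v$ (e.g.\ set $v$ large relative to $B$) so that the Lagrangian threshold $\phi^{-1}(v/(\lambda+1))$ from the proof of \Cref{thm:exante} lands at $B$ for the multiplier that makes the budget bind.

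The step I expect to be the main obstacle is checking that the optimum is really at price $B$ and not at a corner caused by the truncation. The truncated distribution has an atom at $B$ of mass $1-1/n$, and this atom changes the virtual cost there. If verifying the ex-ante optimal price becomes messy, we fall back on the following reasoning. By symmetry and concavity of the per-agent utility in the acceptance quantile, the optimal ex-ante solution uses equal prices $p$ with $n F(p)p\le B$. Because of the truncation, $p\le B$. Choosing $v$ large makes the budget bind, and choosing $F$ appropriately makes $p=B$ the binding point, for example by using a distribution where $pF(p)=B/n$ forces $p=B$.

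\textbf{Step 2: the benchmark.} The ex-ante posted-pricing utility is
\[
\indprice(\boldsymbol{\expay}) = n\cdot\tfrac1n\,(v-B) = v-B .
\]

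\textbf{Step 3: the mechanism.} Every agent has the same price, and only one agent can ever be paid. So every ordering yields utility $(v-B)$ exactly when at least one offered agent accepts. All agents are offered until the first acceptance, so this happens with probability $1-(1-1/n)^n$.

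\textbf{Step 4: conclusion.} The ratio between the mechanism's utility and the benchmark is
\[
1-\left(1-\tfrac1n\right)^n \;\to\; 1-\tfrac1e ,
\]
and this holds for every sequential posted-pricing mechanism using these prices. This matches the factor $\frac{1-e^{-\alpha}}{\alpha}$ at $\alpha=1$ in \Cref{lem:ex-post-greedy-approximation}.
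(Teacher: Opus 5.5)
Your proposal is correct and follows the paper's proof: the paper's instance is your construction with $B=1/n$, $c_i\sim U[0,1]=U[0,nB]$ and $v_i=2B$, so the unconstrained optimal price $v/2$ already equals $B$, the budget binds with $\lambda=0$, and the single-acceptance argument giving $1-(1-1/n)^n\to 1-\frac1e$ is identical. The truncation issue you flag as the main obstacle does not arise: the lemma only requires the ex-ante optimal prices of the original (untruncated) instance to be at most $B$, which holds for any $v\ge 2B$ in your construction.
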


\begin{proof}
Fix \(n\) agents and consider the symmetric instance
\[
B=\frac1n,\qquad
v_i=2B,\qquad
c_i\sim U[0,1]
\quad\text{for all } i\in[n].
\]
For this instance, the ex-ante optimal prices $\expay_i$ and quantiles $\expostquant_i$ are 
$\expay_i = \expostquant_i = B$ while the \ippu benchmark is
\[
\indprice(\boldsymbol{\expay})
=
\sum_{i=1}^n (v_i-\expay_i)\expostquant_i
=
n\cdot B\cdot \frac1n
=
B.
\]

Now consider any sequential posted-pricing mechanism that uses these prices. Notice that since everything in this instance is symmetric (values, distributions, prices), no ordering rule can differentiate between agents. Additionally, since each posted price equals the full budget, once an agent accepts, the mechanism can serve no further agents. Therefore for any sequential posted-pricing mechanism the realized utility $U^{\mathrm{seq}} (\boldsymbol{\expay})$ is \(B\) if at least one agent can accept and \(0\) otherwise, so
\[
U^{\mathrm{seq}}(\boldsymbol{\expay})
=
B\cdot \Pr[ \exists \text{ at least one agent that can accept}] = \indprice(\boldsymbol{\expay}) \cdot \left( 1-\left(1-\frac1n\right)^n \right).
\]
Since
\[
1-\left(1-\frac{1}{n}\right)^n \to 1-\frac{1}{e}
\qquad\text{as } n\to\infty,
\]
the approximation ratio of any sequential posted-pricing mechanism using these prices converges to \(1-\frac{1}{e}\), proving the claim.
\end{proof}

\begin{restatable}{lemma}{lowpricelowerbound}
\label{lem:low-price-lower-bound}
For any integer $k \geq 1$, there exist instances with ex-ante optimal prices $\boldsymbol{\expay}$ that are individually ex-post budget feasible ($\expay_i \leq B$), for which the best approximation to the \ippu $\indprice(\boldsymbol{\expay})$ achievable by any sequential posted-pricing mechanism using these prices is at most \(1-\frac{1}{4\sqrt{k}}\).
\end{restatable}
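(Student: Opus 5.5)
The plan is a symmetric instance in which the ex-ante optimal price is exactly $B/k$ for every agent. Then any sequential posted-pricing mechanism using these prices can serve at most $k$ accepting agents, while the benchmark counts all of them. The loss will be $\Pr[Y=k]$ for a Poisson-like count $Y$ with mean $k$, which is of order $1/\sqrt{k}$.

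\textbf{Construction.} Fix an integer $k\ge 1$ and a large $n$. Take $n$ agents with $c_i\sim U[0,1]$, $v_i=1$, and budget $B=k^2/n$. For the uniform distribution $\phi(c)=2c$, which is regular. By \Cref{thm:exante} the ex-ante optimal solution is a posted price $\expay_i=\phi^{-1}(v_i/(1+\lambda))$, the same for every agent by symmetry.

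\textbf{Verifying the optimal prices.} Without the budget, the optimal price would be $1/2$, with expected spend $n/4> B$ once $n> 2k$. So the budget binds, and the common price $p$ solves $n\cdot p\cdot F(p)=np^2=B$. This gives $p=k/n=B/k$ and $q=F(p)=k/n$. This price corresponds to $\lambda=n/(2k)-1\ge 0$, and it satisfies $p\le B$ and $v_i-p>0$. The benchmark is then $\indprice(\boldsymbol{\expay})=(1-p)\,nq=(1-p)k$.

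\textbf{Bounding any sequential mechanism.} Prices are identical, so the ordering is irrelevant, and each price equals $B/k$ with $k$ an integer. Hence every agent is offered its price until exactly $k$ agents have accepted. Let $X\sim\mathrm{Bin}(n,k/n)$ be the number of agents who would accept. The realized utility is $(1-p)\min(k,X)$, so the approximation ratio is $\E[\min(k,X)]/k$. As $n\to\infty$, $X$ converges in distribution to $Y\sim\mathrm{Pois}(k)$, and $\min(k,\cdot)$ is bounded, so the ratio converges to $\E[\min(k,Y)]/k$.

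\textbf{Evaluating the limit.} By the identity used at the end of the proof of \Cref{lem:L-lower-bound} (for integer $k$), $\E[\min(k,Y)]/k=1-e^{-k}k^k/k!$. Stirling's upper bound $k!\le e\sqrt{k}\,(k/e)^k$ gives $e^{-k}k^k/k!\ge 1/(e\sqrt{k})$. Since $1/e>1/4$, the limiting ratio is strictly below $1-\frac{1}{4\sqrt{k}}$. Therefore for all sufficiently large $n$ the ratio on the finite instance is at most $1-\frac{1}{4\sqrt{k}}$.

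\textbf{Main obstacle.} The delicate step is the finite-$n$ approximation rather than the limit. I would handle it through the slack between $1/e$ and $1/4$ together with convergence of $\E[\min(k,X)]$. As a fallback, one can bound $\Pr[X=k]\ge \Pr[Y=k]-d_{TV}(X,Y)$, and Le Cam's inequality gives $d_{TV}(X,Y)\le k^2/n$, which is small for large $n$.
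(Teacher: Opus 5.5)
Your proof is correct, but it uses a different instance from the paper.

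\textbf{The paper's route.} The paper takes exactly $n=2k$ agents with $v_i=2B/k$ and $c_i\sim U[0,2B/k]$. There the unconstrained optimal price $B/k$ already exhausts the budget, and each agent accepts with probability $1/2$. Then $X\sim\mathrm{Bin}(2k,1/2)$ is symmetric about $k$, and the identity $\E|X-k|=k\Pr[X=k]$ gives the exact ratio $1-\binom{2k}{k}/2^{2k+1}$. The central-binomial bound $\binom{2k}{k}\ge 4^k/(2\sqrt{k})$ finishes the argument. This is a closed-form, finite computation with no limits. For $k\ge 2$ the cost support also lies inside $[0,B]$, so truncation is a non-issue.

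\textbf{What your route buys, and what it costs.} Your many-agents, small-probability instance passes to the Poisson limit, and the payoff is sharpness. Because your prices divide the budget exactly, no $(1-1/k)$ loss is incurred, and the limiting ratio is exactly $1-e^{-k}k^k/k!$. So you show that the first factor in \Cref{lem:ex-post-k-approximation} (the Poisson bound inside \Cref{lem:L-lower-bound}) is asymptotically attained, not merely tight up to constants. Your loss is about $1/\sqrt{2\pi k}$, versus the paper's $\binom{2k}{k}/2^{2k+1}\approx 1/(2\sqrt{\pi k})$.

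The cost is the limiting argument, plus the fact that your costs extend above $B=k^2/n$, so ``ex-ante optimal'' refers to the untruncated instance. This is harmless here. For $k\ge 2$, the truncated problem has the same optimal price $k/n$: its ironed interval starts at quantile $1-\sqrt{1-B}>B/2\ge k/n$. For $k=1$, the paper's own instance, with support $[0,2B]$, has the same feature.

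\textbf{A small fix to your fallback.} As written, the fallback assumes the finite-$n$ ratio equals $1-\Pr[X=k]$. That identity holds only for the Poisson limit. For the binomial, $\E[\min(k,X)]=k\bigl(\Pr[X'\le k-1]+\Pr[X\ge k+1]\bigr)$ with $X'\sim\mathrm{Bin}(n-1,k/n)$. The clean version is $|\E[\min(k,X)]-\E[\min(k,Y)]|\le k\,d_{TV}(X,Y)\le k^3/n$. So the two ratios differ by at most $k^2/n$, which is below the slack $(1/e-1/4)/\sqrt{k}$ once $n$ is large. Your main bounded-convergence argument already suffices on its own.
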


\begin{proof}
Fix $n = 2k$ agents and consider the symmetric instance
\[
v_i=\frac{2B}{k},\qquad
c_i\sim U\!\left[0,\frac{2B}{k}\right]
\quad\text{for all } i\in[n].
\]
For this instance, we can compute the ex-ante optimal prices $\expay_i = \frac{B}{k}$ and the ex-ante probability $\expostquant_i =\frac{1}{2}$, while the \ippu benchmark is 
\[
\indprice(\boldsymbol{\expay})
=
\sum_{i=1}^{2k}(v_i-\expay_i)\expostquant_i
=
2k\cdot \frac{B}{k}\cdot \frac12
=
B.
\]

Notice again that any sequential posted-pricing mechanism that uses these prices cannot differentiate between agents (due to symmetry). Under the corresponding sequential posted-pricing mechanism, each accepting agent contributes utility \(v_i-\expay_i = B/k\), and the mechanism can serve at most \(k\) agents. Let \(X\sim\mathrm{Bin}(2k,1/2)\) and $U^{\mathrm{seq}} (\boldsymbol{\expay})$ the expected utility of any sequential posted-pricing mechanism. Then
\[
U^{\mathrm{seq}} (\boldsymbol{\expay})
=
\frac{B}{k}\,\E[\min(X,k)] = \indprice(\boldsymbol{\expay})
\frac{\E[\min(X,k)]}{k}.
\]

We can rewrite \(\min(X,k)=k-(X-k)^+\), and simplify $\E[(X-k)^+]=\frac12\,\E[|X-k|]$ (because $X$ is symmetric around $k$) and $\E[|X-k|]=k\Pr[X=k] = k \cdot\frac{\binom{2k}{k}}{2^{2k}}$ (by the definition of $X$). Therefore,
\[
\frac{\E[\min(X,k)]}{k}
=
1-\frac{\E[(X-k)^+]}{k}
=
1-\frac12\Pr[X=k]
=
1-\frac{\binom{2k}{k}}{2^{2k+1}}.
\]

Finally, using a standard bound (we can prove this from Stirling's inequality)
\[
\binom{2k}{k}\ge \frac{2^{2k}}{2\sqrt{k}},
\]
we can prove our claim
\[
\frac{U^{\mathrm{seq}} (\boldsymbol{\expay})}{\indprice(\boldsymbol{\expay})}
=
1-\frac{\binom{2k}{k}}{2^{2k+1}}
\le
1-\frac{1}{4\sqrt{k}}.
\]
\end{proof}

\section{Computational Considerations}
\label{app:computation}

\paragraph{Runtime of Mechanisms:} The welfare-approximate \Cref{mech:prior-free} is extremely efficient, and can run in time $O(n\log{n})$, since sorting $\frac{v_i}{c_i}$ is the only non-linear step. Similarly, the utility-approximate \Cref{mech:ex-post} is efficient and runs in time $O(n\log{n})$, with sorting (either $\frac{v_i-\expostpay_i}{\expostpay_i}$, or $v_i-\expostpay_i$) being the only non-linear step.

\paragraph{Computing Ex-ante Prices:}  Following the discussion in Appendix~\ref{app:soft-budget}, under \emph{regularity}, the optimal prices are easy to compute: for any fixed \(\lambda\), each \(\expay_i\) is obtained independently as \(\phi_i^{-1}(v_i/(\lambda+1))\), so the Lagrangian program decomposes across agents and can be solved efficiently. Since the total expected payment is monotone in \(\lambda\), the budget-binding multiplier can then be found by a simple binary search over \(\lambda\), and hence the entire computation is efficient.

Cases with \emph{irregular} distributions require slightly more computation, but overall remain efficient. The ironing process needs to be performed only once, after which the ironed virtual function is used in the previously described approach (for regular distributions). The computational burden of ironing amounts to computing a concave hull. In the discrete finite-support setting, there exists an $O(\bar v \log{\bar v})$ construction \citep{E07} (where $\bar{v}$ is the maximum value of the distribution). In the continuous setting, we can choose an $\varepsilon$-discretization of the distribution (in $O(\bar v /\varepsilon)$ time) and then follow the same process as for discrete distributions.

\end{document}